\DeclareMathOperator{\Tr}{Tr}
\newcommand{\longer}[1]{#1 }
\renewcommand{\longer}[1]{ }
\newcommand{\eps}{\varepsilon}
\newcommand{\LAS}{\text{\sc{Las}}}
\newcommand{\PS}{\mathcal{P}}
\newcommand{\M}{\mathcal{M}}
\newcommand{\K}{\mathcal{K}}
\newcommand{\D}{{\delta}}
\newcommand{\PD}{\mathcal{PD}}
\newcommand{\ND}{\mathcal{ND}}
\newtheorem{theorem}{Theorem}[section]
\newtheorem{lemma}[theorem]{Lemma}
\newtheorem{corollary}[theorem]{Corollary}
\newtheorem{definition}{Definition}[section]
\newtheorem{example}{Example}[section]
\newtheorem{remark}{Remark}[section]
\title{The Lasserre Hierarchy in Almost Diagonal Form 
\thanks{Supported by the Swiss National Science Foundation project 200020-144491/1 ``Approximation Algorithms for Machine Scheduling Through Theory and Experiments''.}}
\author{ Monaldo Mastrolilli \\
IDSIA, monaldo@idsia.ch \\
 Lugano, Switzerland}
\begin{document}
\maketitle
\begin{abstract}
The Lasserre hierarchy is a systematic procedure for constructing a sequence of increasingly tight relaxations that capture the convex formulations used in the best available approximation algorithms for a wide variety of optimization problems. Despite the increasing interest, there are very few techniques for analyzing Lasserre integrality gaps. Satisfying the positive semi-definite requirement is one of the major hurdles to constructing Lasserre gap examples.

We present a novel characterization of the Lasserre hierarchy based on moment matrices that differ from diagonal ones by matrices of rank one (\emph{almost diagonal form}).
We provide a modular \emph{recipe} to obtain positive semi-definite feasibility conditions by iteratively diagonalizing rank one matrices.

Using this, we prove strong lower bounds on integrality gaps of Lasserre hierarchy for two basic capacitated covering problems. 
For the min-knapsack problem, we show that the integrality gap remains arbitrarily large even at level $n-1$ of Lasserre hierarchy. For the min-sum of tardy jobs scheduling problem, we show that the integrality gap is unbounded at level $\Omega(\sqrt{n})$ (even when the objective function is integrated as a constraint). These bounds are interesting on their own, since both problems admit FPTAS. 
\end{abstract}

%\pagebreak
%%%%%%%%%%%%%%%%%%%%%%%%%%%%%%%%%%%%%%%%%%%%%%%%%%%%%%%%
%\chapter{Lasserre Hierarchy: Properties}

\section{Introduction}

%\paragraph{The Lasserre Hierarchy.}
The use of mathematical programming relaxations, such as linear programming (LP) and semidefinite programming (SDP), has been one of the most powerful tools in approximation algorithms.
These algorithms are analyzed by comparing the value of the returned integer solution to that of the fractional solution.
The \emph{integrality gap} is the quotient of the true optimum by the relaxationÕs
optimal solution (which is always at least as good), and measures the quality of the approximation.

The integrality gap is sensitive to the original integer programming formulation, and an important question is when modifications to the integer program improve the algorithms of this framework.
This has lead to systematic procedures, also known as \emph{lift-and-project} methods, for constructing a sequence of increasingly tight mathematical programming relaxation. In particular,
Lov\'{a}sz-Schrijver+ (LS+)~\cite{Lov91} and the stronger Lasserre~\cite{Lasserre01} semidefinite programming hierarchy, and
Lov\'{a}sz-Schrijver (LS)~\cite{Lov91}, and the stronger Sherali-Adams (SA) hierarchy~\cite{SheraliA90} for linear programs were created to systematically improve semidefinite and linear programs at the cost of additional runtime (see \cite{Laurent03} for a comparison).  

Introduced in 1999 by Grigoriev and Vorobjov~\cite{GrigorievV01}, the ÒSum of SquaresÓ (SOS) proof system is a powerful algebraic proof system. Shor~\cite{schor87}, Lasserre~\cite{Lasserre01} and Parrilo~\cite{parrilo00} show that this proof system is automatizable using semidefinite programming (SDP), meaning that any $n$-variable degree-$d$ proof can be found in time $n^{O(d)}$. Furthermore, the SDP is dual to the Lasserre SDP hierarchy, meaning that the ``d/2-round Lasserre value'' of an optimization
problem is equal to the best bound provable using a degree-$d$ SOS proof (see also the monograph by Laurent~\cite{laurent09}). For brevity, we will interchange Lasserre hierarchy with SOS hierarchy since they are essentially the same in our context. For a brief history of the different formulations from \cite{GrigorievV01}, \cite{Lasserre01}, \cite{parrilo00} and the relations between them and results in real algebraic geometry we refer the reader to \cite{ODonnellZ13}.

These hierarchies provably imply some of the most celebrated approximation algorithms for NP-complete problems even after a few rounds. For example, the first round of LS+ (and hence also Lasserre) for the Independent Set problem implies the Lov\'{a}sz $\theta$-function~\cite{Lovasz79} and for the MaxCut problem gives the Goemans-Williamson relaxation~\cite{GoemansW95}.
The ARV relaxation of the SparsestCut \cite{AroraRV09} problem is no stronger than the relaxation given in the third round of LS+ (and hence also Lasserre), and most recently the subexponential time algorithm for
Unique Games~\cite{AroraBS10} is implied by a sublinear number of rounds of Lasserre~\cite{BarakRS11,GuruswamiS11}. Other improved approximation guarantees that arise from the first $O(1)$ levels of Lasserre (or weaker) hierarchy can be found in~\cite{BarakRS11,BateniCG09,Chlamtac07,ChlamtacS08,VegaK07,GuruswamiS11,MagenM09,RaghavendraT12}.
For a more detailed overview on the use of hierarchies in approximation algorithms, see the recent survey of Chlamt\'{a}\v{c} and Tulsiani~\cite{Chla12}, the reading group web-page in~\cite{bansalweb} and the references therein.

Integrality gap results for Lasserre are thus very strong unconditional negative results, as they apply to a ``model of computation" that includes the best known approximation algorithms for several problems (see~\cite{Chla12} for a discussion). If a large integrality gap persists after a large number of rounds then a wide class of efficient approximation algorithms are ruled out (this implicitly contains some of the most sophisticated approximation algorithms for many problems). 
%Such lower bounds are even more relevant and compelling in situations where we do not have NP-hardness results.

SOS hierarchy appears to be more powerful than those of LS+ and SA+ hierarchies: a recent work of Barak et al. \cite{BarakBHKSZ12} proved that $O(1)$ number of rounds of the SOS hierarchy can solve the unique games problem on instances which need a non-constant number of LS+, SA+ rounds. These results emphasize that a better understanding of the power and limitations of the SOS hierarchy is necessary.

Most of the known lower bounds for the SOS hierarchy originated in the works of Grigoriev~\cite{Grigoriev01b,Grigoriev01} (also independently rediscovered by Schoenebeck~\cite{Schoenebeck08}). These works show that random 3XOR or 3SAT instances cannot be solved by even $\Omega(n)$ rounds of SOS hierarchy. Subsequent lower bounds, such as those of~\cite{Tulsiani09}, \cite{BhaskaraCVGZ12} rely on \cite{Grigoriev01b}, \cite{Schoenebeck08} plus gadget reductions.

%Some of the very strong negative results of this type (see e.g. \cite{arora11,BenabbasM10,BhaskaraCVGZ12,Buresh-OppenheimGHMP03,CharikarMM09,GeorgiouMPT07,GeorgiouMT08,RaghavendraS09,Schoenebeck:EECS-2010-119,SchoenebeckTT07,Tourlakis06,Tulsiani09} and \cite{bansalweb} for more updated results) are viewed as lower bounds of approximability in certain restricted models of computation (see~\cite{Chla12} for a discussion).
%
%It follows that the study of the strengths and limitations of lift-and-project is an important goal in understanding a potentially powerful algorithmic paradigm; negative examples in this respect can be thought of as lower bounds in a certain restricted model of computation.
%

An interesting line of research is given by answering the following questions~\cite{KarlinMN11}: ``How strong are these restricted models of computation with respect to approximation? In other words, how much do lower bounds in these models tell us about the intrinsic hardness of the problems studied?''.
One set of weaknesses revolves around the fact that SOS has a hard time reasoning about $X_1 + ... + X_n$ using the fact that all $X_i$'s are integers. In~\cite{Grigoriev01}, it is assumed that $X_1, ..., X_n$ are +/-1 and that $n$ is odd. Then degree-$(n-1)$ SOS cannot disprove $X_1 + ... + X_n = 0$, even though of course $|X_1 + ... + X_n| \geq 1$. (A simplified proof can be found in~\cite{GrigorievHP02}). 

In \cite{Cheung07}, Cheung considered the case of $X_1, ..., X_n$ constrained to be 0/1 (i.e., $X_i^2 = X_i$) along with the constraint $X_1 + ... + X_n \geq \delta$. Cheung showed that there exists $\delta = \delta(n) > 0$ such that with this constraint, degree-$(n-1)$ Lasserre/SOS cannot prove $X_1 + ... + X_n \geq 1$, even though that is of course true. Cheung was motivated by the ``Knapsack'' polytope. He shows that degree-$(n-1)$ SOS cannot refute $X_1 + ... + X_n = 1 - 1/(n+1)$. (In terms of SOS, the result presented in this paper for min-knapsack implies that degree-$(n-1)$ SOS cannot refute a much weaker formula, i.e. $X_1 + ... + X_n = 1/k$, where $k$ can be arbitrarily small.)
Additional results and references can be found in the monograph by Laurent~\cite{laurent09}.

Karlin et al. \cite{KarlinMN11}, focused on one of the most basic \emph{packing problem} that is well-known to be ``easy'' from the viewpoint of approximability, namely the \emph{maximum knapsack problem}\footnote{Given a set of items, each with a weight and a value, determine the number of each item to include in a collection so that the total weight is less than or equal to a given limit and the total value is as large as possible.} which is well-known \cite{IbarraK75,Lawler79} to admit a fully polynomial time approximation scheme (FPTAS).
While from the perspective of approximation algorithms, there is nothing to be gained by applying convex optimization techniques to this problem, it is a useful tool for gaining a better understanding of the strengths and properties of various hierarchies of relaxations.
They show two results.
First they prove that an integrality gap close to 2 persists up to a linear number of rounds of Sherali-Adams. This confirms that Sherali-Adams restricted model of computation has serious weakness and a lower bound in this model does not necessarily imply that it is difficult to get a good approximation ratio (this has been observed also in other contexts, see e.g. \cite{CharikarMM09}).
On the other side, they show that after $r^2$ rounds of Lasserre, the integrality gap decreases quickly from $2$ to $r/(r-1)$, implying as a side product of their analysis a polynomial time approximation scheme. 
To some extent, their second result gives some further evidence that Lasserre's hierarchy might be seen as an effective model of computation for certain \emph{packing} problem. The results presented in this paper show that this does not seem the case for basic capacitated \emph{covering} problems.

%%%%%%%%%%%%%%%%%%
\paragraph{Capacitated Covering Problems Relaxations.}
%%%%%%%%%%%%%%%%%%%%%%%%%%%%%%%%%%%%%%%%%%%%%%%%%%%%%%%%
The \emph{min-knapsack problem} is defined by a set of items, each with a cost and a value, and a specified demand. The goal is to select a minimum cost set of items with total value at least the demand. The minimum knapsack problem is well-known to admit an FPTAS (the FPTAS for maximum knapsack \cite{IbarraK75,Lawler79}  can be easily modified to work for the min-knapsack problem). This fundamental covering problem is a special case of many covering problems, including the very general capacitated covering integer program \cite{CarrFLP00}. 

The \emph{capacitated covering integer program} (see e.g.~\cite{CarrFLP00}) is an integer program of the form $\min\{cx:Ux\geq d, 0\leq x\leq b, x\in Z^+\}$, where all the entries of $c,U$ and $d$ are nonnegative. 
The min-knapsack problem, the capacitated network design problem and the min-sum of tardy jobs scheduling problem are only few examples of capacitated covering problems.
One difficulty in approximating the capacitated covering problems lies in the fact that the ratio between the optimal IP solution to the optimal LP solution can be as bad as $||d||_{\infty}$, even when $U$ consists of a single row (i.e. the min-knapsack problem). 

A powerful way to cope with this problem is to strengthen the LP by adding (exponentially many) knapsack cover (KC) inequalities introduced by Carr et al.~\cite{CarrFLP00}, that have proved to be a useful tool to address capacitated covering problems~\cite{BansalBN08,CarnesS08,LeviLS08,BansalGK10,ChakrabartyGK10}.
For the min-knapsack problem, the improved IP/LP ratio with these inequalities is $2$.

%The research in this paper is motivated by a capacitated covering problem arising in scheduling. However, the results we have obtained to understand this problem include another basic capacitated covering problem such as the min-knapsack problem (and extend directly to the capacitated network design problem defined in~\cite{CarrFLP00}). %\footnote{A larger list of problems will appear in the full version of the paper.} 
%We believe that the obtained tools can be used to prove Lasserre lower bounds for other capacitated covering problems as well.

%
%\footnote{Given a set of items, each with a weight and a value, determine the number of each item to include in a collection so that the total value is at least a given bound and the total weight is as small as possible.}.
%

The min-sum single-machine scheduling problem (often denoted $1||\sum f_j$) is defined by a set of $n$ jobs to be scheduled on a single machine. Each job has an integral processing time, and there is a monotone function $f_j(C_j)$ specifying the cost incurred when job $j$ is completed at a particular time $C_j$; the goal is to minimize $\sum_j f_j(C_j)$.
A natural special case of this problem is given by the min-sum of tardy jobs (denoted $1||\sum_j w_jT_j$), where $f_j(C_j)= w_j \max\{C_j-d_j,0\}$, $w_j\geq 0$, and $d_j>0$ is a specified due date of job $j$. This problem is known to be NP-complete \cite{yuan92} even for unit weights. 
FPTASÕs are known with the additional restriction that there are only a constant number of deadlines \cite{KarakostasKW12}, or if jobs have unit weights \cite{lawler82}. 

The first constant approximation algorithm for $1||\sum f_j$ (and for $1||\sum_j w_jT_j$) has been obtained by Bansal and Pruhs \cite{BansalP10} (they consider an even more general scheduling problem). Their $16$-approximation has been recently improved to a $(2+\eps)$ primal-dual approximation by Cheung and Shmoys \cite{CheungS11}. Both approaches are based on using capacitated covering linear program relaxations (the relaxations in \cite{BansalP10,CheungS11} are different), with unbounded integrality gap (because the min-knapsack LP is a special case in both cases). Thus, in \cite{BansalP10,CheungS11} the authors strengthen these LPs by adding the knapsack cover (KC) inequalities introduced in~\cite{CarrFLP00}. Based on this approach, the combinatorial primal-dual approach in \cite{CheungS11} is currently the best known result. 

However,
no hardness of approximation result is known for $1||\sum f_j$ and, as remarked in \cite{CheungS11}, ``it is still conceivable (and perhaps likely) that there exists a polynomial time approximation scheme''. With this aim, better lower bounds are sought since (KC) inequalities are not sufficient: Indeed, even for the very special case of the min-knapsack problem, the integrality gap of the LP augmented with (KC) inequalities is $2$~\cite{CarrFLP00}. 

On the other side, for the minimum knapsack problem, using the trick of ``lifting the objective function'' (i.e. when the objective function is integrated as a constraint), Karlin et al. results imply that the Lasserre SDP hierarchy reduces the integrality gap to $(1+\eps)$ at level $O(1/\eps)$, for any $\eps>0$.

In light of the latter result, it is therefore natural to understand what happens if we strengthen the capacitated covering LPs with some levels of the Lasserre Hierarchy, instead of using (KC) inequalities. (Note that in order to claim that one can optimize over Lasserre hierarchy in polynomial time, one needs to assume that  the number of constraint of the starting LP is polynomial in the number of variables (see the discussion in~\cite{Laurent03}). Therefore LP cannot be the linear program strengthened with the exponentially many (KC) inequalities).

%This result also shows a problem where the rank is $\Omega(n)$ while the integrality gap decreases rapidly with $r$.

%In this paper we continue this line of research by focusing on two basic capacitated covering problems that are also well-known to be ``easy'' from the viewpoint of approximability and show the weakness of the Lasserre hierarchy for these ``easy'' problems.

%\end{comment}

%%%%%%%%%%%%%%%%%%%%%%%%%%%%%
\subsection{Our Results}
%%%%%%%%%%%%%%%%%%%%%%%%%%%%%
The contribution of this paper is twofold: 
\begin{enumerate}
\item (Almost Diagonal Matrices) We provide a novel characterization of the Lasserre hierarchy based on almost diagonal matrices.
Using this, we present an iterative recipe for computing positive semi-definite feasibility conditions.
\item (Integrality Gaps) We provide strong Lasserre lower bounds for basic capacitated covering problems that admit FPTASs.
\end{enumerate}

\paragraph{Almost Diagonal Matrices.}

It is known~\cite{Lasserre01} that the $0/1$ polytope is found at level $n$ of the Lasserre hierarchy. This argument concerns an elementary property of the zeta matrix of the lattice given by the collection of all subsets of $N=\{1,\ldots,n\}$ (see ~\cite{Laurent03}). It is enlightening to revisit this analysis by emphasizing some important aspects (Section~\ref{Sect:leveln}). We will use similar arguments and translate them to a generic level $t$ ($\leq n$) in a very natural way to obtain almost diagonal moment matrices (Section~\ref{Sect:almostdiag}), i.e. matrices that differ from diagonal ones by rank one matrices.

One of the main challenge in analyzing gap examples for Lasserre hierarchy is given by positive-semidefinite constraints, and by the hurdles of checking whether a solution satisfies them.
Indeed, it is well-known that there is no explicit general formula for computing the eigenvalues of a matrix $A$ (this because for polynomials of degree $5$ and higher there is no formula for computing the roots in terms of the coefficient in a finite number of steps; The eigenvalues of $A$ are, of course, the roots of the characteristic polynomial of $A$). 
Nonetheless, there are very effective \emph{iterative algorithms} for computing the eigenvalues of a symmetric matrix. The original iterative algorithm for this purpose was devised by Jacobi (see e.g. \cite{Strang}).
Jacobi's idea is to use the similarity transform that diagonalizes a $2\times 2$ matrix (for which a closed formula exists) to \emph{partially diagonalize} any $n\times n$ matrix with the aim to reduce the ``norm'' of the off-diagonal entries.

By using almost diagonal moment matrices, we suggest an iterative \emph{recipe} for computing positive semi-definite feasibility conditions (Section~\ref{Sect:gershgorin}). These conditions are used for showing that certain solutions are feasible for the Lasserre Hierarchy. 
Our iterative approach can be seen as a reminiscent of Jacobi's algorithm. One of the main difference is that we diagonalize the matrices of rank one\footnote{Diagonalizing a rank one matrix boils down to reducing it to a zero matrix but one diagonal entry.} that appears in the almost diagonal form (instead of diagonalizing $2\times 2$ matrices). Again the goal is to reduce the ``importance'' of the off-diagonal entries or, in other words, the radii of Gershgorin disks~\cite{varga}. The Lasserre integrality gap constructions of Sections~\ref{sect:minknapsack} and~\ref{sect:knaplifted} are based on this technique. 

Finally, starting from the almost diagonal form, we suggest an alternative formulation of the Lasserre hierarchy as a semi-infinite linear program (Section~\ref{sect:sip}). This gives a non-matricial definition of the hierarchy and a different point of view that might be convenient for certain problems. The Lasserre integrality gap construction of Section~\ref{sect:tardyjobs} is based on this formulation.

We think that the Lasserre hierarchy in almost diagonal form has several interesting aspects that will be useful in other applications and will stimulate further research. The proposed approach belongs to the very few techniques known so far for proving Lasserre integrality gaps.

%Informally, the properties of the moment matrices at level $n$ are ``revealed'' by diagonalizing them; We will use ``partial'' diagonalization by congruent transformations (see Definition~\ref{Def:congruence}) to ``reveal'' the properties of moment matrices at level $t$-th. We then provide solutions that satisfy these properties and give the claimed unbounded integrality gaps. 

%THE ALMOST DIAGONALIZATION TECHNIQUE INTRODUCED IN THIS PAPER MIGHT BE OF INTEREST IN OTHER APPLICATIONS, SPECIALLY BECAUSE THERER VERY FEW TECHNIQUES TO PROVE LASSERRE INTEGRALITY GAPS.

%%%%%%%%%%%%%%%%%
\paragraph{Integrality Gaps.}
%%%%%%%%%%%%%%%%
By using Lasserre in almost diagonal form, we prove strong Lasserre lower bounds for basic capacitated covering problems that admit FPTASs.

%The main contribution of this paper is in analyzing the effectiveness of the Lasserre hierarchy to reduce the unbounded integrality gap of two basic capacitated covering LPs (with or without the trick of ``lifting the objective function''). 

When we do not ``lift the objective function'', we show (Section~\ref{sect:minknapsack}) that the integrality gap for the min-knapsack remains arbitrarily large even at level $(n-1)$ of Lasserre's hierarchy (note that this is a tight characterization, since at level $n$ the solution is integral). 

If we ``lift the objective function'', we show (Section~\ref{sect:tardyjobs}) that the integrality gap of the Lasserre hierarchy for the min-sum of tardy jobs scheduling problem is unbounded at level $t=\Omega(\sqrt{n})$. The standard covering LP that we use here is a common special case of the covering LPs used in \cite{BansalP10,CheungS11}, and therefore it shows that the approach used in \cite{BansalP10,CheungS11} cannot be improved by simply replacing the (KC) inequalities with the Lasserre Hierarchy at level $O(\sqrt{n})$. The same gap analysis holds for the min version of the multiple knapsacks problem (see Section \ref{sect:knaplifted}), and for the capacitated network design problem defined in~\cite{CarrFLP00}, by a straightforward modification of the gap construction and the same analysis.) Trivially the gap bounds immediately apply to the capacitated covering integer program and to all the subproblems for which the min-knapsack is just a special case (see e.g.~\cite{CarrFLP00}). 

We note that most of prior results exhibiting gap instances for Lasserre hierarchy relaxations do so for
problems that are already known to be hard to approximate, under some suitable assumption. Based
on this hardness result, one would expect that the Lasserre hierarchy relaxations to have an integrality gap that
matches the inapproximability factor. Some exceptions are also known where the known integrality gaps are substantially stronger than the (very weak) hardness bounds known for the problem (see \cite{BhaskaraCVGZ12} and the references therein), but here it is still conceivable that the apparent ``weakness'' of the Lasserre hierarchy is due to the inherent complexity of the problem, that has still to be fully understood, and are perhaps indicative of the hardness of approximating.

In this paper, our gap constructions are a rare exception to this trend, indeed we show unbounded integrality gaps for two ``easy'' problems that admit FPTASs. These results give some evidence that Lasserre restricted model of computation has serious and extreme weakness for covering problems of this type and might stimulate the study of better hierarchies (see e.g.~\cite{BienstockZ04}).

By quoting   \cite{barakweb}:
``While until recently we had very little tools to take advantage of the SOS algorithm
(at least in the sense of having rigorous analysis), we now have some indications
that, when applied to the \emph{right problems} it can be a powerful toolÉ''. 
We believe that the provided integrality gaps help to shape the above sentence.% by understanding which problems are \emph{not} the right problems for Lasserre/SOS.
%=====================

%%%%%%%%%%%%%%%%%%%
\paragraph{How to read this paper.} 
Most of the concepts and technical aspects in this paper are anticipated by concrete examples (see examples~\ref{ex:def}, \ref{examplediag}, \ref{example:almostdiag}, \ref{ex:strategy}, \ref{exampleknap}, \ref{ex:mkp} and Section~\ref{sect:exmkp}) and high level expositions, with the aim to provide the reader with the essential intuition. Generalizations of the examples and formal proofs are then subsequently presented.
The non-expert reader can get the main sense of the content by reading only the definitions and the provided examples (and skip the remaining part). The expert-reader might skip some of the examples.

%%%%%%%%%%%%%%%%%%%%%%%%%%%%%%%%%%%%%%%%%%%%%%%%%%%%

%%%%%%%%%%%%%%%%%%%%%%%%%%%
\section{The Lasserre Hierarchy: Definition}
%%%%%%%%%%%%%%%%%%%%%%%%%%%
%The Lasserre hierarchy~\cite{Lasserre01} is a sequence of semidefinite relaxations for certain 0-1 polynomial programs, each one more constrained than the last. The $t$-th level of the Lasserre hierarchy requires that any set of $t$ original vectors be self-consistent in a very strong way, as implied by the positive semidefiniteness of the  moment matrices for the variables and constraints.

In this section, we provide a definition of the Lasserre hierarchy~\cite{Lasserre01}. 
With a slight cost in notation, the system is introduced in its generality (and not tailored to the studied problems). The reason of this choice is because the almost diagonal form derived in this paper (see Lemma~\ref{th:lassalmostdiag}) holds for the general case and we believe it will be useful for other problems as well.

In our notation, we mainly follow the
survey of Laurent~\cite{Laurent03}. We also provide some well-known properties with the aim to be self-contained and use well-known facts from linear algebra (see e.g. \cite{Strang} and Section~\ref{linear algebra}). %Additional materials can be found in Appendix (see also \cite{roth12}).

\paragraph{Variables and Moment Matrix.} Throughout this paper, vectors are written as columns. %We denote by $e$ the vector with all entries equal to one. 
Let $N$ denote the set $\{1,\ldots,n\}$. The collection of all subsets of $N$ is denoted by $\PS(N)$. For any integer $t\geq 0$, let $\PS_t(N)$ denote the collection of subsets of $N$ having cardinality at most~$t$.
Let $y\in \mathbb{R}^{\PS(N)}$. For convenience, $y_{\{i\}}$ is abbreviated as $y_i$ for all $i\in V$. Let $Diag(y)$ denote the diagonal matrix in $\mathbb{R}^{\PS(N)\times \PS(N)}$ with $(I,I)$-entry equal to $y_I$ for all $I\in \PS(N)$. For any nonnegative integer $t\leq n$, let $M_t(y)$ denote the matrix with $(I,J)$-entry $y_{I\cup J}$ for all $I,J\in \PS_t(N)$. 
%For any $W \subseteq N$, let $M_W(y)$ denote the matrix with $(I,J)$-entry $y_{I\cup J}$ for all $I,J\in \PS(W)$. Thus, $M_N(y)=M_n(y)$. 
Matrix $M_n(y)$ is known as \emph{moment matrix} of $y$.
%

 %When the context assumes an ordering of the elements of $\PS(N)$, we use the following ordering rule: for distinct $I,J\in \PS(N)$, $I$ precedes $J$ if $|I|<|J|$ or if $|I|=|J|$ and the smallest element in $(I\cup J)\setminus (I\cap J)$ is in $I$ (lexicographic order).
%

\paragraph{Lasserre Hierarchy Definition.}
Let $\K$ be defined by the following
\begin{equation}\label{eq:polytopeK}
\K:=\{x\in[0,1]^n:g_{\ell}(x)\geq 0 \text{ for } \ell=1,\ldots,m\}
\end{equation}
where $g_{\ell}$ is a non constant polynomial\footnote{In our applications $g_{\ell}$ is a linear function in $x$.} in $x$ for $\ell=1,\ldots,m$.
%Set $K$ represents the feasible solutions of the integral relaxation of a $0/1$ optimization problem.
We are interested in obtaining the convex hull of the integral points in $\K$, therefore we can assume that each variable occurs in every polynomial $g_{\ell}$ with degree at most one, since $x_i^2=x_i$ for every $i\in N$ when $x$ is integral.

Given a polynomial $g(x)$, we use the same symbol $g$ to denote the vector in $\mathbb{R}^{\PS(N)}$ where the entry indexed by $I$ is equal to the coefficient of the term $\prod_{i\in I} x_i$ in $g(x)$, for all $I\in \PS(N)$ and, therefore, $g(x) = \sum_{I\subseteq N}\left(g_I \prod_{i\in I} x_i\right )$.

For $g,y\in \mathbb{R}^{\PS(N)}$, we define $g*y := M_N(y) g$, often called \emph{shift operator}\footnote{For any polynomial $g(x)$ the shift operation $(g*y)_I$ is obtained by ``linearizing'' the polynomial $g(x)\prod_{i\in I}x_i$.}; that is, the $I$-th entry of vector $g*y$, namely $(g*y)_I$, is equal to $\sum_{J\subseteq N} g_J y_{I\cup J}$.\footnote{
When $g(x)$ is a linear function of $x$, i.e. $g(x) = \sum_{i=1}^n g_{i} \cdot x_i + g_{\emptyset}$ we have $(g*y)_I=\sum_{i=1}^n g_i y_{I\cup \{i\}}+g_I$.}

We will use $A\succeq 0$ to denote that matrix $A$ is positive semi-definite (PSD) (see e.g. \cite{Strang} and Appendix~\ref{linear algebra}).
\begin{definition}\label{lassDef}
The Lasserre hierarchy at the $t$-th level, denoted as $\LAS_t(\K)$, is the set of vectors $y\in \mathbb{R}^{\PS(N)}$ that satisfy the following
\begin{eqnarray}
y_{\emptyset}&=&1  \\
M_{t+1}(y)&\succeq& 0  \\
M_{t}( g_{\ell}*y )&\succeq& 0 \qquad \ell=1\ldots m
\end{eqnarray}
\end{definition}
In the following we will call informally $M_{t+1}(y)$ the \emph{Variable Moment Matrix}, and $M_{t}( g_{\ell}*y )$ the \emph{Constraint Moment Matrix} (the context will make it clear the level $t$ we are referring to).

\begin{example}\label{ex:def}
Here we introduce a small example that will be used to gain the essential intuition of several properties before their formal exposition. The expert reader can skip these parts.

The example consists of a linear program with one constraint and two variables: 
\begin{equation}\label{eq:expolytope}
\K:=\{x_1,x_2\in[0,1]:g(x)=a_1x_1+a_2x_2-b\geq 0\}
\end{equation}
At the $2$-nd level  of Lasserre we obtain the \emph{full} variable and constraint moment matrices.
The variable moment matrix is as follows.% (for ease of notation we use $y_{1}$, $y_{2}$ and $y_{12}$ to denote $y_{\{1\}}$, $y_{\{2\}}$ and $y_{\{1,2\}}$, respectively):
{\small 
\begin{equation}\label{ex:My}
M_2(y) =\left(
\begin{array}{cccc}
y_{\emptyset} & y_{1} & y_{2} & y_{12} \\
y_{1} & y_{1} & y_{12} & y_{12} \\
y_{2} & y_{12} & y_{2} & y_{12} \\
y_{12} & y_{12} & y_{12} & y_{12}
\end{array}
\right) \succeq 0
\end{equation}
}
Similarly, let $z=g*y$, the constraint moment matrix is as follows:
{\small 
\begin{equation}\label{ex:Mz}
M_2(z) =\left(
\begin{array}{cccc}
z_{\emptyset} & z_{1} & z_{2} & z_{12} \\
z_{1} & z_{1} & z_{12} & z_{12} \\
z_{2} & z_{12} & z_{2} & z_{12} \\
z_{12} & z_{12} & z_{12} & z_{12}
\end{array}
\right) \succeq 0
\end{equation}
}
where $z_{\emptyset}=a_1y_1+a_2y_2-b$, $z_{1}=a_1y_1+a_2y_{12}-by_1$, $z_{2}=a_1y_{12}+a_2y_{2}-by_2$ and $z_{12}=(a_1+a_2-b)y_{12}$.
\end{example}
%%%%%%%%%%%%%%%%%%%
\section{The Lasserre Hierarchy at Level $n$}\label{Sect:leveln}
%%%%%%%%%%%%%%%%%%%

With $n$ variables, the $n$-th level of the Lasserre hierarchy is sufficient to obtain a tight relaxation where the only feasible solutions are convex combinations of integral solutions~\cite{Lasserre01}.
This can be proved by using the \emph{canonical lifting lemma} (see Laurent~\cite{Laurent03}), which characterizes when a full moment matrix $M_n(y)$ is positive semidefinite by diagonalizing it. In the following we revisit this result (see~\cite{Laurent03} for additional details), since it will be useful for introducing the almost diagonal form. %to understand the integrality gap constructions. %Indeed the latter will in fact use a ``mixed diagonal matrix'' characterization of the moment matrices at the generic level $t$.

Informally, the properties of the moment matrices at level $n$ are ``revealed'' by diagonalizing; We will use ``partial'' diagonalization to ``reveal'' the properties of moment matrices at level $t$.
%
%
%%%%%%%%%%%%%%
\begin{example}\label{examplediag}
%%%%%%%%%%%%%%
By using Example~\ref{ex:def}, we present the essential core properties  that are used in showing convergence to the integral hull of the Lasserre hierarchy. 

It is well-known that elementary symmetric (i.e. row and column) matrix operations preserve the positive-semidefiniteness of a matrix. These transformations are known under the term \emph{congruent} transformations (see Appendix \ref{linear algebra} and the notation therein). For example, consider matrix \eqref{ex:Mz}. Remove from the first row, the second and the third row, and add the last row (and, symmetrically, do the same for the first column), we obtain the following congruent matrix that is PSD if and only if $M_2(z)$ is PSD (see Lemma~\ref{th:simop}).
$$
{\small
M_2(z) \cong \left(
\begin{array}{cccc}
z_{\emptyset}-z_1-z_2+z_{12} & 0 & 0 & 0 \\
0 & z_{1} & z_{12} & z_{12} \\
0 & z_{12} & z_{2} & z_{12} \\
0 & z_{12} & z_{12} & z_{12}
\end{array}
\right) \succeq 0
}$$
Now, remove the last row from the second row (symmetrically for columns); then  remove the last row from the third row  (simmetrically for columns). Then, we obtain the following diagonal matrix. 
{\small $$
M_2(z) \cong \left(
\begin{array}{cccc}
z_{\emptyset}-z_1-z_2+z_{12} & 0 & 0 & 0 \\
0 & z_{1}-z_{12} & 0 & 0 \\
0 & 0 & z_{2}-z_{12} & 0 \\
0 & 0 & 0 & z_{12}
\end{array}
\right) \succeq 0
$$}
By playing a bit, it is not very difficult to understand the general rule to diagonalize any full moment matrix. This transformation can be suitably described by multiplying the moment matrix by a special matrix that is known as the M\"{o}bius matrix of $\mathcal{P}(N)$ (see~\cite{Laurent03}, below and the following sections).
{\tiny
$$\underbrace{\left(
\begin{array}{cccc}
1 & -1 & -1 & 1 \\
0 & 1 & 0 & -1 \\
0 & 0 & 1 & -1 \\
0 & 0 & 0 & 1
\end{array}
\right)}_{Z^{-1}=\mbox{M\"{o}bius matrix}} 
\left(
\begin{array}{cccc}
z_{\emptyset} & z_{1} & z_{2} & z_{12} \\ 
z_{1} & z_{1} & z_{12} & z_{12} \\
z_{2} & z_{12} & z_{2} & z_{12} \\ 
z_{12} & z_{12} & z_{12} & z_{12}
\end{array}
\right)
\underbrace{
\left(
\begin{array}{cccc}
1 & 0 & 0 & 0 \\
-1 & 1 & 0 & 0 \\
-1 & 0 & 1 & 0 \\
1 & -1 & -1 & 1
\end{array}
\right)}_{(Z^{-1})^{\top}}
 = \left(
\begin{array}{cccc}
z_{\emptyset}-z_1-z_2+z_{12} & 0 & 0 & 0 \\
0 & z_{1}-z_{12} & 0 & 0 \\
0 & 0 & z_{2}-z_{12} & 0 \\
0 & 0 & 0 & z_{12}
\end{array}
\right)
$$
}
{\tiny
$$
\left(
\begin{array}{cccc}
z_{\emptyset} & z_{1} & z_{2} & z_{12} \\ 
z_{1} & z_{1} & z_{12} & z_{12} \\
z_{2} & z_{12} & z_{2} & z_{12} \\ 
z_{12} & z_{12} & z_{12} & z_{12}
\end{array}
\right)
 = \underbrace{\left(
\begin{array}{cccc}
1 & 1 & 1 & 1 \\
0 & 1 & 0 & 1 \\
0 & 0 & 1 & 1 \\
0 & 0 & 0 & 1
\end{array}
\right)}_{Z\mbox{ matrix}} \left(
\begin{array}{cccc}
z_{\emptyset}-z_1-z_2+z_{12} & 0 & 0 & 0 \\
0 & z_{1}-z_{12} & 0 & 0 \\
0 & 0 & z_{2}-z_{12} & 0 \\
0 & 0 & 0 & z_{12}
\end{array}
\right)\underbrace{
\left(
\begin{array}{cccc}
1 & 0 & 0 & 0 \\
1 & 1 & 0 & 0 \\
1 & 0 & 1 & 0 \\
1 & 1 & 1 & 1
\end{array}
\right)}_{Z^{\top}}
$$
}

Similarly we can diagonalize the full moment matrix \eqref{ex:My} of the variables:
{\small $$
M_2(y) \cong \left(
\begin{array}{cccc}
y_{\emptyset,\{1,2\}} & 0 & 0 & 0 \\
0 & y_{\{1\},\{2\}} & 0 & 0 \\
0 & 0 & y_{\{2\},\{1\}} & 0 \\
0 & 0 & 0 & y_{\{1,2\},\{\emptyset\}}
\end{array}
\right) \succeq 0
$$}
where $y_{\emptyset,\{1,2\}}=y_{\emptyset}-y_1-y_2+y_{12}$, $y_{\{1\},\{2\}}=y_{1}-y_{12}$, $y_{\{2\},\{1\}}=y_{2}-y_{12}$ and $y_{\{1,2\},\{\emptyset\}}=y_{12}$.  

Note that $M_2(y)\succeq 0$ if and only if the above diagonal matrix is PSD, and therefore when all the diagonal entries are nonnegative. Moreover, the sum of the diagonal entries is equal to $y_{\emptyset}=1$. So, the diagonal entries form a probability distribution. For example, $y_{\emptyset,\{1,2\}}$ denotes the probability that neither variable $x_1$ nor $x_2$ are set to one; $y_{\{1\},\{2\}}$ denotes the probability that we set to one only variable $x_1$ (and zero $x_2$). So we have a probability distribution over all the integral solutions.

It is also very instructive to give a closer look at the diagonalization of the constraint moment matrix. Consider the first diagonal entry: $z_{\emptyset}-z_1-z_2+z_{12}= (a_1y_1+a_2y_2-b)-(a_1y_1+a_2y_{12}-by_1)-(a_1y_{12}+a_2y_{2}-by_2)+(a_1+a_2-b)y_{12}=(-b)(y_{\emptyset}-y_1-y_2+y_{12})=-b \cdot y_{\emptyset,\{1,2\}}$. So the first entry is equal to the value of the constraint when none of the variables is set to one multiplied by the corresponding probability. It is not difficult to verify that similar things happen to the other diagonal entries (see Lemma~\ref{th:constentry}) and we obtain the following: 
{\small 
$$
M_2(z) \cong \left(
\begin{array}{cccc}
(-b)y_{\emptyset,\{1,2\}} & 0 & 0 & 0 \\
0 & (a_1-b)y_{\{1\},\{2\}} & 0 & 0 \\
0 & 0 & (a_2-b)y_{\{2\},\{1\}} & 0 \\
0 & 0 & 0 & (a_1+a_2-b)y_{\{1,2\},\{\emptyset\}}
\end{array}
\right) \succeq 0
$$
}
It follows that $M_2(z)$ is PSD if and only if the diagonal matrix is PSD, which implies that we have positive probability for an integral solution if and only if the constraint is satisfied by that integral solution. Note that $y_1= y_{\{1\},\{2\}}+y_{\{1,2\},\emptyset}$ (similarly for $y_2$) can be seen as convex combination of \emph{feasible} integral solutions and therefore we have no integrality gap.

In the following, our example with two variables is generalized to any number of variables and constraints.
\end{example}
%%%%%%%%%%%%%%%%%%%%%%%%%%%%%
\subsection{The Lasserre Hierarchy at Level $n$ in Diagonal Form}
%We formalize these operations by using the following notations.
Let us start by introducing some basic notations and preliminary properties. We use the generic vector $w\in \mathbb{R}^{\PS(N)}$ to denote either the vector $y\in\mathbb{R}^{\PS(N)}$ of variables, or the shifted vector $g*y$, for any $g\in\mathbb{R}^{\PS(N)}$.
\begin{definition}
Let $w\in \mathbb{R}^{\PS(N)}$. For any $I,J\in \PS(N)$, we define
\begin{equation} \label{eq:probevent}
w_{I,J}: =\sum_{H\subseteq J} (-1)^{|H|} w_{H\cup I}
\end{equation}
Let $w^N\in \mathbb{R}^{\PS(N)}$ be such that the $I$-th entry, with $I\subseteq N$, is equal to
\begin{equation}\label{eq:totprob}
w^N_{I} := w_{I,N\setminus I}
\end{equation}
\end{definition}
Note that at level $n$, $y_I^N$ can be interpreted as the probability of the integral solution $\{y_i=1: i\in I, y_j=0 :j\in N\setminus I\}$. %(see Section~\ref{sect:probdistr}).
The following two properties are easy to check.
\begin{lemma}\label{th:nonempty}
For any $w\in \mathbb{R}^{\PS(N)}$ and $I\cap J\not= \emptyset$ we have $w_{I, J}=0$.
\end{lemma}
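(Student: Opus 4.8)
The plan is to unpack the definition of $w_{I,J}$ directly and exploit the fact that when $I \cap J \neq \emptyset$, the union $H \cup I$ is insensitive to whether elements of $I \cap J$ are included in $H$. Concretely, fix some element $k \in I \cap J$. I would split the sum $\sum_{H \subseteq J} (-1)^{|H|} w_{H \cup I}$ according to whether $k \in H$ or $k \notin H$, pairing each subset $H \subseteq J$ with $k \notin H$ with the subset $H \cup \{k\}$. Since $k \in I$, we have $(H \cup \{k\}) \cup I = H \cup I$, so the two terms $w_{H \cup I}$ and $w_{(H\cup\{k\}) \cup I}$ are equal, while the signs $(-1)^{|H|}$ and $(-1)^{|H|+1}$ are opposite. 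Hence the terms cancel in pairs and the total sum is $0$.

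More formally, I would write $J = J' \cup \{k\}$ with $k \notin J'$, and reindex: every $H \subseteq J$ is uniquely of the form $H'$ or $H' \cup \{k\}$ for some $H' \subseteq J'$. Then
$$
w_{I,J} = \sum_{H' \subseteq J'} (-1)^{|H'|} w_{H' \cup I} + \sum_{H' \subseteq J'} (-1)^{|H'|+1} w_{H' \cup \{k\} \cup I}.
$$
Because $k \in I$, the set $H' \cup \{k\} \cup I$ equals $H' \cup I$, so the second sum is exactly the negative of the first, and the two cancel to give $w_{I,J} = 0$.

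I do not expect any real obstacle here; this is a routine sign-cancellation (inclusion–exclusion collapsing) argument. The only minor care needed is the bookkeeping in the bijection between subsets of $J$ and pairs (subset of $J'$, bit for $k$), and making sure the identity $H' \cup \{k\} \cup I = H' \cup I$ is invoked with the correct justification ($k \in I$, hence $k \in I \cap J$ is exactly the hypothesis being used). An alternative phrasing, which I might include as a remark, is to note that $w_{I,J}$ as a function of the "excluded" coordinates behaves like an iterated finite difference, and a finite difference in a variable on which the expression does not depend is zero; but the explicit pairing argument above is the cleanest to write down.
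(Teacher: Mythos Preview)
Your proof is correct and is essentially identical to the paper's own argument: the paper also fixes an element $x\in I\cap J$, splits the sum over $H\subseteq J$ according to membership of $x$, and uses $H\cup I=(H\cup\{x\})\cup I$ to pair terms with opposite signs. Your write-up is slightly more detailed, but the idea and execution match.
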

\longer{
\begin{proof}
For $x\in I\cap J$,
$w_{I,J} =\sum_{H\subseteq J} (-1)^{|H|} w_{H\cup I} = \sum_{H\subseteq J\setminus \{x\}} \left((-1)^{|H|}  + (-1)^{|H|+1}  \right)w_{H\cup I}$.
%\end{eqnarray*}
\end{proof}
}
%
%
%%%%%%%%y%%%%%%%%%%
\begin{lemma}\label{th:sum1}
For any $J\subseteq N$ and $w\in \mathbb{R}^{\PS(N)}$ we have
\begin{equation}
w_J = \sum_{I\subseteq N} w_{I\cup J, N\setminus I} \label{sum1}
\end{equation}
%where $w_{I\cup J, S\setminus I}= \sum_{H\subseteq S\setminus I} (-1)^{|H|} w_{I\cup J\cup H} $.
\end{lemma}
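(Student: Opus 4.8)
\textbf{Proof plan for Lemma~\ref{th:sum1}.}

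The plan is to unfold the definition \eqref{eq:probevent} of $w_{I\cup J,\, N\setminus I}$ on the right-hand side of \eqref{sum1} and then interchange the order of summation. Writing $I$ for a subset of $N$ (which, after intersecting with the support, we may as well take disjoint from $J$, since otherwise the term can still be summed over and we will see it contributes coherently), we have
\[
\sum_{I\subseteq N} w_{I\cup J,\, N\setminus I}
= \sum_{I\subseteq N}\ \sum_{H\subseteq N\setminus I} (-1)^{|H|}\, w_{H\cup I\cup J}.
\]
First I would reorganize this double sum by grouping terms according to the set $K := H\cup I$. For a fixed $K\subseteq N$, the pairs $(I,H)$ with $H\cup I = K$ and $H\cap I=\emptyset$ are exactly the ordered partitions of $K$ into $(I,H)$; note the constraint $H\subseteq N\setminus I$ together with $H\cup I=K$ forces $H$ and $I$ to be disjoint, so $(I,H)$ ranges over all $2^{|K|}$ ways of splitting $K$. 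Hence the coefficient of $w_{K\cup J}$ in the sum is $\sum_{H\subseteq K}(-1)^{|H|}$, which is $1$ if $K=\emptyset$ and $0$ otherwise by the standard binomial identity. Therefore the whole expression collapses to $w_{\emptyset\cup J} = w_J$, as claimed.

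The one point that needs a little care is the bookkeeping of which $(I,H)$ pairs actually occur: in the outer sum $I$ ranges over \emph{all} subsets of $N$, not just subsets of a fixed $K$, but for a term to land on $w_{K\cup J}$ we need $I\subseteq K$ and then $H = K\setminus I$ must satisfy $H\subseteq N\setminus I$, which it automatically does. So every splitting of every $K$ is hit exactly once, and no spurious terms appear. An alternative, perhaps cleaner, route is to invoke Lemma~\ref{th:nonempty}: it lets one first restrict attention to $I$ disjoint from $J$ (terms with $I\cap J\neq\emptyset$ are handled by absorbing the overlap), and then the identity becomes the familiar Möbius/zeta inversion on the Boolean lattice $\mathcal{P}(N)$ — indeed this lemma is precisely the statement, in the notation of \eqref{eq:probevent}–\eqref{eq:totprob}, that summing the ``diagonalized'' entries $w^N_{I\cup J}$ over the appropriate fiber recovers the original entry $w_J$, which is exactly the $Z \cdot (\text{diagonal}) \cdot Z^\top$ decomposition illustrated in Example~\ref{examplediag}.

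I do not expect any real obstacle here; the main (minor) hazard is sign/index errors in the partition-counting step, so I would double-check the collapse on the $n=2$ case: for $J=\emptyset$ the claim reads $w_\emptyset = w_{\emptyset,\{1,2\}} + w_{\{1\},\{2\}} + w_{\{2\},\{1\}} + w_{\{1,2\},\emptyset}$, which expands to $(w_\emptyset - w_1 - w_2 + w_{12}) + (w_1 - w_{12}) + (w_2 - w_{12}) + w_{12} = w_\emptyset$, matching the diagonalization of $M_2(y)$ in Example~\ref{examplediag}. This sanity check confirms both the statement and the summation-swap argument above.
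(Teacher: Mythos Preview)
Your proof is correct and follows essentially the same route as the paper's intended argument: expand the definition, regroup the double sum by the disjoint union $K=I\cup H$ (the paper calls this $T$), and collapse via the binomial identity $\sum_{H\subseteq K}(-1)^{|H|}=[K=\emptyset]$. Your bookkeeping check that every ordered partition of $K$ into $(I,H)$ appears exactly once is the only point that needs care, and you handle it correctly; the $n=2$ sanity check and the M\"{o}bius-inversion remark are fine but not needed.
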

%\begin{proof}
%$\sum_{I\subseteq S} w_{I\cup J, S\setminus I} \label{sum1}=\sum_{I\subseteq S}\left(Z_{|S\cup J}^{-1}\cdot  w_{|S\cup J}\right)_{I\cup J}$
%\end{proof}
%%%%%%%%%%%%
\longer{
\begin{proof}
By definition,
%\begin{eqnarray*}
$\sum_{I\subseteq S} w_{I\cup J, S\setminus I} \label{sum1} = \sum_{I\subseteq S}  \sum_{H\subseteq S\setminus I} (-1)^{|H|} w_{I\cup J\cup H}$.
%\end{eqnarray*}
Now consider the above sum and the generic term of the sum, say $w_{T\cup J}$, with $T\subseteq S$. Set $T$ can be seen as the union of two disjoint sets, say $I$ and $H$, with $T=I\cup H$; variable $w_{T\cup J}$ appears for every possible pair of disjoint sets $I$ and $H$, with $T=I\cup H$, each time multiplied by coefficient $(-1)^{|H|}$. Therefore the sum of the coefficients of the generic term $w_{T\cup J}$ is equal to $\sum_{H \subseteq T}(-1)^{|H|}$; if $|T|\geq 1$ then half of the subsets $H$ have even cardinality and therefore $\sum_{H \subseteq T}(-1)^{|H|}=0$; otherwise, $T=\emptyset$ and the coefficient is equal to 1 and the claim follows:
%\begin{eqnarray*}
$\sum_{I\subseteq S}  \sum_{H\subseteq S\setminus I} (-1)^{|H|} w_{I\cup J\cup H}
= \sum_{T\subseteq S} w_{T\cup J} \sum_{H\subseteq T} (-1)^{|H|} = w_J$.
%\end{eqnarray*}
\end{proof}
}

%%%%%%%%%%%%%%%%%%%%%%%%%%%%%%
\paragraph{Diagonalization.}\label{sect:inc-esc}
%%%%%%%%%%%%%%%%%%%%%%%%%%%%%%

Let $Z$ denote \emph{zeta matrix} of the lattice $\PS(N)$, that is the square $0$-$1$ matrix indexed by $\PS(N)$ such that $Z_{I,J}=1$ if and only if $I \subseteq J$.
%\longer{
\begin{equation}\label{zetamatrix}
Z_{I,J}=
\left\{
\begin{array}{ll}
1   & \text{if } I \subseteq J,\\
0 & \text{otherwise}.
\end{array} \right.
\end{equation}
%}
%
%
This matrix is known to be invertible and the inverse is known as the M\"{o}bius matrix of $\mathcal{P}(N)$ whose entries are defined as follows:
\begin{equation}\label{mobiusmatrix}
Z^{-1}_{I,J}=
\left\{
\begin{array}{ll}
(-1)^{|J\setminus I |}   & \text{if } I \subseteq J,\\
0 & \text{otherwise}.
\end{array} \right.
\end{equation}
The diagonalization of the moment matrices is obtained by the following \emph{congruent transformation} (see Section \ref{linear algebra} and Definition~\ref{Def:congruence}): for $A\in \{M_n(y), M_n(g*y)\}$, $A\rightarrow Z^{-1}A (Z^{-1})^{\top}$, where $Z^{-1}$ is the M\"{o}bius matrix of $\mathcal{P}(N)$ (see~\cite{Laurent03}).
\begin{lemma}\label{th:diagonalization}
For any $w\in \mathbb{R}^{\PS(N)}$, $Z Diag(w^N) Z^{\top} =  M_n(w)$.
\end{lemma}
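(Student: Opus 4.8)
The plan is to verify the matrix identity $Z\, Diag(w^N)\, Z^{\top} = M_n(w)$ entrywise, indexed by subsets $I,J \subseteq N$. Fix $I,J$. By the definition of $Z$ in \eqref{zetamatrix}, the $(I,J)$-entry of $Z\, Diag(w^N)\, Z^{\top}$ equals $\sum_{K \subseteq N} Z_{I,K}\, w^N_K\, Z_{J,K} = \sum_{K : I \subseteq K,\ J \subseteq K} w^N_K = \sum_{K : I \cup J \subseteq K} w^N_K$, since the two containment conditions $I \subseteq K$ and $J \subseteq K$ together are equivalent to $I \cup J \subseteq K$. On the other hand, the $(I,J)$-entry of $M_n(w)$ is $w_{I \cup J}$ by definition of the moment matrix. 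So the whole lemma reduces to the single claim: for every set $L \subseteq N$ (take $L = I \cup J$),
\begin{equation*}
w_L = \sum_{K : L \subseteq K \subseteq N} w^N_K.
\end{equation*}

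First I would rewrite the right-hand side by substituting $K = L \cup I'$ where $I'$ ranges over subsets of $N \setminus L$ (this is a bijection between $\{K : L \subseteq K \subseteq N\}$ and $\mathcal{P}(N \setminus L)$). This gives $\sum_{I' \subseteq N \setminus L} w^N_{L \cup I'}$. Now I would expand $w^N_{L \cup I'}$ using its definition \eqref{eq:totprob}: $w^N_{L \cup I'} = w_{L \cup I',\, N \setminus (L \cup I')}$. Observing that $N \setminus (L \cup I') = (N \setminus L) \setminus I'$, the sum becomes $\sum_{I' \subseteq N \setminus L} w_{L \cup I',\, (N \setminus L) \setminus I'}$. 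This is precisely the right-hand side of Lemma~\ref{th:sum1} with the roles played by $J := L$ and the ambient ``complement'' set being $N \setminus L$ — indeed, Lemma~\ref{th:sum1} (equation \eqref{sum1}) states $w_J = \sum_{I \subseteq N} w_{I \cup J, N \setminus I}$, but since $w_{I \cup J, N \setminus I} = 0$ whenever $I \cap J \neq \emptyset$ by Lemma~\ref{th:nonempty}, one may restrict the sum to $I \subseteq N \setminus J$, and then $N \setminus I$ restricted appropriately matches $(N \setminus J) \setminus I$ on the relevant coordinates. Applying Lemma~\ref{th:sum1} with $J = L$ therefore yields exactly $w_L$, completing the argument.

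The only genuinely delicate point is matching the index sets between the statement of Lemma~\ref{th:sum1} and the sum arising here: Lemma~\ref{th:sum1} sums $I$ over all of $N$, whereas after the substitution we have $I'$ ranging only over subsets of $N \setminus L$. Reconciling these requires the vanishing from Lemma~\ref{th:nonempty} — namely that $w_{I \cup L, N \setminus I} = 0$ when $I \cap L \neq \emptyset$, so the ``extra'' terms in Lemma~\ref{th:sum1}'s sum contribute nothing — together with the observation that for $I \subseteq N \setminus L$ the second subscript $N \setminus I$ can be harmlessly intersected with $N \setminus L$ because $w_{I \cup L, N \setminus I}$ only ``sees'' coordinates outside $I \cup L$ anyway (formally, $w_{A, B} = w_{A, B \setminus A}$ by the same parity cancellation used to prove Lemma~\ref{th:nonempty}, or one simply notes $B \cap A$ contributes zero). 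Once this bookkeeping is in place the proof is a direct chain of equalities; I expect this index-matching step to be the main obstacle, and everything else — the entrywise expansion of the matrix product and the containment equivalence $I \subseteq K \wedge J \subseteq K \iff I \cup J \subseteq K$ — to be routine.
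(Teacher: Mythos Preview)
Your overall strategy coincides with the paper's: compute the $(I,J)$-entry of $Z\,Diag(w^N)\,Z^{\top}$ as $\sum_{K\supseteq L} w^N_K$ with $L=I\cup J$, and then identify this sum with $w_L$ via Lemma~\ref{th:sum1} and Lemma~\ref{th:nonempty}. That part is fine and matches the paper.

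The genuine gap is in the ``delicate'' bookkeeping paragraph. You claim that in the sum $w_L=\sum_{I\subseteq N} w_{I\cup L,\,N\setminus I}$ from Lemma~\ref{th:sum1}, the terms with $I\cap L\neq\emptyset$ vanish by Lemma~\ref{th:nonempty}, and hence one may restrict to $I\subseteq N\setminus L$. This is the wrong vanishing condition. Lemma~\ref{th:nonempty} kills $w_{I\cup L,\,N\setminus I}$ precisely when $(I\cup L)\cap(N\setminus I)\neq\emptyset$, and that intersection equals $L\setminus I$, not $I\cap L$. So the surviving terms are those with $L\subseteq I$, not those with $I\subseteq N\setminus L$. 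Your side claim $w_{A,B}=w_{A,\,B\setminus A}$ is also false in general: when $A\cap B\neq\emptyset$ Lemma~\ref{th:nonempty} gives $w_{A,B}=0$, while $w_{A,\,B\setminus A}$ need not vanish.

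Once you use the correct condition, the argument closes immediately without any further index-matching: restricting Lemma~\ref{th:sum1} to $I\supseteq L$ gives $w_L=\sum_{I\supseteq L} w_{I,\,N\setminus I}=\sum_{K\supseteq L} w^N_K$, which is exactly the entry you computed. This is what the paper does (its one-line proof invokes Lemma~\ref{th:sum1} and~\ref{th:nonempty} for precisely this step). So your detour through the substitution $K=L\cup I'$ and the attempted adjustment of second indices is unnecessary; the identity you need drops straight out of Lemma~\ref{th:sum1} once you apply Lemma~\ref{th:nonempty} with the right intersection.
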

\begin{proof}%[Proof of Lemma \ref{th:diagonalization}]
%Recall that $(CDC^{\top})_{I,J}=\sum_{U,W}C_{I,U}C_{J,W}D_{U,W}$. Therefore,
$\left(Z Diag(w^N) Z^{\top}\right)_{I,J} = \sum_{U\subseteq N} Z_{I,U} Z_{J,U}w_{U,N\setminus U}$, which is equal to $\sum_{\substack{U\subseteq N\\ I\cup J\subseteq U}} w_{U,N\setminus U}= \sum_{U\subseteq N\setminus (I\cup J)} w_{U\cup I\cup J,N\setminus U}=w_{I\cup J}$, where the latter equality follows from Lemma~\ref{th:sum1} and~\ref{th:nonempty}.
\end{proof}

By the previous lemma it follows that for $y\in \mathbb{R}^{\PS(N)}$, and for any constraint $g(x)\geq 0$, we have the following congruence transformations (recall $z = g*y$ and $z^N,y^N$ are defined by \eqref{eq:totprob}).
\begin{eqnarray}
M_n(y) &\cong& Diag(y^N)\\
M_n(z) &\cong& Diag(z^N)
\end{eqnarray}
%The following observation is crucial in proving that at level $n$-th feasible solutions of $\LAS_n(\K)$ are convex combinations of integral solutions.
%%%%%%%%%%%%%%%%%%%%%%%%%%
\begin{lemma}\label{th:constentry}
For any $g,y\in \mathbb{R}^{\PS(N)}$ and $z = g*y$ we have
\begin{equation}
z^N_I = \left(\sum_{K\subseteq I}g_K\right)\cdot y^N_I
\end{equation}
\end{lemma}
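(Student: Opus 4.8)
The plan is to unwind the definition of $z = g*y$ and the ``total probability'' operator $z^N_I = z_{I,N\setminus I}$, and then reorganize the resulting double sum so that the Möbius-type alternating signs collapse in exactly the same way as in the proof of Lemma~\ref{th:sum1}. Concretely, I would start from
\begin{equation*}
z^N_I = z_{I,N\setminus I} = \sum_{H\subseteq N\setminus I} (-1)^{|H|} z_{H\cup I}
= \sum_{H\subseteq N\setminus I} (-1)^{|H|} \sum_{K\subseteq N} g_K\, y_{H\cup I\cup K},
\end{equation*}
using $(g*y)_J = \sum_{K\subseteq N} g_K y_{J\cup K}$. The goal is to show the right-hand side equals $\left(\sum_{K\subseteq I} g_K\right) y^N_I = \left(\sum_{K\subseteq I} g_K\right) \sum_{H\subseteq N\setminus I}(-1)^{|H|} y_{H\cup I}$.

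First I would swap the order of summation to write $z^N_I = \sum_{K\subseteq N} g_K \left(\sum_{H\subseteq N\setminus I}(-1)^{|H|} y_{H\cup I\cup K}\right)$. The inner sum I want to recognize as $y_{I\cup K, N\setminus I}$ (up to the set over which $H$ ranges). The key case distinction is on how $K$ meets $N\setminus I$. Split $K = K_1 \cup K_2$ with $K_1 = K\cap I$ and $K_2 = K\setminus I \subseteq N\setminus I$. If $K_2 \neq \emptyset$, then $H\cup I\cup K$ always contains the fixed nonempty set $K_2\subseteq N\setminus I$, and pairing each $H$ not containing a chosen element of $K_2$ with $H$ together with that element shows the inner sum telescopes to $0$ — this is precisely the mechanism of Lemma~\ref{th:nonempty} (indeed the inner sum is $y_{I\cup K,\,N\setminus I}$ and $(I\cup K)\cap(N\setminus I)\supseteq K_2\neq\emptyset$). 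If $K_2 = \emptyset$, i.e. $K\subseteq I$, then $I\cup K = I$ and the inner sum is exactly $\sum_{H\subseteq N\setminus I}(-1)^{|H|} y_{H\cup I} = y_{I,N\setminus I} = y^N_I$. Hence only the terms with $K\subseteq I$ survive, giving $z^N_I = \left(\sum_{K\subseteq I} g_K\right) y^N_I$.

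I would present this cleanly by invoking Lemma~\ref{th:nonempty} directly: after swapping sums, $z^N_I = \sum_{K\subseteq N} g_K\, y_{I\cup K,\, N\setminus I}$, and $y_{I\cup K, N\setminus I} = 0$ whenever $(I\cup K)\cap (N\setminus I)\neq\emptyset$, i.e. whenever $K\not\subseteq I$; and for $K\subseteq I$ it equals $y_{I,N\setminus I} = y^N_I$. This keeps the computation to a few lines. The only mild subtlety — the ``hard part,'' such as it is — is being careful that the index set over which $H$ ranges ($N\setminus I$) is the right one to match the definition $w_{A,B} = \sum_{H\subseteq B}(-1)^{|H|} w_{H\cup A}$ with $A = I\cup K$ and $B = N\setminus I$: since $K\subseteq N$ is arbitrary, $H\cup I\cup K$ is summed over $H\subseteq N\setminus I$, which is exactly $y_{I\cup K, N\setminus I}$ only because we are not restricting $H$ to avoid $K$; Lemma~\ref{th:nonempty} is what makes this harmless. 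Everything else is bookkeeping.
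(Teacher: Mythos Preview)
Your proof is correct and follows essentially the same route as the paper: expand $z^N_I$ via the definitions, swap the two sums, and observe that the inner alternating sum $\sum_{H\subseteq N\setminus I}(-1)^{|H|}y_{K\cup H\cup I}$ vanishes whenever $K\not\subseteq I$ (the paper states this as a fact, you cite Lemma~\ref{th:nonempty} for it), leaving only the terms with $K\subseteq I$, each contributing $y^N_I$.
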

%%%%%%%%%%%%%%%%%%%%%%%%%%
\begin{proof}%[Proof of Lemma \ref{th:constentry}]
%\begin{eqnarray*}
By definition, $z^N_I=\sum_{H\subseteq [n]\setminus I} (-1)^{|H|} z_{H\cup I}=\sum_{H\subseteq [n]\setminus I} (-1)^{|H|} \sum_{K\subseteq N} g_K \cdot y_{K\cup H\cup I}=  \sum_{K\subseteq N} \left( g_K \sum_{H\subseteq [n]\setminus I} (-1)^{|H|} y_{K\cup H\cup I}\right)= \left( \sum_{K\subseteq I} g_K\right) \sum_{H\subseteq [n]\setminus I} (-1)^{|H|} y_{ H\cup I}$
%\end{eqnarray*}
and using the fact that for any $K\not \subseteq I$ we have $\sum_{H\subseteq [n]\setminus I} (-1)^{|H|} y_{K\cup H\cup I}=0$.
\end{proof}

%%%%%%%%%%%%%%%%%%%%%%
%\paragraph{Feasibility Conditions.}
\begin{lemma}
For any $g,y\in \mathbb{R}^{\PS(N)}$ and $z = g*y$ we have
%$M_n(y)\succeq 0$ and $M_n(z)\succeq 0$ if and only if the following holds:
\begin{eqnarray}
M_n(y)\succeq 0 &\Longleftrightarrow& \left(y^N_I \geq 0 \quad \forall I\subseteq N \right)\label{vars}\\
M_n(z)\succeq 0 &\Longleftrightarrow& \left(z^N_I = \left(\sum_{K\subseteq I}g_K\right)\cdot y^N_I \geq 0 \quad \forall I\subseteq N \right) \label{constraints}
\end{eqnarray}
\end{lemma}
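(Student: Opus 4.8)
The plan is to derive both equivalences directly from the congruence transformations established just above, namely $M_n(y)\cong Diag(y^N)$ and $M_n(z)\cong Diag(z^N)$, together with the fact that congruent matrices have the same inertia (Sylvester's law of inertia, which is among the linear-algebra facts collected in the appendix). Since a diagonal matrix is PSD if and only if each diagonal entry is nonnegative, the first equivalence \eqref{vars} is immediate: $M_n(y)\succeq 0$ iff $Diag(y^N)\succeq 0$ iff $y^N_I\geq 0$ for all $I\subseteq N$.

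For \eqref{constraints} I would argue in exactly the same way, applying the congruence to $A=M_n(z)$ with $z=g*y$: we get $M_n(z)\succeq 0$ iff $Diag(z^N)\succeq 0$ iff $z^N_I\geq 0$ for all $I\subseteq N$. Then I would invoke Lemma~\ref{th:constentry}, which gives the closed form $z^N_I=\left(\sum_{K\subseteq I}g_K\right)y^N_I$, to rewrite the condition $z^N_I\geq 0$ in the stated form. No additional work is needed beyond quoting these two facts; the substance of the lemma has already been done in Lemma~\ref{th:diagonalization} (which yields the congruences via $Z\,Diag(w^N)\,Z^{\top}=M_n(w)$, so that $Z^{-1}M_n(w)(Z^{-1})^{\top}=Diag(w^N)$) and Lemma~\ref{th:constentry}.

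The only point requiring a word of care — and the closest thing to an obstacle — is making explicit that congruence via the invertible matrix $Z^{-1}$ genuinely preserves positive semidefiniteness in both directions: if $B=R A R^{\top}$ with $R$ invertible, then $v^{\top}Bv=(R^{\top}v)^{\top}A(R^{\top}v)$ and $R^{\top}$ is a bijection on $\mathbb{R}^{\PS(N)}$, so $B\succeq 0$ iff $A\succeq 0$. Here $R=Z^{-1}$ is invertible because $Z$ is (its inverse being the Möbius matrix \eqref{mobiusmatrix}), so the argument applies verbatim to both $M_n(y)$ and $M_n(z)$. I would state this once and then apply it twice. Everything else is a direct substitution, so I expect the proof to be only a couple of lines.

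\begin{proof}
Since $Z$ is invertible with inverse the M\"obius matrix \eqref{mobiusmatrix}, for any invertible $R$ and any symmetric $A$ we have that $RAR^{\top}\succeq 0$ if and only if $A\succeq 0$, because $v^{\top}(RAR^{\top})v=(R^{\top}v)^{\top}A(R^{\top}v)$ and $v\mapsto R^{\top}v$ is a bijection. Applying this with $R=Z^{-1}$ and using Lemma~\ref{th:diagonalization} (which gives $Z^{-1}M_n(w)(Z^{-1})^{\top}=Diag(w^N)$ for every $w\in\mathbb{R}^{\PS(N)}$), we obtain $M_n(y)\succeq 0$ iff $Diag(y^N)\succeq 0$, and $M_n(z)\succeq 0$ iff $Diag(z^N)\succeq 0$. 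A diagonal matrix is PSD precisely when all its diagonal entries are nonnegative, so $M_n(y)\succeq 0$ iff $y^N_I\geq 0$ for all $I\subseteq N$, proving \eqref{vars}. Likewise $M_n(z)\succeq 0$ iff $z^N_I\geq 0$ for all $I\subseteq N$; substituting the identity $z^N_I=\left(\sum_{K\subseteq I}g_K\right)\cdot y^N_I$ from Lemma~\ref{th:constentry} yields \eqref{constraints}.
\end{proof}
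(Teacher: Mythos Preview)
Your proof is correct and follows exactly the approach the paper intends: the lemma is stated there without an explicit proof because it is an immediate consequence of the congruences $M_n(y)\cong Diag(y^N)$ and $M_n(z)\cong Diag(z^N)$ (from Lemma~\ref{th:diagonalization}) together with Lemma~\ref{th:constentry} and the fact that congruence preserves positive semidefiniteness (Lemma~\ref{th:simop}). Your write-up simply makes these steps explicit.
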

It follows that if $y^N_I>0$ then $\sum_{K\subseteq I}g_K\geq 0$, i.e. the solution obtained by setting $y_K=1$ for every $K\subseteq I$ and $y_H=0$ for every $H\not \subseteq I$ satisfies constraint $g(y)\geq 0$
(viceversa if $\sum_{K\subseteq I}g_K< 0$,  then we must have $y^N_I =0$).

\longer{
\begin{example}
Let $g(x)= 3x_1+x_2-x_3 - 3$, where $g(x)\geq0$ is a constraint. By conditions~\eqref{vars} and \eqref{constraints} it follows that $y^N_I = 0$ for any $I\subseteq N$ such that $\sum_{K\subseteq I}g_K<0$. For example, for $I=\{1,3\}$ we have $\sum_{K\subseteq I}g_K = 3-1-3<0$ and therefore $y^N_{\{1,3\}}=0$.
\end{example}
}

%%%%%%%%%%%%%%%%%%%
\begin{lemma}\label{th:decompositionLeveln}
%%%%%%%%%%%%%%%%%%%
The projection on $\K$ of any feasible solution $y\in\LAS_n(\K)$, i.e. $\{y_j:j\in N\}$, can be seen as convex combination of integral solutions that are feasible for $\K$. 
\end{lemma}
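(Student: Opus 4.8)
The plan is to combine the diagonalization of the variable moment matrix (Lemma~\ref{th:diagonalization} and condition~\eqref{vars}) with the entrywise formula for the constraint moment matrix (Lemma~\ref{th:constentry} and condition~\eqref{constraints}) to exhibit the projection $\{y_j : j\in N\}$ explicitly as a convex combination of feasible integral points. First I would recall that by Lemma~\ref{th:diagonalization} applied to $w=y$ we have $M_n(y)\cong Diag(y^N)$, so feasibility of $y\in\LAS_n(\K)$ forces $y^N_I\ge 0$ for every $I\subseteq N$; and by Lemma~\ref{th:sum1} with $J=\emptyset$ together with $y_\emptyset=1$ we get $\sum_{I\subseteq N} y^N_I = 1$. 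Hence $\{y^N_I\}_{I\subseteq N}$ is a probability distribution over the $2^n$ subsets of $N$, and I will associate to each $I$ the integral point $\chi^I\in\{0,1\}^n$ defined by $\chi^I_j = 1$ iff $j\in I$.

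Next I would verify that each $\chi^I$ appearing with positive weight ($y^N_I > 0$) lies in $\K$. For this, fix a constraint $g_\ell(x)\ge 0$ and set $z = g_\ell * y$. Since $y\in\LAS_n(\K)$ gives $M_n(z)\succeq 0$, condition~\eqref{constraints} together with Lemma~\ref{th:constentry} yields $z^N_I = \big(\sum_{K\subseteq I} (g_\ell)_K\big)\cdot y^N_I \ge 0$ for all $I$. Because each variable occurs with degree at most one in $g_\ell$ (as assumed after~\eqref{eq:polytopeK}, using $x_i^2 = x_i$), the value $g_\ell(\chi^I)$ is exactly $\sum_{K\subseteq I}(g_\ell)_K$. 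Thus whenever $y^N_I > 0$ we may divide and conclude $g_\ell(\chi^I)\ge 0$; since this holds for every $\ell$, and since $\chi^I\in\{0,1\}^n\subseteq[0,1]^n$ trivially, $\chi^I\in\K$, i.e. $\chi^I$ is a feasible integral solution.

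It remains to check that the convex combination $\sum_{I\subseteq N} y^N_I\,\chi^I$ reproduces the projection $\{y_j\}_{j\in N}$. This is precisely Lemma~\ref{th:sum1} specialized to a singleton $J=\{j\}$: it gives $y_j = \sum_{I\subseteq N} y^N_{I\cup\{j\}} = \sum_{I\subseteq N : j\in I} y^N_I = \sum_{I\subseteq N} y^N_I\, \chi^I_j$, where the middle equality uses Lemma~\ref{th:nonempty} to drop the terms with $j\notin I\cup\{j\}$ that vanish, i.e. to reindex over supersets of $\{j\}$. Collecting the three parts — $\{y^N_I\}$ is a probability distribution, every $\chi^I$ with $y^N_I>0$ is in $\K$, and the combination recovers $y_j$ for each $j$ — establishes that $\{y_j:j\in N\}$ is a convex combination of feasible integral solutions, as claimed.

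\textbf{Main obstacle.} None of the steps is deep; the only place to be careful is the bookkeeping that turns $\sum_{K\subseteq I}(g_\ell)_K$ into the honest evaluation $g_\ell(\chi^I)$, which relies on the normalization (made earlier in the paper) that each $g_\ell$ is multilinear — without that reduction the identity $g_\ell(\chi^I) = \sum_{K\subseteq I}(g_\ell)_K$ would fail and the implication ``$y^N_I>0 \Rightarrow \chi^I\in\K$'' would not be immediate. Everything else is a direct application of Lemmas~\ref{th:nonempty}, \ref{th:sum1}, \ref{th:diagonalization} and~\ref{th:constentry}.
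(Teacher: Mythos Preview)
Your proof is correct and follows essentially the same route as the paper's: both establish that $\{y^N_I\}_{I\subseteq N}$ is a probability distribution via \eqref{vars} and Lemma~\ref{th:sum1}, use \eqref{constraints} to certify feasibility of the integral points carrying positive weight, and decompose each $y_j$ via Lemma~\ref{th:sum1} together with Lemma~\ref{th:nonempty}. One minor notational slip to fix: Lemma~\ref{th:sum1} with $J=\{j\}$ literally gives $y_j=\sum_{I\subseteq N} y_{I\cup\{j\},\,N\setminus I}$, not $\sum_{I\subseteq N} y^N_{I\cup\{j\}}$ (the latter sum double-counts every superset of $\{j\}$ and equals $2y_j$); after applying Lemma~\ref{th:nonempty} both expressions reduce to $\sum_{I\ni j} y^N_I$, so your conclusion is unaffected.
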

\begin{proof}%[Proof of Lemma \ref{th:decompositionLeveln}]
For any $j\in N$,
%\begin{eqnarray*}
$y_j = \sum_{I\subseteq N} y_{I\cup \{j\}, N\setminus I}
= \sum_{\substack{I\subseteq N\\ y_{I, N\setminus I}>0 }} \frac{y_{I\cup \{j\}, N\setminus I} }{y_{I, N\setminus I}} \cdot y_{I, N\setminus I}$
%\end{eqnarray*}
by Lemma~\ref{th:sum1}.
Note that  for any $y_{I, N\setminus I}>0$ we have $\frac{y_{I\cup \{j\}, N\setminus I} }{y_{I, N\setminus I}}=1$ if  $j\in I$ and zero otherwise.
\longer{
\begin{equation}
\frac{y_{I\cup \{j\}, N\setminus I} }{y_{I, N\setminus I}} =\left\{
\begin{array}{ll}
1 & \text{if } j\in I\\
0 & \text{else}.
\end{array}\right.
\end{equation}
}
By Lemma~\ref{th:sum1} we have $\sum_{I\subseteq N} y_{I, N\setminus I}=1$ and by \eqref{vars} we have $y_{I, N\setminus I}\geq 0$. It follows that solution $\{y_j:j\in N\}$ can be seen as a convex combinations of integral solutions: for every $y_{I, N\setminus I}>0 $, the integral solutions are those that are obtained by setting to one all the variables with indexes in $I$ and zero otherwise. Note that these integral solutions satisfy the constraints of $\K$ since $y_{I, N\setminus I}>0 $ implies that the constraints are satisfied by using \eqref{constraints}.
\end{proof}

%%%%%%%%%%%%%%%%%%%%%%%%%%%%%%%%%%%%%%%%%%%%%%%%%%
%%%%%%%%%%%%%%%%%%%
\section{The Lasserre Hierarchy at Level $t$}\label{Sect:levelt}
%%%%%%%%%%%%%%%%%%%
In this section we translate the arguments of level $n$ to level $t$, for any $1\leq t \leq n$, by providing a ``partial'' diagonalization of the moment matrices. We will use these congruent transformations in the gap analyses. %
The following example introduces the main concepts.
%%%%%%%%%%%%%
%\subsubsection{Example}
%%%%%%%%%%%%%
\begin{example}\label{example:almostdiag}
Recall that $M_2(z)$ from Example~\ref{examplediag} is equal to the following.

{\tiny
$$
\left(
\begin{array}{cccc}
z_{\emptyset} & z_{1} & z_{2} & z_{12} \\ 
z_{1} & z_{1} & z_{12} & z_{12} \\
z_{2} & z_{12} & z_{2} & z_{12} \\ 
z_{12} & z_{12} & z_{12} & z_{12}
\end{array}
\right)
 = \left(
\begin{array}{cccc}
1 & 1 & 1 & 1 \\
0 & 1 & 0 & 1 \\
0 & 0 & 1 & 1 \\
0 & 0 & 0 & 1
\end{array}
\right) \left(
\begin{array}{cccc}
z_{\emptyset}-z_1-z_2+z_{12} & 0 & 0 & 0 \\
0 & z_{1}-z_{12} & 0 & 0 \\
0 & 0 & z_{2}-z_{12} & 0 \\
0 & 0 & 0 & z_{12}
\end{array}
\right)
\left(
\begin{array}{cccc}
1 & 0 & 0 & 0 \\
1 & 1 & 0 & 0 \\
1 & 0 & 1 & 0 \\
1 & 1 & 1 & 1
\end{array}
\right)
$$
}
$M_1(z)$ is obtained from $M_2(z)$ by removing the last row and column and therefore it is equal to 

{\tiny
$$
\left(
\begin{array}{ccc}
z_{\emptyset} & z_{1} & z_{2} \\ 
z_{1} & z_{1} & z_{12}  \\
z_{2} & z_{12} & z_{2}  
\end{array}
\right)
 =\underbrace{ \left(
\begin{array}{cccc}
1 & 1 & 1  \\
0 & 1 & 0  \\
0 & 0 & 1  
\end{array}
\right)}_{A} \left(
\begin{array}{ccc}
z_{\emptyset}-z_1-z_2+z_{12} & 0 & 0 \\
0 & z_{1}-z_{12} & 0 \\
0 & 0 & z_{2}-z_{12}
\end{array}
\right)\underbrace{
\left(
\begin{array}{ccc}
1 & 0 & 0  \\
1 & 1 & 0  \\
1 & 0 & 1 
\end{array}
\right)}_{A^{\top}} + z_{12}
\left(
\begin{array}{c}
1 \\
1 \\
 1
 \end{array}
\right) 
\left(
\begin{array}{ccc}
1 & 1 & 1 
\end{array}
\right)
$$
}
The inverse of $A$ is the M\"{o}bius matrix of $\mathcal{P}_1(N)$ and is equal to
 $A^{-1} {\tiny =
\left(
\begin{array}{cccc}
1 & -1 & -1  \\
0 & 1 & 0  \\
0 & 0 & 1  
\end{array}
\right)}
$
and by multiplying the left-hand side of $M_1(z)$  by $A^{-1}$ and the right-hand side by $(A^{-1})^{\top}$, we obtain the following matrix that is congruent to $M_1(z)$:
{\tiny
$$
\left(
\begin{array}{cccc}
1 & -1 & -1  \\
0 & 1 & 0  \\
0 & 0 & 1  
\end{array}
\right)
\left(
\begin{array}{ccc}
z_{\emptyset} & z_{1} & z_{2} \\ 
z_{1} & z_{1} & z_{12}  \\
z_{2} & z_{12} & z_{2}  
\end{array}
\right)
\left(
\begin{array}{cccc}
1 & 0 & 0  \\
-1 & 1 & 0  \\
-1 & 0 & 1  
\end{array}
\right)
 =\left(
\begin{array}{ccc}
z_{\emptyset}-z_1-z_2+z_{12} & 0 & 0 \\
0 & z_{1}-z_{12} & 0 \\
0 & 0 & z_{2}-z_{12}
\end{array}
\right)+ z_{12}\underbrace{
\left(
\begin{array}{cccc}
1 & -1 & -1  \\
-1 & 1 & 1  \\
-1 & 1 & 1  
\end{array}
\right)}_{\mbox{matrix of rank one}}
$$
}

Note that $M_1(z)$ is congruent to a matrix that differs from a diagonal matrix by a matrix of rank one.
In the next section we will see that this fact can be generalized to any level and for any number of variables.

The fact that the ``distance'' from the diagonal matrix can be expressed by matrices of rank one is an intriguing property and it will play a fundamental role in our analysis.  
%, i.e. is an \emph{almost} diagonal matrix (see \cite{wilf60}).
%where one is the diagonal submatrix of the diagonal matrix congruent to the full moment matrix.
\end{example}
%%%%%%%%%%%%%%%%%%%%%%%%%%%%%%%%%%%%%%%%%%%%%%%%

%%%%%%%%%%%%%%%%%%%%%%%%%%%%%%%%%%%%%%
\subsection{The Lasserre Hierarchy at Level $t$ in Almost Diagonal Form}\label{Sect:almostdiag}
%%%%%%%%%%%%%%%%%%%%%%%%%%%%%%%%%%%%%%%
In 1960, Wilf \cite{wilf60} introduced the concept of almost diagonal matrices: a matrix $A$ is \emph{almost diagonal} if there exists a diagonal matrix $D$ and vectors $x$ and $y$ such that $A= D + xy^{\top}$, i.e. $A$ differs from a diagonal matrix by a matrix of rank one. Generalizing this, we say that $A$ is \emph{$k$-almost diagonal} if it differs from a diagonal matrix by $k$ matrices of rank one (we will omit $k$ for brevity).

For any $w\in\mathbb{R}^{\PS(N)}$, let $Diag(w^N,t)$ be the submatrix of $Diag(w^N)$ indexed by $\PS_t(N)$.\footnote{Vector $w$ is intended to be either the vector $y\in\mathbb{R}^{\PS(N)}$ of variables or the shifted vector $g*y$ for any $g\in\mathbb{R}^{\PS(N)}$.} In the following we show that any moment matrix $M_t(w)$ is congruent to a matrix $M_t^*(w^N)$ that differs from $Diag(w^N,t)$ by $k$ matrices of rank one, where $k= |\PS(N)\setminus \PS_{t}(N)|$. This gives a different view of the Lasserre hierarchy at level~$t$. 

We will refer to this re-formulation as the \emph{Lasserre hierarchy in almost diagonal form} (and $M_t^*(w^N)$ as the \emph{almost diagonal decomposition} of $M_t(w)$). 

%%%%%%%%%LEMMA ALMOST DIAGONAL FORM %%%%%%%%%%%%
%%%%%%%%%LEMMA ALMOST DIAGONAL FORM %%%%%%%%%%%%
\begin{lemma}\label{th:lassalmostdiag}[Almost Diagonal Form]
Let $G(J)\in \mathbb{R}^{\PS_t(N)}$ be a vector with the $I$-th entry equal to 
\begin{equation*}
G(J)_{I} =\left\{
\begin{array}{ll}
(-1)^{t-|I|}{|J|-|I|-1 \choose t-|I|} & I\subseteq J\\
%\sum_{\substack{I\subseteq K \subseteq J \\ K\in \PS_t(N)}} (-1)^{|K\setminus I|} & I\subseteq J\\
0 & \text{otherwise}
\end{array}
\right.
\end{equation*}
For any $w\in\mathbb{R}^{\PS(N)}$ and $t= 0,1,\ldots, n$
\begin{equation}%\label{eq:almostdiag}
\boxed{
M_t(w)\cong M_t^*(w^N) = Diag(w^N,t) +\sum_{J\in \PS(N)\setminus \PS_t(N)} w_J^N R(J)} \label{eq:conalmostdiag}
\end{equation}
where $R(J) =G(J)G(J)^{\top}$ is a matrix (of rank one).
\end{lemma}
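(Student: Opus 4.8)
The plan is to mimic the level-$n$ diagonalization of Lemma~\ref{th:diagonalization}, but truncated to $\PS_t(N)$. At level $n$ we had $M_n(w) = Z\,Diag(w^N)\,Z^\top$, where the $(I,U)$-entry of $Z$ is $1$ iff $I\subseteq U$. Truncating to $\PS_t(N)$ means restricting the row index $I$ (and symmetrically $J$) to subsets of size at most $t$, while the ``inner'' index $U$ still runs over all of $\PS(N)$. So the natural first step is to split the outer product $Z\,Diag(w^N)\,Z^\top$ according to whether the inner index $U$ lies in $\PS_t(N)$ or in $\PS(N)\setminus\PS_t(N)$: the part with $U\in\PS_t(N)$ will, after a congruence by the M\"obius matrix of $\PS_t(N)$ (the matrix $A^{-1}$ of Example~\ref{example:almostdiag}), become $Diag(w^N,t)$, and each $U$ with $|U|>t$ contributes exactly the term $w_U^N\,R(U)$.

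Concretely, I would write $M_t(w)_{I,J} = w_{I\cup J} = \sum_{U\supseteq I\cup J} w^N_U$ (by Lemmas~\ref{th:sum1} and~\ref{th:nonempty}, exactly as in the proof of Lemma~\ref{th:diagonalization}), valid for $I,J\in\PS_t(N)$. Split the sum: $\sum_{U\in\PS_t(N),\,U\supseteq I\cup J} w^N_U + \sum_{U\notin\PS_t(N),\,U\supseteq I\cup J} w^N_U$. The first sum is $(B\,Diag(w^N,t)\,B^\top)_{I,J}$ where $B$ is the truncated zeta matrix of $\PS_t(N)$ (entries $B_{I,U}=1$ iff $I\subseteq U$, for $I,U\in\PS_t(N)$); applying the congruence $A = B^{-1}$ (the M\"obius matrix of $\PS_t(N)$, whose entries are $(-1)^{|U\setminus I|}$ for $I\subseteq U$) turns this piece into $Diag(w^N,t)$. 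For the second sum, each fixed $U$ with $|U|>t$ contributes the rank-one matrix $w^N_U\cdot v_U v_U^\top$ where $(v_U)_I = \mathbf{1}[I\subseteq U]$ on the truncated index set; after the same congruence by $A$ this becomes $w^N_U\,(A v_U)(A v_U)^\top$. So the only remaining task is to check that $A v_U = G(U)$, i.e. that
\[
(A v_U)_I \;=\; \sum_{I\subseteq K\subseteq U,\ |K|\le t} (-1)^{|K\setminus I|} \;=\; (-1)^{t-|I|}\binom{|U|-|I|-1}{\,t-|I|\,}
\]
for $I\subseteq U$ (and $0$ otherwise, which is immediate since $v_U$ is supported on subsets of $U$ and $A$ is upper-triangular w.r.t.\ inclusion). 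This is a single combinatorial identity: summing $(-1)^{j}\binom{m}{j}$ for $j=0,\dots,t-|I|$, where $m=|U|-|I|$, equals $(-1)^{t-|I|}\binom{m-1}{t-|I|}$ — the standard partial-alternating-sum-of-binomial-coefficients identity, provable by induction on $t-|I|$ (or by the hockey-stick / telescoping $\binom{m-1}{j}+\binom{m-1}{j-1}=\binom{m}{j}$ argument). Note that when $|U|\le t$ there is nothing to prove since the constraint $|K|\le t$ is vacuous and we recover the full level-$n$ M\"obius column; the regime $|U|>t$ is exactly where the truncation bites and produces the binomial-with-a-shift coefficient.

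The main obstacle is the combinatorial identity for $(A v_U)_I$; everything else is bookkeeping that directly parallels the level-$n$ argument already carried out in the excerpt. I expect the cleanest route is induction on $s := t-|I|$ with $m=|U|-|I|$ fixed: the base case $s=0$ gives $1 = \binom{m-1}{0}$, and the step uses $\sum_{j=0}^{s}(-1)^j\binom{m}{j} = \sum_{j=0}^{s-1}(-1)^j\binom{m}{j} + (-1)^s\binom{m}{s}$ together with $(-1)^{s-1}\binom{m-1}{s-1} + (-1)^s\binom{m}{s} = (-1)^s\left(\binom{m}{s}-\binom{m-1}{s-1}\right) = (-1)^s\binom{m-1}{s}$. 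One should also double-check the edge behaviour of the binomial coefficient $\binom{|J|-|I|-1}{t-|I|}$ when $|J|=|I|$ (then $|J|>t$ is impossible, so this case does not arise among the rank-one terms) and confirm consistency of the $t=n$ case with Lemma~\ref{th:diagonalization} (there $\PS(N)\setminus\PS_t(N)=\emptyset$ and the formula reduces to $M_n(w)\cong Diag(w^N)$, as it must).
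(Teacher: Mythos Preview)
Your proposal is correct and is essentially the paper's proof: the paper uses the block decomposition $Z=\begin{pmatrix}A(t)&B(t)\\ C(t)&D(t)\end{pmatrix}$ to write $M_t(w)=A(t)\,Diag(w^N,t)\,A(t)^\top+B(t)\,Diag(w^N,\overline t)\,B(t)^\top$ and then applies the congruence by $A(t)^{-1}$, which is exactly your truncated zeta/M\"obius split, with your $v_U$ being the $U$-th column of $B(t)$ and your $A v_U$ the $U$-th column of $G=A(t)^{-1}B(t)$. The final combinatorial identity $\sum_{\ell=0}^{t-|I|}(-1)^\ell\binom{|J|-|I|}{\ell}=(-1)^{t-|I|}\binom{|J|-|I|-1}{t-|I|}$ is quoted in the paper without the induction you spell out, so your write-up is a touch more detailed but follows the same route.
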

%%%%%%%%%%%%%%%%%%%%%%%%%%%%%%%%
%%%%%%%%%LEMMA ALMOST DIAGONAL FORM %%%%%%%%%%%%
%%%%%%%%%%%%%%%%%%%%%%%%%%%%%%
\begin{proof}
For any $t=0,1,\ldots,n$, consider the following block decomposition of the zeta matrix~\eqref{zetamatrix}:
\begin{eqnarray*}
Z&=&
\left(
\begin{array}{cc}
A{(t)} & B{(t)} \\
 C{(t)} & D{(t)}
\end{array}
\right)
\end{eqnarray*}
where $A{(t)}\in \mathbb{R}^{\PS_t(N)\times \PS_t(N)}$ is the square submatrix of $Z$ indexed by $\PS_t(N)$, and the submatrices $B(t),C(t),D(t)$ are defined accordingly.
Note that at level $t$, matrix $M_t(w)$ is equal to the square submatrix of $M_n(w)$ indexed by $\PS_t(N)$. 

Recall that $Diag(w^N,h)$ is defined as the submatrix of $Diag(w^N)$ indexed by $\PS_h(N)$. Let $Diag(w^N,\overline{h})$ be the submatrix of $Diag(w^N)$ indexed by $\PS(N)\setminus \PS_h(N)$.
It follows that
\begin{eqnarray*}
M_{t}(w) &=& A(t) Diag(w^N,{t}) A(t)^{\top}+B(t) Diag(w^N,{\overline{t}}) B(t)^{\top}
\end{eqnarray*}
Since matrix $A(t)$ is also invertible, then $M_t(w) \cong  M_t^*(w)$ where:
\begin{eqnarray}\label{eq:almostdiag}
M_t^*(w)&=&Diag(w^N,t) +G Diag(w^N,{\overline{t}}) G^{\top} \label{eq:conalmostdiag}\\
G&=&A(t)^{-1}B(t)
\end{eqnarray}
The claim follows by showing that for any $I\in \PS_t(N)$ and $J\in \PS(N)\setminus \PS_t(N)$ the $(I,J)$-th entry of matrix $G$ is equal to (in the claim $G(J)$ denotes the $J$-th column of $G$)
\begin{equation}
G_{I,J} =\left\{
\begin{array}{ll}
(-1)^{t-|I|}{|J|-|I|-1 \choose t-|I|} & I\subseteq J\\
%\sum_{\substack{I\subseteq K \subseteq J \\ K\in \PS_t(N)}} (-1)^{|K\setminus I|} & I\subseteq J\\
0 & \text{otherwise}
\end{array}
\right.
\end{equation}

%%%%%%%%%%%%%%%%%%%%%%%%%%%
By definition $G(t)_{I,J} =\sum_{K\in \PS_t(N)} A(h)^{-1}_{I,K} B(h)_{K,J}$. Note that $A(h)^{-1}_{I,K}$ is different from zero (and equal to $(-1)^{|K\setminus I|}$) when $I\subseteq K$, and $B(h)_{K,J}$ is different from zero (and equal to one) when $K\subseteq J$. Then note that $\sum_{\ell=0}^w (-1)^{\ell}{n \choose \ell}=(-1)^w {n-1\choose w}$ (assuming $ {n-1 \choose w}=0$ for any $w\geq n$) and therefore
\begin{eqnarray*}
\sum_{\substack{I\subseteq K \subseteq J \\ K\in \PS_t(N)}} (-1)^{|K\setminus I|}&=&\sum_{\ell=0}^{t- |I|} (-1)^{\ell}{|J|- |I|\choose \ell}=(-1)^{t-|I|}{|J|-|I|-1 \choose t-|I|}
\end{eqnarray*}
\end{proof}
%
%Note that matrix $M_t^*(w^N)$ is the sum of a diagonal matrix, $Diag(w^N,t)$, plus matrices $R(J)$ of rank one multiplied by some $w^N_J$. 
%
\begin{remark}
In order to avoid misinterpretation, we remark the following.
One difference between matrix $M_t(w)$ and $M_t^*(w^N)$ is that matrix $M_t(w)$ is function of variables in $\{w_I:I\in \PS_t(N)\}$, whereas matrix $M_t^*(w^N)$ is function of variables in $\{w_I^N:I\subseteq N\}$.

Every $w_I^N$ that appears in $M_t^*(w^N)$ is either $y_I^N$ (if $w=y$), or $z^N_I$ (if $z=g*y=w$) where $z^N_I= g(I)y_I^N$ and $g(I)$ denote the value of $g(x)$ when we set to 1 all the variables in $\{x_{i}:i\in I\}$ and to zero the remaining. The relationships between $w_I$ and $w_I^N$ are given by equations \eqref{eq:probevent}, \eqref{eq:totprob} and \eqref{sum1}.
So in transforming matrix $M_t(w)$ to $M_t^*(w^N)$ we have operated a change of basis. 

Note that any $y_I^N$ (also) depends on moments $y_J$ with $|J|>t$. But matrix $M_t(w)$ does not depend on moments $y_J$ with $|J|>t$, so also the congruent matrix $M_t^*(w^N)$ does not (by using \eqref{eq:probevent} one can check that higher order moments $y_J$ with $|J|>t$ cancel out).

One reason for using variables $\{y_I^N: I\subseteq N\}$ is because they have a very nice interpretation as (pseudo)probability (see \cite{BarakBHKSZ12,BarakKS13}): $y_I^N$ can be seen as the (pseudo)probability of the integral solution $\{x_{i}=1:i\in I\}$ (and zero the remaining variables); $z^N_I=g(I)y_I^N$ can be seen as the value $g(I)$ of constraint $g(x)\geq 0$ according to solution $\{x_{i}=1:i\in I\}$ multiplied by the corresponding (pseudo)probability. At level $n$, variables $\{y_I^N: I\subseteq N\}$ are actual probabilities, as already observed (see e.g. Example~\ref{examplediag}). At any level, any solution is a linear combination of these (pseudo)probabilities (see Equation \eqref{sum1}).
\end{remark}

\begin{remark}
Note that any matrix $R(J)=G(J)G(J)^{\top}$ in Lemma~\ref{th:lassalmostdiag} is PSD (see, e.g. Appendix \ref{linear algebra}). In the following we distinguish 3 parts of $M_t^*(w^N)$: the diagonal matrix $Diag(w^N,t)$ (that sometimes we call $D$ for brevity), the positive semi-definite ($\PD$) matrices (i.e. the rank one matrices $R(J)$ multiplied by \emph{positive} coefficient $w_J^N$), and the negative semi-definite ($\ND$) matrices (i.e. the rank one matrices $R(J)$ multiplied by \emph{negative} coefficient $w_J^N$).
\end{remark}

%%%%%%%%%%%%%%%%%%%%%%%%%
\subsubsection{Almost Diagonal Form: User Guide}\label{Sect:gershgorin}
%%%%%%%%%%%%%%%%%%%%
Assume that we want to prove that a given solution $y\in\mathbb{R}^{\PS(N)}$ is a feasible solution for the Lasserre hierarchy at a certain level. For some $t\geq 0$, this boils down to checking if $M_t(w)\succeq 0$, (where, recall, vector $w$ is intended to be either the vector $y$ of variables, or the shifted vector $g*y$, for any $g\in\mathbb{R}^{\PS(N)}$).
By Lemma~\ref{th:lassalmostdiag}, this is equivalent to checking if $M_t^*(w^N)\succeq 0$, where $M_t^*(w^N)$ is the almost diagonal decomposition of $M_t(w)$.\footnote{Every $w_I^N$ that appears in $M_t^*(w^N)$ is either $y_I^N$ (if $w=y$), or $z^N_I$ (if $z=g*y=w$) where $z^N_I= g(I)y_I^N$ and $g(I)$ denote the value of $g(x)$ when we set to 1 all the variables in $\{x_{i}:i\in I\}$ and to zero the remaining.}

If $w_I^N\geq 0$ for every $I\subseteq N$, then it is straightforward to claim that $M_t^*(w^N)\succeq 0$. This simply because $M_t^*(w^N)$ is the sum of PSD matrices.

If $w_I^N\geq -\eps$ for every $I\subseteq N$, for some $\eps>0$, then it is not clear whether $M_t^*(w^N)\succeq 0$; actually the answer depends on several factors like the value of $\eps$, the positive terms $w_I^N> 0$, the level $t$ and so on. 

The strategy that we suggest in the following uses the Gershgorin disk theorem (see e.g. \cite{varga}), which identifies a region in the plane that contains all the eigenvalues of a square matrix. Let A be an $n\times n$ matrix. For each $i$ with $1\leq i\leq n$, define the \emph{radius}
\begin{equation}\label{eq:r}
r_i= \sum_{\substack{j=1\\ j\not = i}}^n |A_{i,j}|
\end{equation}
Let $\D_i$ be the closed disc centered at $a_{ii}$ with radius $r_i$. Such a disk is called a \emph{Gershgorin disk}.
Then each eigenvalue of $A$ is in at least one of the disks. So if all the disks are located in the nonnegative plane we are guaranteed to have a PSD matrix.

\begin{theorem}[Gershgorin Disk Theorem \cite{varga}]\label{th:gershgorin}
Let A be an $n\times n$ matrix, and let $\mu$ be any eigenvalue of $A$. Then for some $i$ with $1\leq i\leq n$,
\begin{equation*}
|\mu -A_{i,i}|\leq r_i
\end{equation*}
where $r_i$ is given by \eqref{eq:r}.
\end{theorem}

\paragraph{Congruent Transformation of Gershgorin Disks.}\label{gershgorintech}

If we apply directly Theorem \ref{th:gershgorin} to matrix $M_t^*(w^N)$ this might give a loose condition: for example this happens if there is a disk with a very large radius and center close to zero. 
The strategy, described in this section, is to apply a congruent transformation $T$ with the aim to obtain a matrix with tighter disks. %Roughly speaking, this transforms ``useless'' Gershgorin disks to ``useful'' ones. 
The matrices of rank one play a fundamental role.
We introduce the idea by using the following example.

\begin{example}\label{ex:strategy}
From Example~\ref{example:almostdiag}, the almost diagonal decomposition of $M_1(z)$ is:
{\small
$$
M_1^*(z^N)=\underbrace{\left(
\begin{array}{ccc}
z_{\emptyset}^N & 0 & 0 \\
0 & z_{1}^N & 0 \\
0 & 0 & z_{2}^N
\end{array}
\right)}_{D}+ z_{12}^N\underbrace{
\left(
\begin{array}{cccc}
1 & -1 & -1  \\
-1 & 1 & 1  \\
-1 & 1 & 1  
\end{array}
\right)}_{R(\{1,2\})}
$$
}
Let us assume, that according to a given solution $y$, we have $z_{1}^N<0$, whereas the other entries are positive. The question is to understand under which conditions we have $M_1(z)\succeq 0$. 

According to the assumptions, the diagonal matrix $D$ above is negative semidefinite, whereas $z_{12}^N R(\{1,2\})$ is positive semidefinite. 
A straightforward application of Gershgorin Theorem can be useless.
For example if  $z_{\emptyset}^N=z_{2}^N=1, z_{12}^N=2, z_{1}^N=-1/3$, then the Gershgorin disks of matrix $M_1^*(z^N)$ are located as follows: disks $\D_{\emptyset}$ and $\D_{\{2\}}$ are centered in $3$ and have radius $4$, whereas disk $\D_{\{1\}}$ is centered in $5/3$ and has radius $4$, i.e. the disks are not entirely located in the nonnegative plane (see the left-hand picture of Figure \ref{fig:transform}). In this situation Gershgorin Theorem gives a loose condition because the disks have too large radii.
We provide a congruent transformation of $M_1^*(z^N)$ that transforms ``useless'' Gershgorin disks to ``meaningful'' ones.

% The Gershgorin disks of matrix $D$ have all zero radius and $\D_{\{1\}}$ (i.e. the disk of the second row/column) has negative center.

Consider the following simple congruent transformation $T$, obtained by pivoting on entry $(\{1\},\{1\})$: add the second row to the first (and symmetrically for columns) and subtract the second row from the last (and symmetrically for columns). This transforms matrix $R(\{1,2\})$ into a matrix that has zero everywhere but the pivot entry $(\{1\},\{1\})$ (this is possible because $R(\{1,2\})$ has rank one). Then we obtain the following congruent matrix.
{\small
$$
T\cdot M_1^*(z^N)\cdot T^{\top}=\left(
\begin{array}{ccc}
z_{\emptyset}^N - z_{1}^N & z_{1}^N & -z_{1}^N \\
z_{1}^N & z_{1}^N & -z_{1}^N \\
-z_{1}^N & -z_{1}^N & z_{2}^N- z_{1}^N
\end{array}
\right)+ z_{12}^N
\left(
\begin{array}{cccc}
0 & 0 & 0  \\
0 & 1 & 0  \\
0 & 0 & 0  
\end{array}
\right)
=
\left(
\begin{array}{ccc}
z_{\emptyset}^N - z_{1}^N & z_{1}^N & -z_{1}^N \\
z_{1}^N & z_{12}^N+z_{1}^N & -z_{1}^N \\
-z_{1}^N & -z_{1}^N & z_{2}^N-z_{1}^N
\end{array}
\right)
$$
}

Note that the effect of the described transformation $T$ on matrix $D$ is to \emph{perturb} the radius/center of the disks by a factor of $z_{1}^N$ (that is the value of the pivot entry $(\{1\},\{1\})$). Moreover, if we add the transformed  $R(\{1,2\})$ to the transformed $D$, this has the effect of \emph{shifting} by $z_{12}^N$ the center of disk $\D_{\{1\}}$ (i.e. the disk with negative center). 
From the final matrix, we see that if $z_{\emptyset}^N\geq |z_{1}^N|,z_{2}^N\geq |z_{1}^N|, z_{12}^N\geq 3|z_{1}^N|$ then the Gershgorin disks of the transformed matrix are in the nonnegative plane, which implies that $M_1(z)\succeq 0$. Figure \ref{fig:transform} (right-hand picture) shows the effect of the congruent transformation on the Gershgorin disks for our numerical example above.

\begin{figure}[hbtp]
    \centering
    \includegraphics[width=6cm]{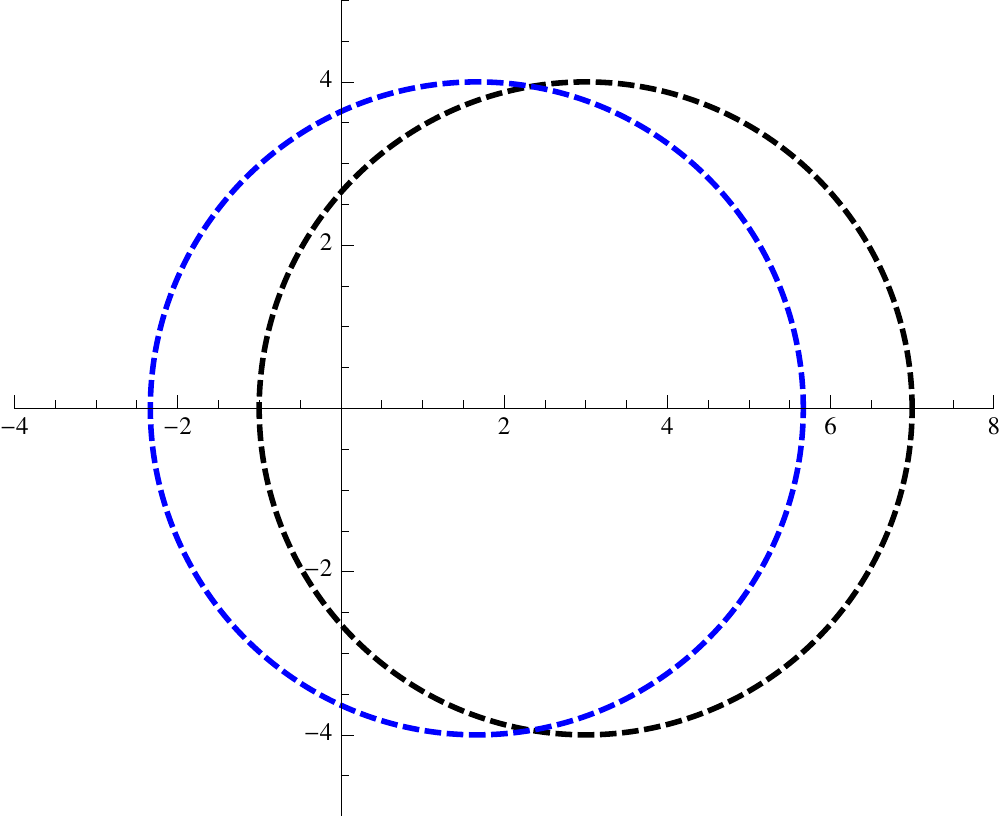}
    \includegraphics[width=6cm]{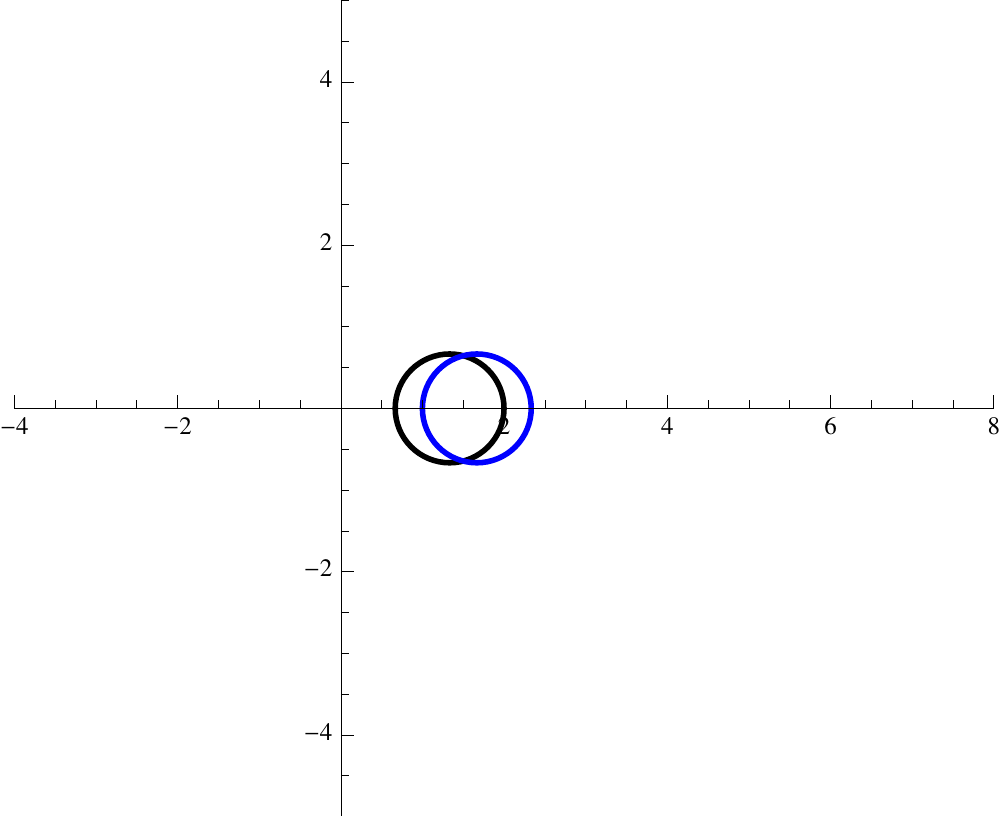}
    \caption{Gershgorin disks before and after the congruent transformation.}
  \label{fig:transform}
\end{figure}

\end{example}

Generalizing the above example, a proof of $M_t^*(w^N)\succeq 0$ can be obtained by selecting a congruent transformation $T$ and a subset $\mathcal{X}$ from the positive semidefinite matrices $\PD$ so that the Gershgorin disks of
$\M=T\cdot \left(D+\sum_{A\in \mathcal{X}} A+\sum_{B\in \ND} B\right) \cdot T^{\top}$ are located in the nonnegative plane. Note that the latter implies that $M_t^*(w^N)\succeq 0$ since it is congruent to a sum of PSD matrices. 

The congruent transformation $T$ is the concatenation of the basic congruent transformations introduced in the example (obtained by pivoting on some entries with the aim to shift/perturb disks with negative center).
In more formal terms, a \emph{basic congruent transformation} $T_S(H)$ consists of selecting a matrix from $\PD$ (or from $\ND$), say $w_H^N R(H)$, for $H\in \PS(N)\setminus \PS_t(N)$. 
Select a pivot entry $(S,S)$ that is different from zero.\footnote{Note that at the beginning, since $R(H)=G(H)G(H)^{\top}$, every entry $(S,S)$ with $S\subseteq H$ (and $|S|\leq t$) is different from zero; later, if $\{T_P(L):L\in \mathcal{T}\}$ is the set of basic transformations applied so far, entry $(S,S)$ with $S\subseteq H$ is different from zero if it is in none of the basic transformations applied so far, i.e. $S\not \subseteq L$ for every $L\in \mathcal{T}$.} Pivot on that element to obtain a congruent matrix that has zero everywhere but entry $(S,S)$ ($R(H)$ has rank one).
We call this congruent matrix the \emph{$I$-reduced form of $R(H)$} and denote it by $R_S(H)= T_S(H)\cdot R(H) \cdot T_S(H)^{\top}$. The effect of adding $R_S(H)$ to any matrix $A$ is to shift the center of disk $\D_S$ of matrix $A$ by a certain factor of $w_H^N$ (the shift is positive if $w_H^N>0$ and negative if $w_H^N<0$). The effect of transformation $T_S(H)$ on $D$ is to change the radius/center of disks $\{\D_I: I\subseteq H\}$ by a certain factor of $w_S^N$.
We recall (and remark) that every basic congruent transformation does not change the positive (negative) semi-definiteness of any matrix, so the matrices in $\PD$ (or $\ND$) remain in the same set after the transformations.

We give an application of the Gershgorin disks transformation technique to get  Lasserre integrality gaps in Section~\ref{sect:minknapsack} and~\ref{sect:knaplifted} (in the latter we use the general strategy sketched above, see also Appendix~\ref{sect:exmkp}).

%%%%%%%%%%%%%%%%%%%%%%%%%%%%%%%%%%%%%%
\subsubsection{The Lasserre Hierarchy as a Semi-Infinite Linear Program}\label{sect:sip}
%%%%%%%%%%%%%%%%%%%%%%%%%%%%%%%%%%%%%%
In the following we show an equivalent formulation of the Lasserre hierarchy as a semi-infinite linear program by using the almost diagonal form given by Lemma~\ref{th:lassalmostdiag}. This provides a different, non-matricial point of view that can be convenient for certain problems. We give an application of this characterization in Section~\ref{sect:tardyjobs}.

In optimization theory, \emph{semi-infinite programming} (SIP) is an optimization problem with a finite number of variables and an infinite number of constraints, or an infinite number of variables and a finite number of constraints (see e.g. \cite{GobernaL02}). It is well-known (and easy to see) that any SDP program can be written as a semi-infinite linear program. By Lemma~\ref{th:lassalmostdiag} we immediately obtain the following.

\begin{corollary}\label{th:constraintcond}
%Let $u_J = \sum_{I\in \PS_t(N)} v_I \cdot G(t)_{I,J}$, 
For any $w\in\mathbb{R}^{\PS(N)}$ and $t\geq 0$, we have
$M_{t}(w)\succeq 0$ if and only if for every unit vector $v\in \mathbb{R}^{\PS_t(N)}$ the following holds.
\begin{equation}\label{mkpcond}
\sum_{I:|I|\leq t} w^N_I v^2_I + \sum_{J\subseteq N: |J|\geq t+1} w^N_J  \left( \sum_{i=0}^t (-1)^{t-i}{|J|-i-1 \choose t-i} \left(\sum_{I\subset J, |I|=i} v_I\right) \right)^2  \geq 0
\end{equation}
%where $u_J = \sum_{I\in \PS_t(N)} v_I \cdot G(t)_{I,J}$.
\end{corollary}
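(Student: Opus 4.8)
The plan is to derive Corollary~\ref{th:constraintcond} directly from the Almost Diagonal Form of Lemma~\ref{th:lassalmostdiag}, using the elementary fact that a symmetric matrix is PSD if and only if its associated quadratic form is nonnegative on all (unit) vectors, together with invariance of PSD-ness under congruence. First I would recall that $M_t(w)\succeq 0$ iff $M_t^*(w^N)\succeq 0$, since Lemma~\ref{th:lassalmostdiag} states $M_t(w)\cong M_t^*(w^N)$ and congruent matrices have the same inertia (in particular one is PSD iff the other is). Then $M_t^*(w^N)\succeq 0$ holds iff $v^\top M_t^*(w^N) v\geq 0$ for every $v\in\mathbb{R}^{\PS_t(N)}$; and by homogeneity of the quadratic form it suffices to check this for every \emph{unit} vector $v$, which matches the statement.

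Next I would expand the quadratic form using the explicit decomposition
$M_t^*(w^N) = Diag(w^N,t) + \sum_{J\in\PS(N)\setminus\PS_t(N)} w_J^N\, G(J)G(J)^{\top}$.
The diagonal part contributes $v^\top Diag(w^N,t) v = \sum_{I:|I|\le t} w_I^N v_I^2$, giving the first sum in~\eqref{mkpcond}. For each rank-one term, $v^\top\big(w_J^N G(J)G(J)^{\top}\big) v = w_J^N\,(G(J)^{\top}v)^2$. It then remains to identify $G(J)^{\top}v = \sum_{I\in\PS_t(N)} G(J)_I v_I$ with the expression $\sum_{i=0}^t (-1)^{t-i}\binom{|J|-i-1}{t-i}\big(\sum_{I\subset J,\,|I|=i} v_I\big)$ appearing inside the square in~\eqref{mkpcond}. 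Using the definition of $G(J)_I$ from Lemma~\ref{th:lassalmostdiag} — namely $G(J)_I = (-1)^{t-|I|}\binom{|J|-|I|-1}{t-|I|}$ when $I\subseteq J$ and $0$ otherwise — I would simply group the sum over $I\subseteq J$ (note $I\subseteq J$ with $|I|\le t<|J|$ forces $I\subsetneq J$) by the cardinality $i=|I|$: the coefficient $(-1)^{t-i}\binom{|J|-i-1}{t-i}$ depends only on $i$, so it factors out of the inner sum over $I\subset J$ with $|I|=i$, yielding exactly the claimed expression.

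Combining these contributions term by term gives that $v^\top M_t^*(w^N) v$ equals the left-hand side of~\eqref{mkpcond}, so the equivalence ``$M_t(w)\succeq 0 \iff$ \eqref{mkpcond} holds for all unit $v$'' follows immediately. This is genuinely a corollary: there is no real obstacle beyond bookkeeping. The only point requiring minor care — and the closest thing to a ``hard part'' — is the reindexing step where one matches $G(J)^{\top}v$ against the stated double sum: one must observe that the binomial coefficient in $G(J)_I$ depends on $I$ only through $|I|$, and that summing over all $I\subseteq J$ is the same as summing over $i=0,\dots,t$ and then over all $i$-subsets of $J$ (the upper bound $t$ on $i$ being forced since $v\in\mathbb{R}^{\PS_t(N)}$, and the constraint $I\subseteq J$ automatically being $I\subsetneq J$ because $|J|\ge t+1$). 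Everything else is a direct substitution of the formula from Lemma~\ref{th:lassalmostdiag} into the standard ``PSD $\iff$ nonnegative quadratic form on unit vectors'' criterion.
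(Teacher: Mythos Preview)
Your proposal is correct and follows essentially the same approach as the paper: both invoke Lemma~\ref{th:lassalmostdiag} to pass to $M_t^*(w^N)$, use the standard ``PSD $\Leftrightarrow$ nonnegative quadratic form on unit vectors'' criterion, and then expand $v^\top M_t^*(w^N)v$ termwise. Your write-up is actually more explicit about the reindexing of $G(J)^\top v$ by cardinality than the paper's own proof, which simply asserts that ``it is easy to check'' that the quadratic form equals~\eqref{mkpcond}.
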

\begin{proof}
By Lemma~\ref{th:lassalmostdiag}, we can replace any condition $M_{t}(w)\succeq 0$ with 
$M_{t}^*(w^N)\succeq 0$. Then the claim follows by the definition of PSD matrices. Indeed, let $v\in \mathbb{R}^{\PS_t(N)}$ be any eigenvector of matrix $M_{t}^*(w)$, i.e. $M_{t}^*(w) v=\lambda v$. W.l.o.g., we can assume that $v$ is a unit vector. If solution $y$ ensures $v^{\top} M_{t}^*(w) v\geq 0$ for every unit vector $v$, then we have $\lambda v^{\top} v\geq 0$, i.e. any eigenvalue $\lambda$ is nonnegative and therefore $M_{t}^*(w)\succeq 0$. It is easy to check that $v^{\top} M_{t}^*(w) v\geq 0$ is \eqref{mkpcond}.
\end{proof}

%%%%%%%%%%%%%%%%%%%%%%%%%%%%%%%%%%%%%%%%%%%%

%\chapter{Lasserre Hierarchy: Integrality Gaps}

%%%%%%%%%%%%%%%%%%%%%%%%%%%%%%%%%%%%%%%%%%%%%%%%
%%%%%%%%%%%%%%%%%%%%%%%%%%%%%%%%%%%%%%%%%%%%%%%%
\section{Lasserre Integrality Gap for the \textsc{Min-Knapsack}}\label{sect:minknapsack}
%%%%%%%%%%%%%%%%%%%%%%%%%%%%%%%%%%%%%%%%%%%%%%%%

%The \emph{integrality gap} of an integer relaxation is defined as the maximum ratio between the solution value of the integer program and of its relaxation.
%
 In this section we analyze the Lasserre hierarchy integrality gap for  the \textsc{Min-Knapsack} problem. The analysis uses the moment matrix in almost diagonal form and Gershgoring disk congruent transformation described in Section \ref{Sect:almostdiag}.

In the \textsc{Min-Knapsack} problem we are given a set $V$ of items with nonnegative costs $c_1,c_2\ldots$, profits $p_1,p_2,\ldots$ and demand $b$.
The goal is to select a minimum cost set of items with total profit at least the demand.
The ``standard'' linear program (LP) relaxation for the {\textsc{Min-Knapsack}} problem has this form %. For each item $j$, we use variable $x_j$ to indicate the item $j$ is included
  $\min\{\sum_{j\in V} c_j x_j:\sum_{j \in V} p_j x_j \geq b, x_j \in [0,1]\text{ } j\in V\}$.
%\begin{eqnarray*}
%\mbox{(LP)}  
% & \min & \sum_{j\in V} c_j x_j \\
% & \mbox{s.t.} &  \sum_{j \in V} p_j x_j \geq b\\
%  & &  0\leq x_j \leq 1 \qquad \forall j\in V
%\end{eqnarray*}
%
%It is easy to check that this LP relaxation has an unbounded integrality gap.%: Consider one item with $c_1 = 1$ and $p_1 = P$. The target is $b= 1$. Notice that the integral solution has to pay $1$, while the fractional solution only assigns $1/P$ fraction. This trivial example can be easily ruled out by requiring that $p_j \leq b$ for all $j \in V$. However, even with this pruning step, the integrality gap is still unbounded: We have two items, the first item is for free, i.e. $c_1 = 0$ and $p_1 = P$; The second item has $c_2 = 1$ and $p_2 = P$. The target profit is $b = P+ 1$. The fractional solution only needs the first item and $1/P$ fraction of the second item, while the integral solution needs both items.
%
%
%Consider the following instance of the min Knapsack problem with $n+1$ items:
%In the following we show that a straightforward generalization to $n+1$ items of the above instance generates an arbitrarily large integrality gap for Lasserre Hierarchies even after $n-1$ rounds. More precisely, 
%\paragraph{The Gap Instance.} 
The integrality gap of (LP) is unbounded, as the following simple instance (also used later in our results for Lasserre gap) with $n+1$ items shows:
\begin{eqnarray}\label{LP:minKnapGap}
\begin{array}{rll}
(GapLP) 
 \min \{ \sum_{i=1}^n x_i: \sum_{i=1}^{n+1} x_i \geq 1+1/P, x_i \in [0,1] \mbox{ for } i\in[n+1]\}
\end{array}
\end{eqnarray}
The optimal \emph{integral} value of ($GapLP$) is one, whereas the optimal \emph{fractional} value is $1/P$, with integrality gap $P$.

\paragraph{Our Results.} We prove the following dichotomy-type result. If we allow a ``large'' $P$ (exponential in the number of the Lasserre level), then the Lasserre hierarchy is of no help to improve the unbounded integrality  gap of ($GapLP$), even at level $(n-1)$. This analysis is tight since $\LAS_{n}(GapLP)$ admits an optimal integral solution with $n+1$ variables\footnote{The projection of $y\in \LAS_{n}(GapLP)$ on the original variables can be expressed as a convex combination of integral solutions on the first $n$ variables (see e.g.~\cite{KarlinMN11}). By selecting the solution with the lowest value and setting $x_{n+1}=1$ we obtain a feasible integral solution of value not larger than $\LAS_{n}(GapLP)$.}.
We also show that the requirement that $P$ is exponential in $n$ is necessary for having a ``large'' gap at level $(n-1)$.
These results follow easily from the definition of the Lasserre hierarchy in almost diagonal form (see Lemma~\ref{th:lassalmostdiag}) and the Gershgorin disks transformation technique (Section \ref{gershgorintech}).

\begin{theorem}\label{th:knapgapbounds}
(Integrality Gap Bounds for \textsc{Min-Knapsack})
\begin{enumerate}[(a)]
\item If $P=k\cdot 2^{2n+1}$, for any $k\geq 1$, then the integrality gap of $\LAS_{n-1}(GapLP)$ is at least $k$. \label{th:knapgapbounds(b)}
\item For any $\eps\in (0,1)$, if $P\leq (2^{n}-2)\eps$ then the integrality gap of $\LAS_{n-1}(GapLP)$ is smaller than $\frac{1}{1-\eps}$. \label{th:knapgapbounds(a)}
\end{enumerate}
\end{theorem}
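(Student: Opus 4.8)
The plan is to analyze $\LAS_{n-1}(GapLP)$ through the almost diagonal form of Lemma~\ref{th:lassalmostdiag}, instantiated with ground set $N=\{1,\dots,n+1\}$, level $t=n-1$, and the single constraint $g(x)=\sum_{i=1}^{n+1}x_i-(1+1/P)$, so $g(I)=|I|-1-1/P$ and $z^N_I=g(I)\,y^N_I$. The ``high-order'' index sets (size $\ge t+1=n$) are exactly the $n+1$ sets of size $n$ and the set $N$ itself; in particular the variable matrix $M_n(y)\cong Diag(y^N,n)+y^N_N R(N)$ carries only one rank-one term, and the constraint matrix $M_{n-1}(g*y)\cong Diag(z^N,n-1)+\sum_{|J|\in\{n,n+1\}}z^N_J R(J)$. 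Since $g(I)<0$ precisely for $|I|\le1$, in $Diag(z^N,n-1)$ only the entries indexed by $\emptyset$ and by singletons can be negative, with $z^N_\emptyset=-(1+1/P)y^N_\emptyset$ and $z^N_{\{i\}}=-\tfrac1P y^N_{\{i\}}$; every $z^N_J$ with $|J|\ge2$ (hence every coefficient of an $R(J)$) is nonnegative once $y^N\ge0$, so all the rank-one pieces are then in $\PD$. Also, from Lemma~\ref{th:sum1}, $\sum_I y^N_I=1$, $y_j=\sum_{I\ni j}y^N_I$, hence $\sum_{i=1}^n y_i=\sum_I|I\cap[n]|\,y^N_I$ and $\sum_{i=1}^{n+1}y_i=\sum_I|I|\,y^N_I$.

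For part (\ref{th:knapgapbounds(b)}) I would exhibit a feasible point of objective at most $1/k$. Take $y$ given by a nonnegative, $[n]$-symmetric family $\{y^N_I\}$ that places almost all pseudo-mass on the singleton $\{n+1\}$ (which contributes nothing to $\sum_{i=1}^n y_i$) and a small total mass $m=\Theta\!\big(2^{2n+1}/(nP)\big)$ spread over the high-order sets $J\supseteq\{n+1\}$; all other $y^N_I=0$. Then $M_n(y)\succeq0$ is immediate because $M^*_n(y^N)$ is a sum of PSD matrices. For $M_{n-1}(g*y)\succeq0$ I would use the Gershgorin-disk congruent transformation of Section~\ref{gershgorintech}: the only negative diagonal entry is $z^N_{\{n+1\}}=-\tfrac1P y^N_{\{n+1\}}$, of magnitude $\le1/P$, and it is absorbed by the PSD rank-one matrices $z^N_J R(J)$, whose coefficients have total order $nm=\Theta(2^{2n+1}/P)$; choosing $m$ so that this beats $1/P$ by the factor $2^{2n+1}$ — which is exactly what the hypothesis $P=k\cdot2^{2n+1}$ supplies — places every transformed Gershgorin disk in the nonnegative half-plane. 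The knapsack inequality $(g*y)_\emptyset=\sum_i y_i-(1+1/P)\ge0$ holds since the high-order mass contributes at least $nm\ge1/P$ to $\sum_I|I|y^N_I$, while the objective is $\sum_I|I\cap[n]|y^N_I=\Theta(nm)=\Theta(1/k)$, which can be arranged to be $\le1/k$. Hence the integrality gap is at least $k$.

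For part (\ref{th:knapgapbounds(a)}) I would instead show that every $y\in\LAS_{n-1}(GapLP)$ has $\sum_{i=1}^n y_i>1-\eps$ whenever $P\le(2^n-2)\eps$. From $\sum_{i=1}^n y_i=\sum_{i=1}^{n+1}y_i-y_{n+1}=\sum_I|I|y^N_I-y_{n+1}$ and $(g*y)_\emptyset\ge0$, it suffices to prove $y_{n+1}<\eps+1/P$. The idea is to read off scalar consequences of $M_n(y)\succeq0$ and $M_{n-1}(g*y)\succeq0$ in almost diagonal form — nonnegativity of the diagonal entries of $M^*_n(y^N)$ and $M^*_{n-1}(z^N)$, together with a few quadratic forms $v^\top M^*_{n-1}(z^N)v\ge0$ for vectors $v$ structured on $\PS_{n-1}(N)$ (for instance the ones induced by products $\prod_{i\in S}(1-x_i)$) — and combine them with $\sum_I y^N_I=1$. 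This bounds the total negative low-order mass (the part that forces $y_{n+1}$ up) by $P$ times a sum of binomial coefficients evaluating to $1/(2^n-2)$, i.e.\ $y_{n+1}-1/P\le P/(2^n-2)\le\eps$, so $\sum_{i=1}^n y_i>1-\eps$ and the gap is below $1/(1-\eps)$.

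The main obstacle, as is typical for Lasserre gap constructions, is the positive-semidefiniteness verification in part (\ref{th:knapgapbounds(b)}): after pivoting some $R(J)$ onto the negative entry, one must track how each basic congruent transformation perturbs the center \emph{and} radius of every Gershgorin disk — not just the one being corrected — and confirm that with the scaling $m=\Theta(2^{2n+1}/(nP))$ all disks end up in the nonnegative half-plane; this is precisely where the crude $2^{2n+1}$ factor (rather than an optimal $2^{O(n)}$) gets paid, and it explains why the two parts of the statement leave a gap between $2^n$ and $2^{2n}$. A secondary difficulty is pinning down the exact constant $2^n-2$ in part (\ref{th:knapgapbounds(a)}).
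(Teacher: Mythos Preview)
Your overall framework (almost-diagonal form plus Gershgorin disks for part~(\ref{th:knapgapbounds(b)}), scalar consequences of PSD-ness for part~(\ref{th:knapgapbounds(a)})) is the right one, but there is a genuine gap in the construction you propose for part~(\ref{th:knapgapbounds(b)}), and a structural simplification you are missing in both parts.

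\textbf{The construction in part~(\ref{th:knapgapbounds(b)}) is infeasible for $n\ge 3$.} You put nonzero $y^N_I$ only on the singleton $\{n+1\}$ and on sets $J\supseteq\{n+1\}$ with $|J|\ge n$. Using the identity $v^\top M_{n-1}(z)\,v=\sum_{L\subseteq N}z^N_L\big(\sum_{I\subseteq L,\,|I|\le n-1}v_I\big)^2$, your solution gives
\[
v^\top M_{n-1}(z)\,v \;=\; z^N_{\{n+1\}}\Big(\sum_{I\subseteq\{n+1\}}v_I\Big)^{2}\;+\;\sum_{\substack{|L|\ge n\\ n+1\in L}} z^N_L\Big(\sum_{I\subseteq L}v_I\Big)^{2}.
\]
For $n\ge 3$ the linear functional $v\mapsto v_\emptyset+v_{\{n+1\}}$ is \emph{not} in the span of the $n+1$ functionals $v\mapsto\sum_{I\subseteq L}v_I$ (for $|L|\ge n$, $n+1\in L$): already for $n=3$ one checks that the symmetric vector with $v_\emptyset=1$, $v_{\{i\}}=-2$ for $i\in[3]$, $v_{\{4\}}=2$, $v_{\{i,j\}}=1$ for $i,j\in[3]$, $v_{\{i,4\}}=0$ kills all four ``high-order'' sums but has $v_\emptyset+v_{\{4\}}=3$, so $v^\top M_{n-1}(z)\,v=9\,z^N_{\{4\}}<0$. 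In the Gershgorin picture this is exactly the problem you brushed aside: all the diagonal entries $z^N_I$ with $2\le|I|\le n-1$ are \emph{zero} under your solution, and a single pivot to repair $z^N_{\{n+1\}}$ drags those zero entries strictly negative with nothing left to absorb them.

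\textbf{What the paper does instead.} The paper first reduces $GapLP$ to $GapLP'$ on $n$ variables (Lemma~\ref{th:gaprelation}): set $y_I=y'_{I\setminus\{n+1\}}$. In $GapLP'$ the constraint is $g'(x)=\sum_{i=1}^n x_i-1/P$, so $g'(I)<0$ only for $I=\emptyset$, and at level $n-1$ there is a \emph{single} rank-one correction $z_N\,R(N)$. One pivot on $\emptyset$ yields an explicit matrix $\mathcal M$ (Lemma~\ref{matrix}); Gershgorin then forces $z^N_I\ge 2^n/P$ for all $I\ne\emptyset$, which is achieved by $y^N_I=\frac{2^n}{P|I|-1}$ for every nonempty $I\subseteq[n]$ --- mass on \emph{all} sets, not just high-order ones. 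Lifted back to $GapLP$, this is mass on every set containing $n+1$, which is exactly what your construction lacks.

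\textbf{Part~(\ref{th:knapgapbounds(a)}).} Your plan (``read off scalar consequences \ldots'') is vague precisely where the constant $2^n-2$ must come from. In the paper this constant is the trace of the explicit matrix $\mathcal M$: since $\mathcal M\succeq 0$,
\[
0\le \Tr(\mathcal M)=\sum_{I\subseteq[n]}z^N_I+(2^n-2)z^N_\emptyset=z_\emptyset-\tfrac{2^n-2}{P}\,y^N_\emptyset,
\]
so $y^N_\emptyset\le \tfrac{P z_\emptyset}{2^n-2}$, and then $\sum_{i=1}^n y_i=\sum_I|I|\,y^N_I\ge 1-y^N_\emptyset$. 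Working directly on $GapLP$ with $n+1$ variables you would have $n+2$ rank-one terms and several negative diagonal entries; getting the same clean trace identity there is substantially harder. The reduction to $GapLP'$ is doing real work in both halves of the theorem.
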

(Note that the above results trivially imply that if $P=k\cdot 2^{2t+1}$, for any $t\geq 1$, then the integrality gap of $\LAS_{t-1}(GapLP)$ is at least $k$. In a different working paper, with a superset of authors, we can prove that for \emph{any} $P\geq 1$, the integrality gap of $\LAS_{t}(GapLP)$ is still $P(1-\eps)$ even when $t=O(\log^{1-\eps}n)$ for any $\eps>0$.)
%
%%%%%%%%%%%%%%%%%%%%%%%%%%%%%%
\subsection{Proof of Theorem~\ref{th:knapgapbounds}}
%%%%%%%%%%%%%%%%%%%%%%%%%%%%%%

%\paragraph{Reduced Instance.}
Consider the following reduced instance obtained by removing variable $x_{n+1}$ and subtracting $1$ from the right-hand-side of the covering constraint in~\eqref{LP:minKnapGap}:
\begin{eqnarray}\label{LP:minKnapGap'}
\begin{array}{rll}
(GapLP') 
 \min \{ \sum_{i=1}^n x_i: \sum_{i=1}^{n} x_i - 1/P \geq 0, x_i \in [0,1] \mbox{ for } i\in[n]\}
\end{array}
\end{eqnarray}
We will use $g(x)$ to denote $\sum_{i=1}^n x_i - 1/P$ and $N=\{1, \ldots ,n\}$. Moreover, let $z=g*y$, i.e. $z_I=(g*y)_I = \sum_{i=1}^n y_{\{i\}\cup I} - y_I/P$. By the following lemma any integrality gap for $\LAS_{t}(GapLP')$ implies the same gap for $\LAS_{t}(GapLP)$. Therefore, we will focus on the reduced instance ($GapLP'$) in the following.
%%%%%%%%%%%%%%%%
\begin{lemma}\label{th:gaprelation}
For any $t\in \mathbb{N}_0$, if $y'\in \LAS_{t}(GapLP')$ then $y\in \LAS_{t}(GapLP)$, where $y_I=y_{I\setminus\{n+1\}}'$ for any $I\in \PS_{2t+2}([n+1])$.
\end{lemma}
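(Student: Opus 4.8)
The plan is to show that the natural ``lift'' of a feasible Lasserre solution for the reduced instance $(GapLP')$ gives a feasible solution for the original instance $(GapLP)$ by exploiting the fact that $x_{n+1}$ is forced to value $1$ when we only charge the first $n$ variables in the objective. Concretely, given $y'\in\LAS_t(GapLP')$ indexed by $\PS_{2t+2}([n])$, define $y$ indexed by $\PS_{2t+2}([n+1])$ by $y_I=y'_{I\setminus\{n+1\}}$. This says that $x_{n+1}$ behaves like ``the constant $1$'': $y_{I\cup\{n+1\}}=y_I$ for all $I$, which is exactly what one expects of a $0/1$ variable pinned to $1$ (note $x_{n+1}^2=x_{n+1}$ is consistent with value $1$).

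First I would check the affine normalization: $y_\emptyset=y'_\emptyset=1$. Next, the variable moment matrix: I claim $M_{t+1}(y)$ (indexed by $\PS_{t+1}([n+1])$) is congruent to $M_{t+1}(y')$ (indexed by $\PS_{t+1}([n])$) plus zero rows/columns, hence PSD. Indeed, by the duplication rule $y_{I\cup J}=y'_{(I\cup J)\setminus\{n+1\}}$, so the $(I,J)$ entry depends only on $I\setminus\{n+1\}$ and $J\setminus\{n+1\}$; therefore for each $I$ containing $n+1$, row $I$ equals row $I\setminus\{n+1\}$, and subtracting one from the other (a symmetric congruence operation, cf. Lemma~\ref{th:simop}) zeroes out all rows/columns indexed by sets containing $n+1$, leaving exactly $M_{t+1}(y')$ bordered by zeros. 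A PSD matrix bordered by zeros is PSD, so $M_{t+1}(y)\succeq 0$.

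The substantive point is the constraint moment matrix. The constraint of $(GapLP)$ is $h(x)=\sum_{i=1}^{n+1}x_i-(1+1/P)\geq 0$, while the constraint of $(GapLP')$ is $g(x)=\sum_{i=1}^n x_i-1/P\geq 0$. Since $y$ satisfies $y_{I\cup\{n+1\}}=y_I$, linearizing $x_{n+1}$ contributes exactly $y_I$ to the term it multiplies, so one computes $(h*y)_I=\sum_{i=1}^{n+1}y_{I\cup\{i\}}-(1+1/P)y_I = \bigl(\sum_{i=1}^n y_{I\cup\{i\}}-(1/P)y_I\bigr) + (y_{I\cup\{n+1\}}-y_I) = (g*y')_I$ for every $I$ not containing $n+1$ (and $(h*y)_{I\cup\{n+1\}}=(h*y)_I$ by the same duplication, for any $I$). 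Hence $M_t(h*y)$ has exactly the same block structure as above: it is $M_t(g*y')$ bordered by duplicated rows/columns, congruent to $M_t(g*y')$ bordered by zeros, hence PSD because $M_t(g*y')\succeq 0$. This verifies all three conditions of Definition~\ref{lassDef} for $\LAS_t(GapLP)$.

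I expect the main obstacle to be purely bookkeeping: making the congruence argument clean when $I$ or $J$ already contains $n+1$ (so ``removing'' and ``duplicating'' interact), and being careful that the index set $\PS_{2t+2}([n+1])$ is exactly large enough that every entry of $M_{t+1}(y)$ and of $M_t(h*y)$ is well-defined (an entry $y_{I\cup J}$ with $|I|,|J|\le t+1$ needs $|I\cup J|\le 2t+2$, and $(h*y)_I$ with $|I|\le t$ uses $y_{I\cup\{i\}}$ with $|I\cup\{i\}|\le t+1\le 2t+2$; similarly for products of two such). Alternatively, one can avoid congruences entirely and argue directly: for any vector $v$ over $\PS_{t+1}([n+1])$, grouping coordinates $v_I$ and $v_{I\cup\{n+1\}}$ and setting $v'_{I}:=\sum_{J: J\setminus\{n+1\}=I, |J|\le t+1} v_J$ for $I\subseteq[n]$ with $|I|\le t$ (and $v'_I:=v_I$ for $|I|=t+1$, with a small caveat), one gets $v^\top M_{t+1}(y) v = (v')^\top M_{t+1}(y') v'\ge 0$, and likewise for the constraint matrix; I would present whichever of the two is shorter, but both reduce to the single fact $y_{I\cup\{n+1\}}=y_I$.
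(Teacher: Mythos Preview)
Your argument is correct and rests on exactly the same structural observation the paper uses: setting $x_{n+1}\equiv 1$ makes row $I$ of $M_{t+1}(y)$ (resp.\ $M_t(h*y)$) identical to row $I\setminus\{n+1\}$, and $(h*y)_I=(g*y')_{I\setminus\{n+1\}}$. The only difference is in how the positive semi-definiteness is then verified. The paper invokes the principal-minor criterion: every principal submatrix of $M_{t+1}(y)$ either contains two equal rows (hence has determinant zero) or is, entry-for-entry, a principal submatrix of $M_{t+1}(y')$ (hence has nonnegative determinant). You instead perform the congruence directly, subtracting the duplicate rows/columns to border $M_{t+1}(y')$ by zeros, and likewise for the constraint matrix. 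Both arguments are one line once the duplication is noted; your congruence version meshes more naturally with the paper's own toolbox (Lemma~\ref{th:simop}), while the determinant version is slightly terser. Your alternative quadratic-form grouping $v'_I=\sum_{J:\,J\setminus\{n+1\}=I}v_J$ is also fine and is in fact the cleanest way to avoid the bookkeeping caveat you flagged.
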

\begin{proof}[Proof Sketch.]%[Proof Sketch of Lemma~\ref{th:gaprelation}.]
Let $h(x) = \sum_{i=1}^{n+1} x_i -1-1/P$. The proof follows by observing that any principal submatrix of $M_{t+1}(y)$ (or  $M_{t}(h*y)$) has either determinant equal to zero or it is a principal submatrix in $M_{t+1}(y')$ (or  $M_{t}(g*y')$).
\end{proof}

\begin{remark}
In the following we prove an unbounded integrality gap for $\LAS_{n-1}(GapLP')$.
With this aim we need to find a solution $y$ that satisfies $M_n(y)\succeq 0$ and $M_{n-1}(z)\succeq 0$. By \eqref{vars}, $M_n(y)\succeq 0$ holds if and only if $\{y^N_{I}:I\subseteq N\}$ is a probability distribution. 

So, the \emph{only} interesting part is to find a probability distribution that satisfies $M_{n-1}(z)\succeq 0$ (the constraint moment matrix), since \emph{any} probability distribution satisfies $M_n(y)\succeq 0$.
\end{remark}

\paragraph{Proof Structure.}
By Lemma~\ref{th:lassalmostdiag}, $M_{n-1}(z)$ is congruent to a matrix $M_{n-1}^*(z^N)$ that differs from the diagonal matrix $Diag(z^N,n-1)$ ($D$ for brevity) by a matrix of rank one ($G$). 
As in Example~\ref{ex:strategy}, if an entry of $D$ is negative then, by pivoting on that element, we can transform matrix $G$ to have all zeros but the pivot; This is obtained at the cost of spreading the negative entry of the diagonal matrix everywhere, and increasing therefore the disks radii of the transformed diagonal matrix by some factor of the negative entry. Now, for our instance $GapLP'$,  the first entry of $D$ is negative but it can be made arbitrarily small for ``large'' $P$. Therefore, in the resulting transformed matrix, the disks radii can be made arbitrarily small for large $P$ and the transformed $G$ shifts the negative entry by a positive value. A feasible solution, with an unbounded integrality gap, is obtained by locating Gershgorin disks in the nonnegative plane. Viceversa, if $P$ is ``small'', then the non negativity of the trace implies a ``small'' integrality gap.
Before giving the general proof, we show this for the smallest meaningful instance of $GapLP'$ with two items, by customizing the idea of Example~\ref{ex:strategy} for $GapLP'$.
%%%%%%%%%%%
\begin{example}\label{exampleknap}
In Example~\ref{example:almostdiag} we showed that $M_1(z)$ is congruent to the following almost diagonal matrix $M_1^*(z^N)$.
{\small
$$
M_1(z)\cong M_1^*(z^N)=
\left(
\begin{array}{ccc}
z_{\emptyset}-z_1-z_2+z_{12} & 0 & 0 \\
0 & z_{1}-z_{12} & 0 \\
0 & 0 & z_{2}-z_{12}
\end{array}
\right)+ z_{12}\underbrace{
\left(
\begin{array}{cccc}
1 & -1 & -1  \\
-1 & 1 & 1  \\
-1 & 1 & 1  
\end{array}
\right)}_{\mbox{matrix of rank one}}
$$
}
Consider the instance of $(GapLP')$ with two items $N=\{1,2\}$. According to this instance $M_1^*(z^N)$ is equal to:
{\small
$$
M_1^*(z^N)=
\left(
\begin{array}{ccc}
-\frac{y_{\emptyset}^{N}}{P}  & 0 & 0 \\
0 & \left(1-\frac{1}{P}\right)y_{\{1\}}^{N} & 0 \\
0 & 0 & \left(1-\frac{1}{P}\right)y_{\{2\}}^{N}
\end{array}
\right)+ \left(2-\frac{1}{P}\right)y_{\{1,2\}}^{N}
\left(
\begin{array}{cccc}
1 & -1 & -1  \\
-1 & 1 & 1  \\
-1 & 1 & 1  
\end{array}
\right)
$$
}

We are considering Lasserre at the 1-st level with 2 variables, it follows that the variable moment matrix $M_2(y)$ is a full moment matrix. Therefore, $M_2(y)\succeq 0$ is \emph{equivalent} and satisfied by assuming that $y_{\emptyset}^{N},y_{\{1\}}^{N},y_{\{2\}}^{N},y_{\{1,2\}}^{N}$ are the probabilities of the different integral solutions (see Example~\ref{examplediag}), i.e., $M_2(y)\succeq 0$ if and only if: $y_{\emptyset}^{N},y_{\{1\}}^{N},y_{\{2\}}^{N},y_{\{1,2\}}^{N}\geq 0$ and $y_{\emptyset}^{N}+y_{\{1\}}^{N}+y_{\{2\}}^{N}+y_{\{1,2\}}^{N}=1$.

Note that the Gershgorin disks of matrix $M_1^*(z^N)$ are not entirely located in the nonnegative plane. Indeed, disk $\D_{\emptyset}$ is centered in $\left(\left(2-\frac{1}{P}\right)y_{\{1,2\}}^{N}-\frac{y_{\emptyset}^{N}}{P}\right)$ and has radius $2\left(2-\frac{1}{P}\right)y_{\{1,2\}}^{N}$.%, and we are in a situation as in Figure~\ref{fig:transform}.

Let $\eps = \frac{y_{\emptyset}^{N}}{P}$. By pivoting on the negative entry (i.e. add the first row (column) to the second and the third rows (columns)) we obtain the following congruent matrix:
{\small
\begin{eqnarray*}
M_1^*(z^N)\cong \mathcal{M}&=&
\left(
\begin{array}{ccc}
-\eps  & -\eps & -\eps \\
-\eps & \left(1-\frac{1}{P}\right)y_{\{1\}}^{N}-\eps & -\eps \\
-\eps & -\eps & \left(1-\frac{1}{P}\right)y_{\{2\}}^{N}-\eps
\end{array}
\right)+ \left(2-\frac{1}{P}\right)y_{\{1,2\}}^{N}
\left(
\begin{array}{cccc}
1 & 0 & 0  \\
0 & 0 & 0  \\
0 & 0 & 0  
\end{array}
\right) \\
&=& \left(
\begin{array}{ccc}
\left(2-\frac{1}{P}\right)y_{\{1,2\}}^{N}-\eps  & -\eps & -\eps \\
-\eps & \left(1-\frac{1}{P}\right)y_{\{1\}}^{N}-\eps & -\eps \\
-\eps & -\eps & \left(1-\frac{1}{P}\right)y_{\{2\}}^{N}-\eps
\end{array}
\right)
\end{eqnarray*}
}
Since $0\leq y_{\emptyset}^{N}\leq 1$, by choosing $P$ ``sufficiently large'', we can make $\eps = \frac{y_{\emptyset}^{N}}{P}$ arbitrarily close to zero.
Therefore, if $P$ is ``large'' then every disk radius of the transformed matrix $\mathcal{M}$ can be made arbitrarily small. By Gershgorin's Theorem, if the diagonal entries are at least the radii, we have a feasible solution. So a feasible solution is obtained by choosing the probabilities as follows (which locate the disks of $\mathcal{M}$ in the nonnegative plane).
\begin{eqnarray*}
y_{\{1,2\}}^{N} &=& 3\eps/(2-1/P) \\
y_{\{1\}}^{N} &=& 3\eps/(1-1/P) \\
y_{\{2\}}^{N} &=& 3\eps/(1-1/P)\\
y_{\emptyset}^{N} &=& 1- y_{\{1,2\}}^{N}-y_{\{1\}}^{N}-y_{\{2\}}^{N}
\end{eqnarray*}

Moreover, note that the cost of this solution is the the expected value, i.e. the sum of the probability of each integral solution multiplied by the corresponding solution cost: 
$$y_1+y_2=\frac{3\eps}{(2-1/P)}\cdot 2+ \frac{3\eps}{(1-1/P)}\cdot 1+\frac{3\eps}{(1-1/P)}\cdot 1=O\left(\frac{1}{P}\right) $$
The optimal integral solution has value $1$. It follows that the integrality gap is unbounded by increasing $P$.

For the lower bound on $P$, consider $\mathcal{M}$ written as follows:

{$$
\mathcal{M}=
\left(
\begin{array}{ccc}
z_{\{1,2\}}+ z_{\emptyset}^N & z_{\emptyset}^N & z_{\emptyset}^N \\
z_{\emptyset}^N & z_{\{1\}}^N+ z_{\emptyset}^N &  z_{\emptyset}^N\\
z_{\emptyset}^N &  z_{\emptyset}^N & z_{\{2\}}^N+ z_{\emptyset}^N
\end{array}
\right)
$$
}
$\mathcal{M}\succeq 0$ implies that $\Tr(\mathcal{M})= 3z_{\emptyset}^N + z_{\{1\}}^N+  z_{\{2\}}^N+ z_{\{1,2\}} \geq 0$. This simplifies to $\Tr(\mathcal{M})= 2z_{\emptyset}^N + z_{\emptyset}= - \frac{2}{P} y_{\emptyset}^N +z_{\emptyset} \geq 0$ by Lemma~\ref{th:sum1}. The latter implies $ y_{\emptyset}^N\leq Pz_{\emptyset}/2$ (that generalizes to $y_{\emptyset}^N\leq \frac{Pz_{\emptyset}}{2^{n}-2}$ for $n$ items). It follows that if $P$ is ``small'' then $y_{\emptyset}^N$ is ``small''. But $y_{\emptyset}^N$ is the probability of the zero solution (i.e. the solution with all variables set to zero). Moreover, in our case, the projection of $y$ on the original variables can be expressed as a convex combination of the (infeasible) zero solution with the (feasible) positive integral solutions (i.e. the solutions with one or more variables set to $1$). A ``small'' $y_{\emptyset}^N$ implies a ``small'' integrality gap. More general and formal arguments will be provided in the following. 
\end{example}

%%%%%%%%%%%%%%%%%%%%%%%%%%
\paragraph{(Almost) Diagonalization.}
%%%%%%%%%%%%%%%%%%%%%%%%%%
By Lemma~\ref{th:lassalmostdiag}, $M_{n-1}(z)$ is congruent to the following matrix:

\begin{eqnarray}
M_{n-1}(z)&\cong&D + z_N \cdot G
\end{eqnarray}
where $D=Diag(z^N,n-1)$, $G=G(N)G(N)^{\top}$ and the generic entry $(I,J)$ of matrix $G$ is equal to
$G_{I,J}=(-1)^{|I|+|J|}$.

%%%%%%%%%%%%%%%%%%%%%%%
\paragraph{Pivoting.}
%%%%%%%%%%%%%%%%%%%%%%%
Let $C\in \mathbb{R}^{\PS_{n-1}(N)\times \PS_{n-1}(N)}$ be a square matrix defined as follows.
\begin{equation*}
C_{I,J}=\left\{
\begin{array}{ll}
1 & \text{if } I=J \\
(-1)^{|I|-1} & \text{if } J=\emptyset, I\not =\emptyset \\
0 & \text{otherwise}
\end{array}
\right.
\end{equation*}
Matrix $C$ is invertible (see Section~\ref{linear algebra} and Lemma~\ref{th:matrix_transf}) and it is a congruent transformation that maps $G$ to its $\emptyset$-reduced form\footnote{This congruent transformation is equivalent to pivoting on the first entry by adding to the row indexed by set $H\subseteq \PS_{n-1}(N)$ the first row (the one indexed by set $\emptyset$) multiplied by $(-1)^{|H|-1}$. Then perform the symmetric operations on the columns. This transforms $G$ into a matrix with all zeros but the first entry.} 
$D+G z_N \cong \M=C \left(D+G z_N\right) C^{\top}$.

%%%%%%%%%%%%%%%%%%%%%%%%%%%%%%%%%%%%%%%%%%%
%%%%%%%%%%%%%%%%%%%%%%%%%%%%%%%%%%%%%%%%%%%
\begin{lemma}\label{matrix}
$M_{n-1}(z) \cong \M$ where $\M\in \mathbb{R}^{\PS_{n-1}(N)\times \PS_{n-1}(N)}$ is as follows:
\begin{eqnarray*}
\M_{I,J}&=&
\left\{
\begin{array}{ll}
z_N+z^N_{\emptyset} & \text{if } I=J=\emptyset\\
z^N_{I}+z^N_{\emptyset} & \text{if } I=J\not=\emptyset\\
z^N_{\emptyset} (-1)^{|I|+1}  & \text{if } I\not=J=\emptyset\\
z^N_{\emptyset} (-1)^{|J|+1}  & \text{if } J\not=I=\emptyset\\
z^N_{\emptyset} (-1)^{|I|+|J|}  & \text{otherwise }
\end{array}
\right.
\end{eqnarray*}
\end{lemma}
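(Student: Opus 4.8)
The plan is to explicitly carry out the congruent transformation $\M = C(D + z_N G)C^{\top}$ computed entry-by-entry, where $D = Diag(z^N, n-1)$ and $G = G(N)G(N)^{\top}$ with $G_{I,J} = (-1)^{|I|+|J|}$. First I would record the two ingredients separately: the diagonal piece $D$ has $(I,I)$-entry $z^N_I$ for $I \in \PS_{n-1}(N)$, and the rank-one piece $z_N G$ has $(I,J)$-entry $z_N (-1)^{|I|+|J|}$. Note that by Lemma~\ref{th:lassalmostdiag} the $I$-th diagonal entry of $Diag(z^N,n-1)$ for $I\subsetneq N$ is $z^N_I$, and the single ``extra'' rank-one term corresponds to $J = N$ with $R(N)_{I,J} = G(N)_I G(N)_J = (-1)^{n-|I|}(-1)^{n-|J|} = (-1)^{|I|+|J|}$, so the congruence $M_{n-1}(z) \cong D + z^N_N G$ is exactly \eqref{eq:conalmostdiag} specialized to $t = n-1$.

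Next I would apply $C$, which implements the basic congruent (pivoting) transformation on the $(\emptyset,\emptyset)$ entry described in the footnote: for each nonempty $H \in \PS_{n-1}(N)$, add $(-1)^{|H|-1}$ times row $\emptyset$ to row $H$, then do the symmetric column operations. Since $C$ is lower-triangular-plus-first-column with unit diagonal, it is invertible and hence a congruence. The key observation is that $C$ is designed so that $C G C^{\top}$ has all entries zero except the $(\emptyset,\emptyset)$ entry: indeed $G(N)$ has $I$-th coordinate $(-1)^{n-|I|}$, so $C$ acting on $G(N)$ zeroes out every coordinate but the first — this is the ``$\emptyset$-reduced form of $R(N)$'' from Section~\ref{gershgorintech}. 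Concretely $(CG(N))_I = G(N)_I + (-1)^{|I|-1} G(N)_\emptyset = (-1)^{n-|I|} + (-1)^{|I|-1}(-1)^n$, which vanishes for $I \neq \emptyset$ and equals $(-1)^n \cdot 1$ scaled appropriately for $I = \emptyset$; hence $C(z_N G)C^{\top}$ contributes $z_N$ only to the $(\emptyset,\emptyset)$ slot. Meanwhile the action of $C$ on the diagonal matrix $D$: the $(\emptyset,\emptyset)$ entry picks up $\sum_{I \ne \emptyset} z^N_I = z^N_\emptyset$ wait — more carefully, pivoting on $\emptyset$ with multipliers $(-1)^{|I|-1}$ gives $\M_{\emptyset,\emptyset} = z^N_\emptyset \cdot (\text{contribution}) + \cdots$; I would track each of the five cases in the statement: $(\emptyset,\emptyset)$ picks up $z^N_\emptyset$ from $D$ (the only surviving diagonal term after the row/column adds, since the multiplier on row $\emptyset$ is $1$) plus $z_N$ from the rank-one term; each $(I,I)$ with $I \ne \emptyset$ becomes $z^N_I$ (original diagonal entry, untouched by the $H$-row operation for $H = I$ since the multiplier-squared times $z^N_\emptyset$ term is $(-1)^{2(|I|-1)} z^N_\emptyset = z^N_\emptyset$) plus $z^N_\emptyset$; each off-diagonal $(I,\emptyset)$ with $I \ne \emptyset$ becomes $(-1)^{|I|-1} z^N_\emptyset = z^N_\emptyset(-1)^{|I|+1}$ (the cross-term from adding the multiple of row/column $\emptyset$ into the diagonal $z^N_\emptyset$ slot); and each generic $(I,J)$ with $I,J \ne \emptyset$, $I \ne J$ becomes $(-1)^{|I|-1}(-1)^{|J|-1} z^N_\emptyset = z^N_\emptyset (-1)^{|I|+|J|}$.

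So the whole proof is a bookkeeping exercise: write $C = E + F$ schematically (identity plus first-column corrections), expand $CDC^{\top}$ and $C(z_N G)C^{\top}$, and verify the five cases match the claimed $\M_{I,J}$. I would present this either as a direct matrix computation or, more cleanly, by noting $\M = C\,Diag(z^N,n-1)\,C^{\top} + z_N (Ce_\emptyset)(Ce_\emptyset)^{\top}$ and observing $Ce_\emptyset = e_\emptyset$ while $C\,Diag(z^N,n-1)\,C^{\top}$ has $(I,J)$-entry $\sum_K C_{I,K} z^N_K C_{J,K} = z^N_{I}[I{=}J] + z^N_\emptyset (-1)^{|I|-1+\,[I\ne\emptyset]}(-1)^{|J|-1+\,[J\ne\emptyset]}$ after accounting for whether $I$ or $J$ equals $\emptyset$ (the $K = \emptyset$ term always contributes, giving the rank-one $\pm z^N_\emptyset$ background, and the $K = I$ term contributes the extra $z^N_I$ on the diagonal). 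The main obstacle — really the only one — is getting the signs exactly right in the four off-diagonal cases, in particular keeping straight that $C_{I,\emptyset} = (-1)^{|I|-1}$ for $I \ne \emptyset$ but $C_{\emptyset,\emptyset} = 1$, which is why the cases $I = J = \emptyset$, $I = J \ne \emptyset$, and the mixed cases all look slightly different; once the sign convention $(-1)^{|I|-1} = (-1)^{|I|+1}$ is used consistently, everything collapses to the stated formula. No inequality or analytic estimate is needed here — this lemma is purely the algebraic identity underlying the pivoting step, and the Gershgorin analysis of $\M$ comes afterward.
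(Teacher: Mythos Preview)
Your approach is correct and essentially identical to the paper's: split $\M = C D C^{\top} + z_N\, C G C^{\top}$, show $C G C^{\top}$ collapses to a single $1$ in the $(\emptyset,\emptyset)$ slot (because $C$ annihilates all but the $\emptyset$-coordinate of the vector $G(N)$), and compute $(CDC^{\top})_{I,J} = \sum_{K\in\{\emptyset,I\}\cap\{\emptyset,J\}} C_{I,K}C_{J,K} z^N_K$ case by case. One small slip in your alternative presentation: it is $CG(N) = (-1)^{n-1} e_{\emptyset}$ (hence $CGC^{\top} = e_{\emptyset}e_{\emptyset}^{\top}$), not $Ce_{\emptyset} = e_{\emptyset}$ --- the first column of $C$ is not $e_{\emptyset}$ --- but this does not affect the main line of the argument.
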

%%%%%%%%%%%%%%%%%%%%%%%%%%%%%%%%%%%%%%%%%%%
\begin{proof}%[Proof of Lemma~\ref{matrix}]
%\begin{eqnarray*}
$\M=C(D+G z_N)C^{\top} =C D C^{\top}+C G C^{\top} z_N$. 
%\end{eqnarray*}
%
%
%Since for $A\in \{D, G\}$ we have $(CAC^{\top})_{I,J}=\sum_{U,W\subseteq \PS_{n-1}(N)}C_{I,U}C_{J,W}A_{U,W}$ we obtain the following.
\begin{itemize}
\item% \begin{eqnarray*}
$(CGC^{\top})_{I,J} = \underset{U,W\subseteq \PS_{n-1}(N)}{\sum} C_{I,U}C_{J,W}G_{U,W} = \underset{U\in \{\emptyset, I\}}{\sum} \underset{W\in \{\emptyset, J\}}{\sum} C_{I,U}C_{J,W}G_{U,W}$
%\end{eqnarray*}

If $I=J=\emptyset$ then $(CGC^{\top})_{\emptyset,\emptyset} = C_{\emptyset,\emptyset}C_{\emptyset,\emptyset}G_{\emptyset,\emptyset}=1$. Otherwise
{\begin{eqnarray*}
(CGC^{\top})_{I,J} &=& C_{I,\emptyset}C_{J,\emptyset}+ C_{I,\emptyset}C_{J,J} (-1)^{|J|} +
C_{I,I}C_{J,\emptyset} (-1)^{|I|}+ C_{I,I}C_{J,J} (-1)^{|I|+|J|}  \\
&=& (-1)^{|I|+|J|}+ (-1)^{|I|+|J|-1}+(-1)^{|I|+|J|-1}+(-1)^{|I|+|J|}=0
\end{eqnarray*}}
\item
$\left(C D C^{\top}\right)_{I,J}= \sum_{U\in \{\emptyset, I\}} C_{I,U} \sum_{W\in \{\emptyset, J\}} C_{J,W}(D)_{U,W}$.

%%%
If $I=J=\emptyset$ we have $\left(C D C^{\top}\right)_{\emptyset,\emptyset}=z_{\emptyset}^N$.
If $I=J\not =\emptyset$ we have $\left(C D C^{\top}\right)_{I,I}=\sum_{U\in \{\emptyset, I\}} C_{I,U} \sum_{W\in \{\emptyset, I\}} C_{I,W}(D)_{U,W} = z_{I}^N+z_{\emptyset}^N$. Otherwise, $I\not= J$,
we have
%\begin{eqnarray*}
$\left(C D C^{\top}\right)_{I,J}= \sum_{U\in \{\emptyset, I\}} C_{I,U} \sum_{W\in \{\emptyset, J\}} C_{J,W}(D)_{U,W} 
= C_{I,\emptyset} C_{J,\emptyset} z_{\emptyset}^N $
%\end{eqnarray*}
and the claim follows by the definition of matrix $C$.
\end{itemize}
\end{proof}

%%%%%%%%%%%%%%%%%%%%%%%%%%%%%%
\subsection{Proof of Theorem~\ref{th:knapgapbounds}\eqref{th:knapgapbounds(b)}}
%%%%%%%%%%%%%%%%%%%%%%%%%%%%%%
In the following we prove an unbounded integrality gap for $\LAS_{n-1}(GapLP')$.
With this aim we need to find a solution $y$ that satisfies $M_n(y)\succeq 0$ and $M_{n-1}(g*y)\succeq 0$.
%%%%%%%%%%%%%%%%%%%%%%%%%%%%
%%%%%%%%%%%%%%%%%%%%%%%%%%%%
%%%%%%%%%%%%%%%%%%%%%%%%%%%%
\begin{lemma}\label{th:feasiblesdpsolnew}
%Consider the following linear program $$LP:=\{x\in[0,1]^n:g(x)=\sum_{i=1}^n x_i - 1/P\geq 0\}$$ with $P\geq 1$. Let $y\in \mathbb{R}^{\PS(N)}$, $y^N_I = \sum_{H\subseteq [n]\setminus I} (-1)^{|H|}\cdot  y_{H\cup I}$ and  $z^N_I  := (|I|-\eps) y^N_I$.
By choosing $P=k\cdot 2^{2n+1}$, for any $k\geq 1$, the following solution
\begin{eqnarray}
%\delta_{\emptyset}&\geq& 0\\
y^N_{I}&=& \frac{2^{n}}{P|I|-1} \qquad \forall I \subseteq N \text{ and } I\not = \emptyset \label{knapcond2}\\
y_{\emptyset}^N &=& 1-\sum_{I \subseteq N, I\not = \emptyset} y^N_{I} \label{knapcond1}
\end{eqnarray}
 guarantee $M_n(y)\succeq 0$ and $M_{n-1}(z)\succeq 0$.
%  $y\in \LAS_{n-1}(GapLP')$.% denote the set of solutions that satisfy the Lasserre hierarchy at the $(n-1)$-th level.
\end{lemma}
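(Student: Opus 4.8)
The plan is to verify the two semidefiniteness conditions directly, exploiting the remark that $M_n(y) \succeq 0$ is automatic once $\{y^N_I : I \subseteq N\}$ is a probability distribution. So the first thing to check is that \eqref{knapcond2}--\eqref{knapcond1} indeed give a probability distribution: each $y^N_I = 2^n/(P|I|-1) \geq 0$ for $I \neq \emptyset$ (since $P \geq 1$ and $|I| \geq 1$), and we must show $\sum_{I \neq \emptyset} y^N_I \leq 1$ so that $y^N_\emptyset \geq 0$. Grouping subsets by cardinality, $\sum_{I \neq \emptyset} y^N_I = \sum_{i=1}^n \binom{n}{i} \frac{2^n}{Pi - 1} \leq \frac{2^n}{P-1}\sum_{i=1}^n \binom{n}{i} \le \frac{2^n \cdot 2^n}{P-1} = \frac{2^{2n}}{P-1}$, and with $P = k \cdot 2^{2n+1} \geq 2^{2n+1}$ this is at most $\frac{2^{2n}}{2^{2n+1}-1} < 1$. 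So $y^N_\emptyset \geq 0$ and $M_n(y) \succeq 0$ holds.

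The core of the argument is $M_{n-1}(z) \succeq 0$, and here I would use Lemma~\ref{matrix}: $M_{n-1}(z) \cong \M$ where $\M_{\emptyset,\emptyset} = z_N + z^N_\emptyset$, $\M_{I,I} = z^N_I + z^N_\emptyset$ for $I \neq \emptyset$, and the off-diagonal entries all equal $\pm z^N_\emptyset$. Recall $z^N_I = g(I) y^N_I$ with $g(I) = |I| - 1/P$; in particular $z^N_\emptyset = -y^N_\emptyset/P$ is the only negative-or-zero term, and for $I \neq \emptyset$, $z^N_I = (|I| - 1/P) y^N_I = (|I| - 1/P)\frac{2^n}{P|I|-1} = \frac{2^n}{P}$ — a constant independent of $I$! (This is exactly why the definition \eqref{knapcond2} was chosen.) Also $z_N = g(N) y^N_N = (n - 1/P)\frac{2^n}{Pn-1} = \frac{2^n}{P}$ by the same computation. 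So writing $\eps := y^N_\emptyset/P = -z^N_\emptyset \geq 0$ and $\delta := 2^n/P > 0$, the matrix $\M$ has every diagonal entry equal to $\delta - \eps$ (including the $\emptyset,\emptyset$ entry, since $z_N + z^N_\emptyset = \delta - \eps$) and every off-diagonal entry equal to $\pm\eps$. Then I would apply the Gershgorin disk theorem (Theorem~\ref{th:gershgorin}): row $I$ of $\M$ has center $\delta - \eps$ and radius $r_I = \sum_{J \neq I} |\M_{I,J}| = (|\PS_{n-1}(N)| - 1)\eps = (2^n - 2)\eps$. Thus $\M \succeq 0$ provided $\delta - \eps \geq (2^n - 2)\eps$, i.e. $\delta \geq (2^n-1)\eps$, i.e. $\frac{2^n}{P} \geq (2^n-1)\frac{y^N_\emptyset}{P}$, i.e. $y^N_\emptyset \leq \frac{2^n}{2^n-1}$, which holds since $y^N_\emptyset \leq 1$. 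Hence $M_{n-1}(z) \succeq 0$.

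The main obstacle — really the only substantive point — is getting the arithmetic of the Gershgorin bound to line up, which hinges on the "miracle" that $z^N_I = 2^n/P$ is constant over all nonempty $I$ (including $I = N$, as $z_N$) while $z^N_\emptyset$ is the lone negative entry and is small ($\Theta(1/P)$). Everything else is bookkeeping: confirming the probability-distribution bound, invoking Lemma~\ref{matrix} for the structure of $\M$, and noting that the off-diagonal signs are irrelevant because Gershgorin uses absolute values. I would present the computation of $z^N_I$, $z_N$, and $z^N_\emptyset$ first as a short calculation, then state the resulting $\M$ explicitly in the $\delta$/$\eps$ notation, then close with the one-line Gershgorin check. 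One should double-check the precise constant in the hypothesis $P = k \cdot 2^{2n+1}$: it is needed for the probability-distribution step (to force $y^N_\emptyset \ge 0$), not for the PSD step, and it is presumably also what makes the integrality gap at least $k$ in the companion part of the proof (the cost of this solution is $\sum_{j} y_j = \sum_{I} |I| y^N_I \le$ something like $n 2^n/(P-1) = O(1/(k 2^n))$, against integral optimum $1$) — but that gap computation belongs to the proof of Theorem~\ref{th:knapgapbounds}\eqref{th:knapgapbounds(b)} proper, not to this lemma, which only asserts feasibility.
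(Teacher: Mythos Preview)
Your proof is correct and follows essentially the same approach as the paper: both invoke the congruent matrix $\M$ from Lemma~\ref{matrix} and apply the Gershgorin disk theorem, using $z^N_\emptyset = -y^N_\emptyset/P$ for the off-diagonal entries and the observation that the chosen $y^N_I$ makes $z^N_I = 2^n/P$ for every nonempty $I$ (including $I=N$, giving $z_N$). Your write-up is in fact slightly more explicit than the paper's --- you spell out the probability-distribution check $\sum_{I\neq\emptyset} y^N_I \le 2^{2n}/(P-1) < 1$ and you treat the $(\emptyset,\emptyset)$ diagonal entry on the same footing as the others --- but the logical skeleton is identical.
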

\begin{proof}
By \eqref{vars}, $M_n(y)\succeq 0$ holds if and only if $y^N_{I}\geq 0$ for all $I \subseteq N$, which is guaranteed by the choice of $P$.
For $M_{n-1}(z)\succeq 0$, we use the Gershgorin's Theorem (see e.g. \cite{varga}, and Theorem~\ref{th:gershgorin}), with the congruent matrix $\M$, to obtain a set of sufficient conditions that guarantee the nonnegativity of the eigenvalues of matrix $M_{n-1}(z)$.
For each row $I$, the radius $r_I(\M)$ can be bounded as follows:
%\begin{eqnarray}
$r_I(\M) = \frac{(2^n-2)}{P} y^N_{\emptyset}  \leq \frac{(2^n-2)}{P}$, since $y^N_{\emptyset}\leq 1$ and $z^N_{\emptyset}=-\frac{1}{P} y^N_{\emptyset}$.
%\end{eqnarray}
It follows that if $z_I^N+z_{\emptyset}^N-r_I(\M)\geq 0$, for any $\emptyset \not= I\subseteq N$, then by Theorem~\ref{th:gershgorin}, the eigenvalues of $\M$ (and therefore also of $M_{n-1}(z)$) are nonnegative and the solution feasible. So it is sufficient to have $z_I^N \geq \frac{2^n}{P}$ for any $\emptyset \not= I\subseteq N$, where $z_I^N=y_I^N (|I|-1/P)$.
\end{proof}

\paragraph{The integrality gap.}
The value of the solution given by Lemma~\ref{th:feasiblesdpsolnew} is equal to:
\begin{eqnarray*}
\sum_{i=1}^n y_i &=& \sum_{I\subseteq N} y_I^N |I| = \sum_{I\subseteq N}\frac{2^{n}}{P|I|-1} |I| \leq \frac{2^{2n+1}}{P} 
\end{eqnarray*}
By choosing $P=k\cdot 2^{2n+1}$, for any $k\geq 1$, the integrality gap is at least $k$. %The claimed result follows by Lemma~\ref{th:gaprelation}.

\subsection{Proof of Theorem~\ref{th:knapgapbounds}\eqref{th:knapgapbounds(a)}}
%%%%%%%%%%%%%%%%%%%%%%%%%%%%%%
%Note that there is always a \emph{symmetric} optimal solution\footnote{If not symmetric then permute the variables names and observe that any convex combination of the original and the permuted solution is again a feasible solution of not larger value by the convexity of semidefinite programs. The convex combination of all possible permutations is symmetric.}  for $\LAS_{n-1}(GapLP')$. We compute a solution $y$ that is  \emph{symmetric}, i.e. for each $I$ and $J$ such that $|I|=|J|$ then $y_I=y_J$. We denote with $y_{[\ell]}$ the value of $y_I$ whenever $|I|=\ell$.
By contradiction, for some $\eps\in (0,1)$, let us assume that for $P\leq (2^n-2)\eps$ we obtain a solution $y\in\LAS_{n-1}(GapLP')$ whose value is $\sum_{i=1}^n y_i = 1-\eps$.
Consider the trace of the congruent matrix $\M$:
%\begin{eqnarray*}
$\Tr(\M) = \sum_{I\in \PS_{n-1}(N)} \M_{I,I}= \sum_{I\subseteq N} z_I^N + (2^{n}-2)z_{\emptyset}^N \nonumber
= z_{\emptyset}+(2^{n}-2)(-1/P)y_{\emptyset}^N$,
%\end{eqnarray*}
where we used the equality $\sum_{I\subseteq N} z_I^N =z_{\emptyset}$ and $z_{\emptyset} = \sum_{i=1}^n y_i -1/P$ (that is smaller than 1 by the assumptions).
Since $\M\succeq 0$, it follows that $\Tr(\M)\geq 0$ and therefore
%\begin{eqnarray}
$y_{\emptyset}^N\leq \frac{Pz_{\emptyset}}{2^{n}-2}$.
%\end{eqnarray}
%
Now, note that the objective function value can be bounded by
%\begin{eqnarray}
$\sum_{i\in N} y_i = \sum_{I\subseteq N} y_I^N|I| \geq 1-y_{\emptyset}^N$,
%\end{eqnarray}
where we used the equality $\sum_{I\subseteq N} y_I^N =1$.
It follows that
%\begin{eqnarray}
$\sum_{i\in N} y_i \geq 1-y_{\emptyset}^N\geq 1-\frac{Pz_{\emptyset}}{2^{n}-2}> 1-\frac{P}{2^{n}-2}$.
%\end{eqnarray}
By the assumption we have $\sum_{i\in N}y_i= 1-\eps$, and therefore
$1-\eps> 1-\frac{P}{2^{n}-2}$,
which implies
$P> (2^{n}-2)\eps$,
a contradiction.
%%%%%%%%%%%%%%%%%%%%%%%%%%%%%%%%%%%%%%%%%%%%%
%Note that at round $n-1$ we impose that $M_n(y)\succeq 0$ which is equivalent to assume that the variables values form a probability distribution. Observe that $\delta_I$ is equivalent to the probability that event $I$ happens and $[n]\setminus I$ doesn't. Consider any non zero solution with $y_I \geq y_J$ if $|I|\leq |J|$
%
%
%\paragraph{Acknowledgment.}
%I would like to thank Parinya Chalermsook, Bundit Laekhanukit and Ola Svensson for many helpful discussions.

%%%%%%%%%%%%%%%%%%%%%%%%%%%%%%%%%%%%%%%%%%%%%%%%%%%%%%%%%%%%%

\section{\textsc{Min-Knapsack} with Lifted Objective Function}\label{sect:knaplifted}
%%%%%%%%%%%%%%%%%%%%%%%%%%%%%%%%%%%%%%%%%%%%%%%%%%%%%%%%%%%%%%%
For \textsc{Min-Knapsack}, if we add the objective function as a constraint and impose that the value is at most one, then after one round of Lasserre the integrality gap vanishes. Indeed by adding the following constraint:
%\begin{equation}
$\sum_{j=1}^n x_j \leq T$, 
%\end{equation}
and setting $T=1$ we obtain that $y_I = 0$ for any $I\subseteq N$ with $|I|>1$. The latter implies that $M_1(z)\cong Diag(z^N,1)$ which implies that any feasible fractional solution can be obtained as a convex combination of feasible integral solution, so the integrality gap is one. We can also easily show (see~\cite{KarlinMN11}) that in general the integrality gap with the lifted objective function decreases rapidly, i.e. after $1/\eps$ rounds the integrality gap is $1+O(\eps)$.

A natural question is to understand if the ``trick'' of adding the objective function can avoid the weakness of the Lasserre method when ``easy'' problems are considered. In Section~\ref{sect:tardyjobs} we show that the weakness remains even after adding the objective function. Again the almost diagonal form of the Lasserre hierarchy will play a fundamental role in the analysis. More precisely, we prove an unbounded integrality gap for a special case of the min-sum scheduling problem (see \cite{BansalP10,CheungS11}) that admits an FPTAS. The same ideas\footnote{The simple gap instances that we consider make the two problems essentially the same.} can be used for proving integrality gaps for the \textsc{Min-Multiple-Knapsack} problem (the \textsc{Min-Knapsack} variant with multiple knapsacks). 

In this section, we use the latter problem for introducing the Gershgorin disk transformation technique as explained in Section~\ref{gershgorintech} in its general form. We show this for a small instance, but the reader should have no problem to generalize it for any size, and obtain an unbounded gap for the \textsc{Min-Multiple-Knapsack} problem, as well for the min-sum scheduling problem (see Section~\ref{sect:tardyjobs}). We decided to omit this proof in full details and give an alternative proof technique that uses the Lasserre hierarchy characterization given in Section~\ref{sect:sip}.

\begin{example}\label{ex:mkp}
Consider the instance of the \textsc{Min-Multiple-Knapsack} problem with 3 knapsacks, demand $\eps=1/16$, two different items for each knapsack, with unit profit and cost. If we impose that the objective function value is not larger than two, then the linear program relaxation of the considered instance is as follows.
\begin{subequations}
\label{LP:minmultknap}
\begin{align}
  (MKP) \hspace{1cm}& \sum_{i=1}^6 x_i\leq 2 ,\label{eq:kpcardconstr}\\
  & x_1+x_2 \geq \eps \label{kpdemand1}\\
  & x_3+x_4 \geq \eps \label{kpdemand2}\\
  & x_5+x_6 \geq \eps \label{kpdemand3}\\
  & 0\leq x_{i}\leq 1, & \text{for }\ i \in[6]
  \end{align}
\end{subequations}
Note that there is no integral solution that satisfies the above constraints. In the following we show that one level of Lasserre is not sufficient for ruling out this case, giving therefore an integrality gap of $3/2$ (the integrality gap here is defined as the ratio between the optimal integral value and the objective function upper bound). By increasing the number of items and knapsacks, it is not hard to generalize this for any level $t=O(\sqrt{n})$, where $n$ is the input size, and get an unbounded integrality gap (see Section \ref{sect:tardyjobs} for a different proof technique of this claim).

With this aim, consider $\LAS_1(MKP)$ in almost diagonal form (Lemma~\ref{th:lassalmostdiag}). Let $N$ denote the set of items. Consider the 
 solution $S=\{y_I^N=\alpha : I\subseteq N, |I|\leq 2\}$ that forms a uniform probability distribution (with $y_I^N=0$ for every $|I|>2$), where $\alpha =1/(7+{6 \choose 2})=1/22$. 

Note that solution $S$ immediately satisfies $M_n(y)\succeq 0$ by \eqref{vars}. Moreover, let $g(x) = 2- \sum_{i=1}^6 x_i\geq 0$ denote the objective function constraint~\eqref{eq:kpcardconstr} and let $g(I)$ denote the value of $g(x)$ when we set to 1 all the variables in $\{x_{i}:i\in I\}$ and to zero the remaining.
Solution $S$ satisfies $M_1^*(z^N)\succeq 0$, where $z=g*y$. Indeed, it is not difficult to see that $M_1^*(z^N)$ is the sum of PSD matrices (the only positive probabilities $y_I^N$ are given to solutions with at most 2 picked items, i.e. $y_I^N g(I)\geq 0$; therefore the entries of the diagonal matrix in $M_1^*(z^N)$ are all nonnegative, and every other matrix of rank one (that is PSD) is multiplied by a nonnegative number). 

By the previous arguments, the only interesting case is to check the claim for the moment matrices of the knapsack constraints. Consider $M_1^*(z^N)\succeq 0$, where $z= g_1 *y$ and $g_1(x)= x_1 + x_2-1/P\geq 0$ (by symmetry the same hold for the other constraints). 
Let $R(J)=G(J)G(J)^{\top}$ denote the rank one matrix defined in Lemma~\ref{th:lassalmostdiag}. 
$M_1^*(z^N)$ divided by $\alpha$ is equal to:

{\small
\begin{eqnarray*}\underbrace{
\left(
\begin{array}{ccccccc}
-\eps  & 0 & 0 & 0 & 0 & 0 & 0\\
0 & 1-\eps & 0 & 0 & 0 & 0 & 0\\
0 & 0 & 1-\eps & 0 & 0 & 0 & 0 \\
0 & 0 & 0 & -\eps & 0 & 0 & 0 \\
0 & 0 & 0 & 0 & -\eps & 0 & 0 \\
0 & 0 & 0 & 0 & 0 & -\eps & 0 \\
0 & 0 & 0 & 0 & 0 & 0 & -\eps  
\end{array}
\right)}_{D}
&+& \underbrace{(2-\eps)
R(\{1,2\})
+ (1-\eps)\left(\sum_{i=3}^6 R(\{1,i\})+R(\{2,i\})\right)}_{\PD}\\
&& \underbrace{-\eps
\left(\sum_{i=3}^5 \sum_{j=i+1}^{6} R(\{i,j\})\right)}_{\ND}
\end{eqnarray*}
}

Recall that the matrices $R(\{i,j\})$ multiplied by a positive number are PSD matrices (i.e., those with $i=1$ or $j=2$), these terms belong to set $\PD$ (see notation in Section~\ref{Sect:almostdiag}). Note that all the other components of $M_1^*(z^N)$ are not PSD. 
It is easy to check that the Gershgorin disks of $M_1^*(z^N)$ are not located entirely in the nonnegative plane.

The idea is to pivot on each negative entry $(I,I)$ of the $D$-matrix to transform a certain PSD component $R(J)$ with $I\subseteq J$ into its $I$-reduced form $R^I(J)$. The transformed $D+R(J)$ can be roughly seen as the result of shifting the negative $(I,I)$-th entry of $D$ by a positive number, at the cost of spreading the negative entry value $-\eps$, and therefore increasing the disks radii in $D$ by a factor of $\eps$ (but the effect of the latter is ``small'' when $\eps$ is ``small''). On the other side this transformation does not destroy the PSD-ness of the $\PD$ components (and do not change their rank). Moreover note that the contribution of the matrices in $\ND$ is to increase the radius by some factor of $-\eps$, that is again ``small'' for sufficiently small $\eps$. After pivoting on each negative entry of $D$, we obtain a congruent matrix with positive diagonal entries and ``small'' off-diagonal entries, (plus some additional PSD matrices). For $\eps=1/16$ this congruent matrix is PSD. We provide the complete example in Appendix \ref{sect:exmkp}.

It is not very difficult to generalize this approach to prove unbounded integrality gap, for the \textsc{Min-Multiple-Knapsack} and the Min-Sum of Tardy Jobs problem (see Section~\ref{sect:tardyjobs}), at level $t=\Omega(\sqrt{n})$: solution \eqref{eq:solsched} can be shown to be feasible by iteratively pivoting on the negative entries of $D$ and obtain a final matrix with off-diagonal entries that depends only on $\eps$ (plus some additional PSD matrices). The proof of the feasibility of solution  \eqref{eq:solsched} follows by choosing $\eps$ ($1/P$ in Section~\ref{sect:tardyjobs}) ``small'' enough.  
\end{example}

%%%%%%%%%%%%%%%%%%%%%%%%%%%%%%%%%%%%%%%%%%%%%%%%%%%%%%%%%%%%%%%%%%
\section{Lasserre Integrality Gap for the Min-Sum of Tardy Jobs}\label{sect:tardyjobs}
%%%%%%%%%%%%%%%%%%%%%%%%%%%%%%%%%%%%%%%%%%%%%%%%%%%%%%%%%%%%%%%%%%
We consider the single machine scheduling problem to minimize the (weighted) sum of tardy jobs: we are given a set of $N$ jobs, each with a weight $w_j>0$, processing time $p_j>0$, and due date $d_j>0$. We have to sequence jobs on a single machine such that no two jobs overlap. If job $j$ completes at time $C_j$ the tardiness $T_j$ of job $j$ is $\max\{C_j-d_j,0\}$. The scheduling objective is to minimize the total weighted tardiness, i.e., $\sum_j w_j T_j$.

%The ``natural'' LP relaxation of the addressed problem is a generalization of the min knapsack problem to a special case of the multiple knapsack problem. In this section we show that a simple instance makes the integrality gap of the Lasserre hierarchy arbitrarily large, even if we add the objective function as a constraint and after a linear number of rounds. ADD CONNECTION WITH SHMOYS PAPER AND BANSAL É.
\paragraph{The starting LP.}
Our result is based on the following ``natural'' linear program relaxation, that is a special case of the starting LPs used in \cite{BansalP10,CheungS11} (therefore the obtained unbounded integrality gap result also holds if we apply Lasserre to the LPs used in \cite{BansalP10,CheungS11}). For each job we introduce a variable $x_j\in[0,1]$ with the intended (integral) meaning that $x_j=1$ iff job $j$ completes after its deadline, so it is a tardy job. Then, for any time $t\in\{d_1,\ldots,d_N\}$, the sum of processing times of jobs with deadlines not larger than $t$, and that completes not later than $t$, must satisfy $\sum_{j:d_j\leq t} (1-x_j)p_j \leq t$. The latter constraint can be rewritten as a capacitated covering constraint, $\sum_{j:d_j\leq t} x_jp_j \geq D_t$, where $D_t:=\sum_{j:d_j\leq t} p_j -t$ represents the \emph{demand} at time $t$. The goal is to minimize $\sum_j w_j x_j$.

\paragraph{The Gap Instance.} We consider the following instance with $N=n^2$ jobs of unit costs. (By abusing notation we will use $N$ to denote both, the set and the total number of jobs.) Jobs are partitioned into $n$ blocks $N_1, N_2,\ldots, N_n$, each with $n$ jobs. For $i\in[n]$, jobs belonging to block $N_i$ have the same processing time $P^i$, and the same deadline $d_i=n\sum_{j=1}^i P^{j}-\sum_{j=1}^i P^{j-1}$.
So the demand at time $d_i$ is $D_i=\sum_{j=1}^i P^{j-1}$, for $P>0$. For any $t\geq 0$, let $T$ be the smallest value that makes $\LAS_t\left(LP(T)\right)$ feasible, where $LP(T)$ is defined as follows:

\begin{subequations}
\label{LP:tardy}
\begin{align}
  LP(T) \hspace{1cm}& \sum_{i =1}^n \sum_{j =1}^n x_{ij}\leq T,\label{eq:cardconstr}\\
  &\sum_{i =1}^k \sum_{j =1}^n x_{ij}\cdot P^i \geq D_k, &  \text{for }\ k\in[n]\label{demand}\\
  & 0\leq x_{ij}\leq 1, & \text{for }\ i,j\in[n]
  \end{align}
\end{subequations}
Note that, for any feasible \emph{integral} solution for $LP(T)$, the smallest $T$ (i.e. the optimal integral value) can be obtained by selecting one job for each block, so the smallest $T$ for integral solutions is $n$. The \emph{integrality gap} of $\LAS_t\left(LP(T)\right)$ (or $LP(T)$)  is defined as the ratio between $n$ (i.e. the optimal integral value) and the smallest $T$ that makes $\LAS_t\left(LP(T)\right)$ (or $LP(T)$) feasible.
%
%
%\paragraph{$LP(T)$ Integrality Gap.}
It is easy to check that $LP(T)$ has an integrality gap $P$ for any $P\geq 1$: For $T=n/P$, a feasible fractional solution for $LP(T)$ exists by setting $x_{ij}=\frac{1}{nP}$. %This solution is feasible and has value equal to $n/P$.

%%%%%%%%%%%%%%%%%%%%%%%%%%%%%%%%%%%
\subsection{Unbounded Integrality Gap for the Lasserre Hierarchy}
%%%%%%%%%%%%%%%%%%%%%%%%%%%%%%%%%%%
Consider any $k\geq 1$ and $n$ such that $t+1=n/k$ is a positive integer. We show that $\LAS_t(LP(t+1))$ has a feasible solution $y$ (for a suitably large $P$). So the integrality gap is at least $k$. (Note that  at the next level, namely $t+1$, $\LAS_{t+1}(LP(t+1))$ has no feasible solution for $k>1$, which gives a tight characterization of the integrality gap threshold phenomenon.)

%%%%%%%%%%%%%%%%%
\paragraph{Solution Structure.}
%%%%%%%%%%%%%%%%%
Set $y_I = 0$ for $I\subseteq N$ with $|I|>t+1$, which implies that $M_{t+1}(y)\cong Diag(y^N,t+1)$ by Lemma~\ref{th:lassalmostdiag}; the requirement $Diag(y^N,t+1)\succeq 0$ is therefore equivalent to
%\begin{equation}\label{eq:schedvarcond}
$y^N_I\geq 0$ for $I\in \PS_{t+1}(N)$.
%\end{equation}
%It follows that the considered solution $y$ is a probability distribution where, for any $I\subseteq N$, $y_I^N$ represents the probability that the variables in $\{x_i:i\in I\}$ are set to one and the remaining to zero.
%
By setting 
\begin{eqnarray}\label{eq:solsched}
y_I^N=
\left\{ \begin{array}{ll}
\alpha = 1/ |\PS_{t+1}(N)| & \forall I\in \PS_{t+1}(N)\\
0 & otherwise 
\end{array}
\right.
\end{eqnarray}
 we have $M_{t+1}(y)\succeq 0$.% and therefore $\alpha=1/2^{o(n^2)}$.

%%%%%%%%%%%%%%%%%%%%%%%%%%%%%%%
\paragraph{Constraint Moment Matrix.}
%%%%%%%%%%%%%%%%%%%%%%%%%%%%%%%
For any constraint~\eqref{eq:cardconstr} or \eqref{demand}, say $g(x)\geq 0$ ($\ell=1,\ldots,n$), we need to ensure that Condition~\eqref{mkpcond} (with $w=g*y$) is satisfied. By Solution~\eqref{eq:solsched}, Condition~\eqref{mkpcond} simplifies as follows.
%%%%%%%%%%%%%%%%%%%%%%%%%%
\begin{lemma}\label{th:schedconstraintcond}
For any constraint $g(x)\geq 0$ (from ~\eqref{eq:cardconstr} or \eqref{demand}), Solution \eqref{eq:solsched} satisfies $M_{t}(g*y)\succeq 0$ if and only if the following conditions hold.
\begin{equation}\label{mkpcond2}
\sum_{I\in \PS_t(N)} g(I) v^2_I + \sum_{J\subseteq N:|J|=t+1} \left(\sum_{\substack{I\in \PS_t(N)\\ I\subset J}} v_I (-1)^{|I|} \right)^2 g(J)  \geq 0 \quad \forall \mbox{ unit vector } v\in \mathbb{R}^{\PS_t(N)}
\end{equation}
where $g(H)$ denote the value of $g(x)$ when we set to 1 all the variables in $\{x_{i}:i\in H\}$ and to zero the remaining.
\end{lemma}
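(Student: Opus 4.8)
The plan is to apply Corollary~\ref{th:constraintcond} (the semi-infinite linear program characterization) directly to Solution~\eqref{eq:solsched} and simplify the resulting expression. By that corollary, for $w = g*y$ the condition $M_t(g*y) \succeq 0$ holds if and only if for every unit vector $v \in \mathbb{R}^{\PS_t(N)}$ we have
\begin{equation*}
\sum_{I:|I|\leq t} w^N_I v^2_I + \sum_{J\subseteq N: |J|\geq t+1} w^N_J \left( \sum_{i=0}^t (-1)^{t-i}{|J|-i-1 \choose t-i} \left(\sum_{I\subset J, |I|=i} v_I\right) \right)^2 \geq 0.
\end{equation*}
The first simplification uses the Remark after Lemma~\ref{th:lassalmostdiag}: every $w^N_I = z^N_I = g(I)\, y^N_I$, where $g(I)$ is the value of the polynomial $g$ at the integral point with support $I$. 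So $w^N_I = g(I)\alpha$ for $|I| \le t+1$ and $w^N_I = 0$ for $|I| > t+1$, by the choice of Solution~\eqref{eq:solsched}. Factoring out the common positive $\alpha$ leaves us only with sums over $|I| \le t$ and $|J| = t+1$.

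The second simplification concerns the binomial coefficient ${|J|-i-1 \choose t-i}$ when $|J| = t+1$: it equals ${t-i \choose t-i} = 1$ for every $i \le t$. Hence the inner bracket collapses to $\sum_{i=0}^t (-1)^{t-i} \sum_{I\subset J, |I|=i} v_I = (-1)^t \sum_{I \subset J, |I| \le t} v_I (-1)^{|I|}$, and squaring kills the overall sign $(-1)^t$. This yields exactly the claimed Condition~\eqref{mkpcond2}. Since every step is an ``if and only if'' — Corollary~\ref{th:constraintcond} is an equivalence, the substitution $w^N_I = g(I)y^N_I$ is an identity, and dividing by $\alpha > 0$ preserves the sign of the whole expression — we get the stated biconditional. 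I would also note explicitly that the higher-order moments $y_J$ with $|J| > t+1$ do not appear (consistent with the earlier remark that $M_t^*(w^N)$ does not depend on them), which is why truncating at $|I| \le t+1$ in Solution~\eqref{eq:solsched} is harmless.

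The only mildly delicate point — and the step I expect to need the most care — is the binomial-coefficient collapse and the index bookkeeping: one must check that for $|J| = t+1$ only subsets $I \subsetneq J$ with $|I| \le t$ contribute (which is automatic since $|I| \le t < t+1 = |J|$), that the convention ${n-1 \choose w} = 0$ for $w \ge n$ used in Lemma~\ref{th:lassalmostdiag} is not actually triggered here, and that the sign $(-1)^{|I|}$ inside the square in~\eqref{mkpcond2} is the residue of $(-1)^{t-i}$ after pulling out $(-1)^t$ and squaring. Everything else is a routine substitution, so the proof is short; the subsequent (harder) work of actually verifying~\eqref{mkpcond2} for the two families of constraints~\eqref{eq:cardconstr} and~\eqref{demand} — which is where the choice of the block sizes, processing times $P^i$, and the largeness of $P$ enter — is deferred to the arguments following this lemma.
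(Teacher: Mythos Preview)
Your proposal is correct and follows exactly the route the paper intends: the sentence preceding the lemma (``By Solution~\eqref{eq:solsched}, Condition~\eqref{mkpcond} simplifies as follows'') is the whole proof sketch, and you have carried it out explicitly --- substitute $w^N_I=g(I)y^N_I$, use $y^N_I=\alpha$ for $|I|\le t+1$ and $0$ otherwise to kill all terms with $|J|>t+1$, collapse ${t-i\choose t-i}=1$, and absorb $(-1)^t$ into the square. One small remark: your final aside about ``truncating at $|I|\le t+1$ being harmless'' is slightly off-key --- Solution~\eqref{eq:solsched} \emph{is} a full specification of all $y^N_I$, so the vanishing of $w^N_J$ for $|J|>t+1$ is a direct substitution, not a truncation argument.
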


%%%%%%%%%%%%%%%%%%%%%%%%%%%%%%%
\paragraph{The Moment Matrix for Constraint~\eqref{eq:cardconstr}.}
%%%%%%%%%%%%%%%%%%%%%%%%%%%%%%%
Constraint~\eqref{eq:cardconstr} is as follows: $g(x)= n/k - \sum_i \sum_j x_{i,j}\geq 0$. Since $t+1=n/k$ then note that for any $J\in \PS_{t+1}(N)$, we have $g(J)\geq 0$ and Condition \eqref{mkpcond2} is trivially satisfied for any vector $v$. Therefore, by Lemma \ref{th:schedconstraintcond} we have $M_{t}(g*y)\succeq 0$.

%%%%%%%%%%%%%%%%%%%%%%%%%%%%%%%
\paragraph{The Moment Matrix for Covering Constraints.}
%%%%%%%%%%%%%%%%%%%%%%%%%%%%%%%

%%%%%%%%%%
\begin{lemma}
For any covering constraint~\eqref{demand} $g_{\ell}(x)=\sum_{i =1}^{\ell} \sum_{j =1}^n x_{ij}\cdot P^i - D_{\ell} \geq 0$, with $\ell=1,\ldots,n$, Solution \eqref{eq:solsched} satisfies Condition \eqref{mkpcond2} for $P=n^{O(t^2)}$ .
\end{lemma}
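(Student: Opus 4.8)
The plan is to verify Condition~\eqref{mkpcond2} for each covering constraint $g_\ell(x)=\sum_{i=1}^\ell\sum_{j=1}^n x_{ij}P^i - D_\ell$ directly, exploiting the extreme separation of scales among the powers $P^1,\dots,P^\ell$ and the demands $D_\ell=\sum_{j=1}^\ell P^{j-1}$. First I would evaluate $g_\ell(H)$ for a set $H\in\PS_{t+1}(N)$: if $H$ contains a job from a block $N_i$ with $i\le\ell$, then $g_\ell(H)\ge P^\ell - D_\ell \ge P^\ell - \ell P^{\ell-1} > 0$ for $P$ large, so the top-scale term dominates and the value is large and positive; if $H$ avoids all blocks $N_1,\dots,N_\ell$, then $g_\ell(H)=-D_\ell<0$. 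So the negative contributions to~\eqref{mkpcond2} come only from the (few) index sets disjoint from the first $\ell$ blocks, and they are bounded in magnitude by $D_\ell\le \ell P^{\ell-1}$, whereas a single "good" set contributes on the order of $P^\ell$.

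\textbf{Key steps.} I would proceed as follows. (1) Split the first (linear, diagonal) sum $\sum_{I\in\PS_t(N)} g_\ell(I) v_I^2$ into the part over $I$ meeting $N_1\cup\cdots\cup N_\ell$ (nonnegative, in fact $\gtrsim P^\ell$ times $\sum$ of those $v_I^2$) and the part over $I$ disjoint from those blocks (each term $\ge -D_\ell\, v_I^2$). (2) For the quadratic sum $\sum_{|J|=t+1}\bigl(\sum_{I\subset J, I\in\PS_t(N)} v_I(-1)^{|I|}\bigr)^2 g_\ell(J)$, observe that whenever $J$ meets the first $\ell$ blocks the coefficient $g_\ell(J)$ is $\ge P^\ell - \ell P^{\ell-1}$ and the term is nonnegative; the only negative terms are those $J\subseteq N\setminus(N_1\cup\cdots\cup N_\ell)$, each at least $-D_\ell\bigl(\sum_{I\subset J}v_I(-1)^{|I|}\bigr)^2 \ge -D_\ell\,(t+2)\sum_{I\subset J}v_I^2$ by Cauchy--Schwarz. (3) Now bound the total negative mass: every $v_I^2$ with $I$ disjoint from the first $\ell$ blocks appears with total negative weight at most $D_\ell\bigl(1 + (t+2)\cdot\#\{J\supset I\}\bigr) = O(D_\ell\, t\, n)$, which is $P^{\ell-1}\cdot n^{O(1)}$. (4) Finally, charge this negative mass against a positive term. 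The crucial point is that $\sum_I v_I^2 = 1$ on the unit sphere, so \emph{some} coordinate $I^\star$ has $v_{I^\star}^2 \ge 1/|\PS_t(N)| = n^{-O(t)}$; if $I^\star$ meets the first $\ell$ blocks we already win from step (1); if not — and this is where I would need a small argument — I would instead use a positive $J$ coefficient: since $t+1 = n/k < n$, there is a block $N_i$ with $i\le \ell$ not fully determined by $I^\star$ (indeed $I^\star$ has size $\le t$ so it misses some block entirely when $\ell$ is not too small, and for small $\ell$ one handles it separately), and extending $I^\star$ by one job of that block gives a $J$ with $g_\ell(J)\ge P^\ell - \ell P^{\ell-1}$ contributing a positive term of order $P^\ell \cdot (v_{I^\star})^2 \ge P^\ell\, n^{-O(t)}$, with the only other $v_I$'s entering that square being "free" up to sign. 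Choosing $P = n^{O(t^2)}$ makes $P^\ell\, n^{-O(t)} \gg P^{\ell-1} n^{O(t)}$, i.e. $P \gg n^{O(t)}$, which is comfortably satisfied.

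\textbf{Main obstacle.} The delicate step is (4): the negative terms in~\eqref{mkpcond2} involve the \emph{alternating} sums $\sum_{I\subset J}v_I(-1)^{|I|}$, and a priori a clever choice of $v$ could try to make many small positive coordinates conspire so that no single positive term is large while the negative terms add up. The remedy is that (a) the alternating sums are also bounded above by $(t+2)\sum_{I\subset J}v_I^2$, so the total negative contribution is genuinely controlled by $\|v\|^2=1$ times a polynomial-in-$n$ factor times $D_\ell$, not by anything exponential; and (b) the positive side always has access to a term of size $\Theta(P^\ell)$ times the largest squared coordinate, which is at least $1/|\PS_t(N)|$. Matching these two — negative of order $P^{\ell-1}\,\mathrm{poly}(n)\cdot n^{O(t)}$ against positive of order $P^{\ell}\,n^{-O(t)}$ — forces exactly the bound $P\ge n^{\Theta(t)}$ per constraint, and taking the worst constraint $\ell=n$ gives $P = n^{\Theta(t^2)}$ (the $t^2$ coming from the product of $t$ scales each needing an $n^{\Theta(t)}$ gap), as claimed. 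I would present the estimate uniformly in $\ell$ and then simply set $P=n^{c t^2}$ for a suitable absolute constant $c$.
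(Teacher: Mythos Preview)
Your proposal has two substantive gaps.

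First, the partition in your opening analysis is incorrect. You claim that if $H$ contains a job from any block $N_i$ with $i\le\ell$, then $g_\ell(H)\ge P^\ell - D_\ell > 0$. This is false: a set $H$ consisting of a single job from $N_1$ gives $g_\ell(H)=P - D_\ell$, and for $\ell\ge 2$ one has $D_\ell\ge 1+P>P$, so $g_\ell(H)<0$. The correct split --- the one the paper uses --- is between sets meeting $N_\ell$ specifically (these do satisfy $g_\ell\ge P^\ell - D_\ell$) and sets disjoint from $N_\ell$ (for which one only has $g_\ell\ge -D_\ell$). So the negative contributions are not confined to sets disjoint from the first $\ell$ blocks; they include any set that misses $N_\ell$, even if it hits lower blocks.

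Second, and more seriously, step~(4) does not go through. Suppose $I^\star$ misses $N_\ell$ and you extend to some $J\supset I^\star$ with $|J|=t+1$ and $J\cap N_\ell\ne\emptyset$. The positive term you gain is $g_\ell(J)\, u_J^2$ where $u_J=\sum_{I\subset J,\,|I|\le t} v_I(-1)^{|I|}$. This alternating sum contains $v_{I^\star}$ together with \emph{all} other $v_I$ for $I\subset J$, and an adversarial unit vector $v$ may arrange those other coordinates so that $u_J$ is tiny or zero; there is no a priori reason $u_J^2$ should be comparable to $v_{I^\star}^2$. Your phrase ``the only other $v_I$'s entering that square being `free' up to sign'' is precisely the unjustified step: you do not control those coordinates, the adversary does.

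The paper handles this via a contradiction argument that pins down the coordinates one size at a time. Assuming some unit $v$ violates~\eqref{mkpcond2}, one first notes that every $v_I$ with $I\cap N_\ell\ne\emptyset$ must satisfy $v_I^2\le n^{O(t)}/P$, since otherwise that single diagonal term already dominates the total negative mass. Then one bounds the remaining $v_S$ (those with $S\cap N_\ell=\emptyset$) by induction on $|S|$: pick $J\subseteq N_\ell$ of size $t+1-|S|$; the set $S\cup J$ meets $N_\ell$, so its quadratic term must be small, and every subset $I\subset S\cup J$ with $I\ne S$ either meets $N_\ell$ (already bounded) or is a proper subset of $S$ (bounded by induction). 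This is exactly what lets one isolate $v_S$ inside $u_{S\cup J}$ without cancellation issues. The resulting recursion is roughly $b_j\lesssim N^{j-1}b_{j-1}$, giving $b_t\le N^{O(t^2)}/\sqrt{P}$; forcing $\sum v_I^2<1$ then requires $P\ge n^{\Theta(t^2)}$. So the $t^2$ comes from this $t$-step induction inside a single constraint, not from combining $t$ different scales as your last paragraph suggests.
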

%%%%%%%%%%
\begin{proof}
Consider the $\ell$-th covering constraint $g_{\ell}(x)\geq 0$ (see~\eqref{demand}) and the corresponding semi-infinite set of linear requirements \eqref{mkpcond2}. Then consider the following partition of $\PS_{t+1}(N)$.
\begin{eqnarray*}
A&=&\{ I\in\PS_{t+1}(N): I\cap N_{\ell} \not= \emptyset\}\\
B&=&\{I\in\PS_{t+1}(N): I\cap N_{\ell}=\emptyset\}
\end{eqnarray*} 

Note that for $S\in A$ we have $g_{\ell}(S)\geq \left(P^{\ell}-\sum_{j=1}^{\ell}P^{j-1}\right)= P^{\ell}\left(1-\frac{P^{\ell}-1}{P^{\ell}(P-1)}\right)\geq P^{\ell}\left(1-\frac{1}{P-1}\right)$. 
For $S\in B$ we have 
$g_{\ell}(S)\geq -\sum_{j=1}^{\ell}P^{j-1}\geq P^{\ell}\left(-\frac{1}{P-1}\right)$. 
Since $P>0$, by scaling $g_{\ell}(x)\geq 0$ (see~\eqref{demand}) by $P^{\ell}$, we will assume, w.l.o.g., that 
\begin{eqnarray*}
g_{\ell}(S)\geq 
\left\{
\begin{array}{ll}
1-\frac{1}{P-1} & S\in A\\
-\frac{1}{P-1} & S\in B
\end{array}
\right.
\end{eqnarray*}
Note that, since $v$ is a unit vector, we have $v_I^2\leq 1$, and for any $J\subseteq N:|J|=t+1$ the coefficient of $g_{\ell}(J)$ is bounded by $\left(\sum_{\substack{I\in \PS_t(N)\\ I\subset J}} v_I (-1)^{|I|} \right)^2\leq 2^{O(t)}$. For all unit vectors $v$ let $\beta$ denote the smallest possible total sum of the negative terms in \eqref{mkpcond2} (these are those related to $g_{\ell}(I)$ for $I\in B$). Note that $\beta\geq  -\frac{|B|2^{O(t)}}{P}= -\frac{n^{O(t)}}{P}$. 

In the following, we show that, for sufficiently large $P$, Solution \eqref{eq:solsched} satisfies \eqref{mkpcond2}. We prove this by contradiction.

Assume that it exists a unit vector $v$ such that~\eqref{mkpcond2} is not satisfied with Solution \eqref{eq:solsched}. 
We start observing that under the previous assumption the following holds  
\begin{equation}\label{eq:probcase}
\forall I\in A\cap \PS_t(N):v_I^2 = \frac{n^{O(t)}}{P}
\end{equation}
(otherwise we would have an $I\in A\cap \PS_t(N)$ such that $v_I^2 g_{\ell}(I)\geq -\beta$ contradicting the assumption that \eqref{mkpcond2} is not satisfied).
In the following we show that the previous bound on $v_I^2$ can be generalized to $v_I^2=\frac{n^{O(t^2)}}{P}$ for \emph{any} $I\in \PS_t(N)$ (under the contradiction assumption). But, by choosing $P$ such that  $v_I^2<1/n^{2t}$, for $I\in \PS_t(N)$, then we have $\sum_{I\in \PS_t(N)} v_I^2<1$, which contradicts that $v$ is a unit vector.

The claim follows by showing that $\forall I\in B\cap  \PS_t(N):v_I^2 \leq n^{O(t^2)}/P$. The proof is by induction on the size of $I$ for any $I \in B\cap  \PS_t(N)$. %We prove that $|v_I| \leq n^2 |v_J|$
%
%More precisely, by induction we prove that for any $I\in B$ with $|I|=i$ it is $v_I^2\leq 2^{O(n^2)}/P(1+N+\ldots+N^i)$.

Consider the empty set, since $\emptyset\in B\cap  \PS_t(N)$. We show that $v_{\emptyset}^2= n^{O(t)}/P$.
With this aim, consider any $J\subseteq N_{\ell}$ with $|J|=t+1$. Note that $J\in A$, $g_{\ell}(J)\geq t+1-1/(P-1)$ and its coefficient $u_J^2=\left(\sum_{\substack{I\in \PS_t(N)\\ I\subset J}} v_I (-1)^{|I|} \right)^2$ is the square of an algebraic sum of $v_{\emptyset}$ and other terms $v_I$, all with $I\in A\cap \PS_t(N)$ and therefore $v_I^2= \frac{n^{O(t)}}{P}$. Moreover, note that $u_J^2$ is smaller than $-\beta$ (otherwise \eqref{mkpcond2} is satisfied). Therefore, we have the following bound $b_0$ for $|v_{\emptyset}|$ (here, and later, we use the loose bound that $g_{\ell}(J)\geq 1/2$ for $J\subseteq N_{\ell}$, for $P\geq 3$)
\begin{equation}\label{eq:indfirst}
|v_{\emptyset}|\leq \sqrt{-2\beta} + \sum_{\emptyset\not =I\subset J} |v_I|\leq b_0= O\left(\sqrt{-\beta}+2^{O(t)} \frac{n^{O(t)}}{\sqrt{P}}\right)= \frac{n^{O(t)}}{\sqrt{P}}
\end{equation}
which implies that $v_{\emptyset}^2=n^{O(t)}/P$.

Similarly as before, consider any singleton set $\{i\}$ with $\{i\}\in B\cap  \PS_t(N)$ and
any $J\subseteq N_{\ell}$ with $|J|=t$. Note that $J\in A$, $g_{\ell}(J)\geq t-1/(P-1)$ and its coefficient $u_J^2=\left(\sum_{\substack{I\in \PS_t(N)\\ I\subset J\cup\{i\}}} v_I (-1)^{|I|} \right)^2$ is the square of an algebraic sum of $v_{\{i\}}$, $v_{\emptyset}$ and other terms $v_I$, with $I\subseteq J$ and therefore $v_I^2= \frac{n^{O(t)}}{P}$. Moreover, again note that $u_J^2$ is smaller than $-\beta$ (otherwise \eqref{mkpcond2} is satisfied). Therefore, for any singleton set $\{i\}\in B\cap  \PS_t(N)$, we have that 
\begin{equation*}
|v_{\{i\}}|\leq |v_{\emptyset}| + \sqrt{-2\beta} + \sum_{\emptyset\not =I\subset J} |v_I| \leq 2b_0
\end{equation*}

Generalizing by induction, consider any set $S\in B\cap  \PS_t(N)$ and
any $J\subseteq N_{\ell}$ with $|J|=t+1-|S|$. We claim that $|v_{|S|}|\leq b_{|S|}$ where 
\begin{equation}\label{eq:recbound}
b_{|S|}= \sum_{i=0}^{|S|-1}\left( N^{i} b_{i}\right)+ b_0
\end{equation}
The latter \eqref{eq:recbound} follows  by induction hypothesis and by observing that again $g_{\ell}(J\cup S)u_{J\cup S}\leq -\beta$ and therefore, 
\begin{equation*}
|v_{S}|\leq \sum_{i=0}^{|S|-1} \left(\sum_{\substack{I\in B\\ |I|=i}} |v_{I}|\right) + \sqrt{-2\beta} + \sum_{I\subset J} |v_I| 
\end{equation*}

From \eqref{eq:recbound}, for any $S\in B\cap  \PS_t(N)$, we have that $|v_{S}|$ is bounded by $b_{t}=(N^{t-1}+1)b_{t-1}=N^{O(t^2)}b_0=\frac{n^{O(t^2)}}{\sqrt{P}}$.

\end{proof}

\paragraph{Acknowledgments.} I'm grateful to Joseph Cheriyan and Zhihan Gao for pointing out a mistake in an early version of the paper. I thank Adam Kurpisz and Sam Lepp\"anen for carefully reading the paper and their suggestions. I'm indebted with Ola Svensson for several stimulating discussions.

%%%%%%%%%%%%%%%%%%%%%%%%%%%%%%%%%%%%%%%%
{\small
\bibliographystyle{abbrv}
\bibliography{ref}
%,../../bib/MonaldoPublications,../../bib/sdp-ordering}
}

\pagebreak
\appendix
\appendixpage

\section{Linear Algebra: useful facts}\label{linear algebra}
%%%%%%%%%%%%%%%%%%%%%
%\section{Preliminaries}

%\section{Positive Semidefinite Matrices}
\begin{definition}[PSD]
A symmetric $n\times n$ matrix $A$ is positive semidefinite (PSD or $A\succeq 0$) if and only if for every $v\in \mathbb{R}^n$ we have $v^{\top} A v\geq 0$.
\end{definition}
\longer{
\begin{example}\label{ex:vvT}
For any vector $y$ we have $y y^{\top}\succeq 0$ since $v^{\top} (y y^{\top}) v =\sum y_i y_j v_i v_j = (\sum v_i y_i)^2 \geq~0$.
\end{example}
}
\begin{comment}
Checking if a matrix A is PSD can be done in polynomial time by 2-sided Gaussian elimination, %(see Lovasz survey page 8 and ``Gaussian eliminationÉ'' survey page 87 for the number of operations required),
so by only using elementary operations. In case it is not PSD the Gaussian elimination gives a vector $v$ such that $v^{\top} A v<0$ as a result of elementary operations (no root extractions). This vector $v$ can be used to obtain a separating hyperplane in polynomial time (without computing the eigenvectors).
%(computing the separating hyperplane by computing the eigenvectors is the method suggested in Vazirani's book on page 259 Th. 26.4).
More in general, the finite-dimensional spectral theorem says that any symmetric matrix whose entries are real can be diagonalized by an orthogonal matrix. More explicitly: For every symmetric real matrix $A$ there exists a real orthogonal matrix $Q$ such that $D = Q^{\top} A Q$ is a diagonal matrix. Every symmetric matrix is thus, up to choice of an orthonormal basis, a diagonal matrix.
\end{comment}
%
%\subsection{Some useful lemmas}
%
%It is well known (see e.g. \cite{Strang}) that the following symmetric matrix operations preserve the PSD of matrices.
\begin{definition}\label{matrixOp}
\emph{Symmetric matrix operations} on a symmetric $n\times n$ matrix $A$ are:
\begin{enumerate}
\item Multiplying both the i-th row and i-th column by $\lambda \not= 0$.
\item Swapping the i-th and j-th column; and swapping the i-th and j-th row.
\item Adding $\lambda \times$ i-th column to j-th column and adding $\lambda \times$ i-th row to j-th row.
\end{enumerate}
We say that $A\cong B$ (read $A$ is \emph{congruent} to $B$) if and only if  $B$ is obtained from $A$ by zero or more symmetric matrix operations.
\end{definition}
The following two facts are well known (see e.g. \cite{Strang}).

\begin{lemma}\label{th:simop}
Let $A$, $B$ be symmetric. If $A \cong B$, then $A$ is PSD if and only if $B$ is PSD.
\end{lemma}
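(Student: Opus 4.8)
The plan is to observe that each of the three symmetric matrix operations in Definition~\ref{matrixOp} is realized by a two-sided multiplication $A \mapsto E^{\top} A E$ with $E$ an \emph{invertible} matrix, and then to use the fact that congruence by an invertible matrix preserves positive semidefiniteness in both directions.

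First I would exhibit, for each operation type, the corresponding elementary matrix $E$. Multiplying the $i$-th row and the $i$-th column by $\lambda \neq 0$ is the transformation $A \mapsto E^{\top} A E$ where $E$ is the identity except for a $\lambda$ in the $(i,i)$ entry (so $E^{\top}=E$). Swapping rows $i,j$ together with columns $i,j$ is $A \mapsto E^{\top} A E$ where $E$ is the transposition permutation matrix. Adding $\lambda$ times the $i$-th column to the $j$-th column, together with the symmetric row operation, is $A \mapsto E^{\top} A E$ with $E = I + \lambda\, e_i e_j^{\top}$ (here $e_i,e_j$ are standard basis vectors and $i \neq j$): indeed $AE$ replaces column $j$ by $A(e_j+\lambda e_i)$ and $E^{\top}A$ replaces row $j$ by $(e_j^{\top}+\lambda e_i^{\top})A$, as required. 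In all three cases $E$ is invertible (its determinant is $\lambda$, $-1$, and $1$ respectively). Since $A \cong B$ means that $B$ is obtained from $A$ by finitely many such operations, composing the corresponding matrices yields a single invertible $E$ with $B = E^{\top} A E$; note also that $B$ is automatically symmetric when $A$ is.

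Then the claimed equivalence is immediate. If $A \succeq 0$, then for every $v$ we have $v^{\top} B v = (Ev)^{\top} A (Ev) \geq 0$, so $B \succeq 0$. Conversely, if $B \succeq 0$, then for every $w$ set $v = E^{-1} w$ (possible since $E$ is invertible); then $w^{\top} A w = (Ev)^{\top} A (Ev) = v^{\top} B v \geq 0$, so $A \succeq 0$. There is no genuine obstacle here: the only point requiring care is the bookkeeping that each listed operation is exactly a congruence by the same elementary matrix and its transpose, and that invertibility of $E$ is what makes the argument reversible.
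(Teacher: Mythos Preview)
Your proof is correct. The paper itself does not give a proof of this lemma; it simply states it as well known (citing a linear algebra textbook). The PSD-preservation argument you give in your last paragraph is essentially the same as the one the paper sketches for the companion Lemma~\ref{th:matrix_transf}, and your explicit identification of the elementary matrices realizing each symmetric operation is exactly the missing bridge between Definition~\ref{matrixOp} and that lemma.
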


\begin{definition}\label{Def:congruence}
A \emph{congruent transformation} (or \emph{congruence transformation}) is a transformation of the form $A \rightarrow P^{\top} A P$, where $A$ and $P$ are square matrices, $P$ is invertible, and $P^{\top}$ denotes the transpose of P.
 \end{definition}

\begin{lemma}\label{th:matrix_transf}
$A\cong B$ if and only if $B= Z^{\top} A Z$ for some invertible $Z$.
\end{lemma}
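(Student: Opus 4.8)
The plan is to show that each of the three symmetric matrix operations of Definition~\ref{matrixOp} is exactly a congruence step $M\mapsto E^{\top}M E$ for an \emph{invertible elementary matrix} $E$, and conversely; both directions of the lemma then follow by composing such steps.

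First I would exhibit the elementary matrices. Scaling the $i$-th row and the $i$-th column by $\lambda\neq 0$ is $M\mapsto E^{\top}ME$ with $E=\mathrm{diag}(1,\dots,1,\lambda,1,\dots,1)$ (the $\lambda$ in position $i$), which is invertible since $\lambda\neq 0$ and symmetric. Swapping rows $i,j$ and columns $i,j$ is $M\mapsto P^{\top}MP$ with $P$ the transposition permutation matrix, invertible and with $P^{\top}=P^{-1}$. Adding $\lambda\times$ column $i$ to column $j$ together with $\lambda\times$ row $i$ to row $j$ is $M\mapsto E^{\top}ME$ with $E=I+\lambda e_{i}e_{j}^{\top}$ (for $i\neq j$), invertible with inverse $I-\lambda e_{i}e_{j}^{\top}$. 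Conversely, for each elementary matrix $E$ of these three kinds, the map $M\mapsto E^{\top}ME$ is precisely one of the three operations: left multiplication by $E^{\top}$ performs the row part and right multiplication by $E$ performs the matching column part, which for the third kind one checks directly. I would also record that $\cong$ is a symmetric relation, since every operation is reversible (scale by $1/\lambda$; swap again; add $-\lambda$ times).

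For the forward direction, if $B$ is obtained from $A$ by the operations associated, in that order, with invertible elementary matrices $E_{1},\dots,E_{k}$, then applying them one at a time gives $B=E_{k}^{\top}\cdots E_{1}^{\top}A\,E_{1}\cdots E_{k}=Z^{\top}AZ$ with $Z=E_{1}\cdots E_{k}$, which is invertible as a product of invertible matrices. For the backward direction, given $B=Z^{\top}AZ$ with $Z$ invertible, I would use the standard fact that every invertible matrix is a product of elementary matrices (Gaussian elimination reduces $Z$ to the identity by elementary row operations $F_{m}\cdots F_{1}Z=I$, hence $Z=F_{1}^{-1}\cdots F_{m}^{-1}$, each factor elementary). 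Writing $Z=E_{1}\cdots E_{k}$ with each $E_{i}$ elementary yields $B=E_{k}^{\top}\cdots E_{1}^{\top}A\,E_{1}\cdots E_{k}$; applying to $A$ the symmetric matrix operation associated with $E_{1}$, then with $E_{2}$, and so on up to $E_{k}$, produces $B$, so $A\cong B$.

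The argument is entirely routine; the only point that needs care is the bookkeeping in the third kind of operation --- pairing $\lambda\times$ row with $\lambda\times$ column so that $M\mapsto E^{\top}ME$ lands exactly on operation~3 rather than an asymmetric variant --- together with invoking the factorization of an invertible matrix into elementary matrices. I do not expect any genuine obstacle.
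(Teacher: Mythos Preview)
Your argument is correct and is the standard proof of the equivalence between the relation $\cong$ (defined via symmetric elementary operations) and matrix congruence $B=Z^{\top}AZ$ with $Z$ invertible: you identify each of the three operations with conjugation by an invertible elementary matrix, and use the factorization of any invertible matrix into elementary matrices for the converse. The bookkeeping you flag (matching the row operation with the same column operation so that the third type corresponds exactly to $M\mapsto E^{\top}ME$ with $E=I+\lambda e_ie_j^{\top}$) is indeed the only point requiring care, and you handle it.

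Note, however, that the paper's own ``Proof Sketch'' does \emph{not} prove this lemma at all: it instead shows that if $B=Z^{\top}AZ$ with $Z$ invertible then $A\succeq 0$ iff $B\succeq 0$, which is really the content of Lemma~\ref{th:simop} rather than Lemma~\ref{th:matrix_transf}. So your proof is not just a different route---it is the actual proof of the stated equivalence, whereas the paper's sketch establishes the (for its purposes more relevant) consequence that congruence preserves positive semidefiniteness. What your approach buys is a genuine proof of the lemma as stated; what the paper's sketch buys is brevity, since only the PSD-preservation is ever used downstream.
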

\begin{proof}[Proof Sketch.]
We prove that $A\succeq 0$ if and only if $B\succeq 0$.
\begin{itemize}
\item If $B\succeq 0$ then for any vector $w$ we have $w^{\top}Bw= w^{\top} Z^{\top} A Zw \geq 0$. Set $v=Zw$ and since $Z$ is invertible for any given $v$ we can define $w=Z^{-1}v$ and we have $A\succeq 0$.
\item If $A\succeq 0$ then for any vector $v$ we have $v^{\top} A v = v^{\top} (Z^{-1})^{\top} B Z^{-1} v\geq 0$ and by setting $w=Z^{-1}v$ we obtain that $B\succeq 0$.
\end{itemize}
\end{proof}

\longer{
%%%%%%%%%%%%%%%%%%%%%
\paragraph{Simple facts.}
%%%%%%%%%%%%%%%%%%%%%
\begin{enumerate}
\item Assume that $C\in \mathbb{R}^{a\times b}$ and $D\in \mathbb{R}^{b\times b}$ then  $$(CDC^{\top})_{I,J}=\sum_{U,W}C_{I,U}C_{J,W}D_{U,W}$$
\item
\begin{equation}
\sum_{H\subseteq L} (-1)^{|H|} = \left\{
\begin{array}{ll}
1 & \text{ if } L=\emptyset \\
0 & \text{ else }
\end{array}
\right.
\end{equation}
\end{enumerate}
}
\begin{definition}[Principal Submatrix]
An $m\times m$ matrix, P, is an $m\times m$ \emph{principal submatrix} of an $n\times n$ matrix, A, if P is obtained from A by removing any $n - m$ rows and the same $n - m$ columns.
\end{definition}

\begin{lemma}
A matrix $A$ is positive semidefinite if and only if all of its principal submatrices have nonnegative determinants.
\end{lemma}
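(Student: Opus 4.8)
The plan is to prove the two implications separately; the forward direction is routine, while the converse is the only one with real content. Throughout, $A$ is symmetric (the definition of PSD in this appendix applies only to symmetric matrices), so all eigenvalues of $A$ are real.

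\textbf{($\Rightarrow$)} Suppose $A\succeq 0$, and let $P=A_S$ be the principal submatrix of $A$ on an index set $S\subseteq N$. For any $w\in\mathbb{R}^{|S|}$, pad $w$ with zeros on the coordinates outside $S$ to get $v\in\mathbb{R}^n$; then $w^{\top}Pw=v^{\top}Av\ge 0$, so $P$ is PSD. A symmetric PSD matrix has all eigenvalues nonnegative, hence $\det P=\prod_i\mu_i(P)\ge 0$. Thus every principal minor of $A$ is nonnegative.

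\textbf{($\Leftarrow$)} Suppose every principal minor of $A$ is nonnegative. First I would record the identity
$$\det(\lambda I+A)=\sum_{k=0}^{n}\lambda^{\,n-k}\Big(\sum_{S\subseteq N,\ |S|=k}\det A_S\Big),$$
with the convention $\det A_{\emptyset}:=1$. This follows by writing the $j$-th column of $\lambda I+A$ as $\lambda e_j+a_j$, where $a_j$ is the $j$-th column of $A$, and expanding the determinant multilinearly in its columns: each of the $2^n$ resulting terms selects a subset $S$ of columns to take from $A$ and the rest from $\lambda I$; expanding that term along its standard-basis columns shows it equals $\lambda^{\,n-|S|}\det A_S$. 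Now by hypothesis each coefficient $\sum_{|S|=k}\det A_S$ is nonnegative, and the coefficient of $\lambda^n$ (the $k=0$ term) equals $1$; hence $\det(\lambda I+A)>0$ for every real $\lambda>0$. Therefore $\lambda I+A$ is nonsingular for all $\lambda>0$, i.e. $-\lambda$ is never an eigenvalue of $A$. Since $A$ is symmetric its eigenvalues are all real, so they are all $\ge 0$, and then by the spectral theorem $v^{\top}Av=\sum_i\mu_i(Q^{\top}v)_i^2\ge 0$ for all $v$, i.e. $A\succeq 0$.

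The single point requiring care is that the hypothesis must concern \emph{all} principal minors, not just the leading ones (positivity of leading principal minors characterizes positive definiteness, not semidefiniteness, and indeed the semidefinite case genuinely needs the full collection); the symmetric-function identity above uses exactly this and makes the distinction transparent. I expect the main (and only mild) obstacle to be presenting the multilinear expansion of $\det(\lambda I+A)$ cleanly without getting bogged down in index bookkeeping — everything else is immediate from the spectral theorem and the facts already recalled in this appendix.
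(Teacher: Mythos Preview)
Your proof is correct: the padding argument gives the forward direction, and for the converse the identity $\det(\lambda I+A)=\sum_{k=0}^{n}\lambda^{n-k}\sum_{|S|=k}\det A_S$ together with the assumption on all principal minors yields $\det(\lambda I+A)>0$ for every $\lambda>0$, hence no negative eigenvalues for the symmetric matrix $A$. The remark distinguishing \emph{all} principal minors from merely the leading ones is also on point.

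As for a comparison with the paper: there is nothing to compare. The paper simply records this lemma as a standard linear-algebra fact in its appendix and gives no proof whatsoever. Your argument is a clean, self-contained one (essentially the classical characteristic-coefficient proof found in, e.g., Horn--Johnson), and supplies what the paper omits.
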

\longer{
\begin{lemma}
Let $g(x)$ be a polynomial decomposed as sum of other polynomials, i.e. $g(x)=\sum_{i\in [m]} g^{(i)}(x)$. If $M_t(g^{(i)}*y)\succeq 0$ for all $i\in [m]$ then $M_t(g*y)\succeq 0$.\end{lemma}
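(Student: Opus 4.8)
The plan is to derive the statement from two elementary linearity observations together with the trivial fact that a finite sum of PSD matrices is PSD. The first observation is that the shift operator is linear in its polynomial argument: by Definition, $g*y = M_N(y)\,g$, where $g$ is identified with its coefficient vector in $\mathbb{R}^{\PS(N)}$, and under the standing assumption that each variable occurs with degree at most one this identification is additive, so $g=\sum_{i\in[m]}g^{(i)}$ as polynomials gives the same equality as vectors in $\mathbb{R}^{\PS(N)}$. The second observation is that the map $w\mapsto M_t(w)$ is linear, since for fixed $I,J\in\PS_t(N)$ the $(I,J)$-entry $w_{I\cup J}$ is a linear functional of $w\in\mathbb{R}^{\PS(N)}$.

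Carrying this out: from the first observation,
\[
g*y \;=\; M_N(y)\sum_{i\in[m]} g^{(i)} \;=\; \sum_{i\in[m]} M_N(y)\,g^{(i)} \;=\; \sum_{i\in[m]}\bigl(g^{(i)}*y\bigr),
\]
and then from the second observation,
\[
M_t(g*y) \;=\; M_t\Bigl(\sum_{i\in[m]} g^{(i)}*y\Bigr) \;=\; \sum_{i\in[m]} M_t\bigl(g^{(i)}*y\bigr).
\]
By hypothesis each $M_t(g^{(i)}*y)\succeq 0$, and for any $v\in\mathbb{R}^{\PS_t(N)}$ we have $v^{\top}\bigl(\sum_i M_t(g^{(i)}*y)\bigr)v=\sum_i v^{\top}M_t(g^{(i)}*y)\,v\ge 0$; hence the sum is PSD, i.e. $M_t(g*y)\succeq 0$.

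I expect no real obstacle here: the proof is a one-line linearity argument. The only point that merits a sentence of care is making precise that the passage from "$g=\sum_i g^{(i)}$ as polynomials" to "$g=\sum_i g^{(i)}$ as coefficient vectors" is legitimate, which is immediate since coefficients add coordinatewise. (One could alternatively invoke the congruent-decomposition viewpoint of Section~\ref{Sect:almostdiag}, but the direct argument above is shorter and needs nothing beyond the definitions of $*$ and $M_t$.)
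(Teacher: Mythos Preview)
Your proof is correct and follows essentially the same approach as the paper: both use the linearity of the shift operator $g\mapsto g*y$ and of $w\mapsto M_t(w)$ to write $M_t(g*y)=\sum_{i\in[m]} M_t(g^{(i)}*y)$, and then conclude from the fact that a sum of PSD matrices is PSD. Your write-up simply spells out these linearity steps a bit more explicitly than the paper does.
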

\begin{proof}
Note that $g*y=(\sum_{i\in [m]} g_i)*y=\sum_{i\in [m]} g_i*y$ and $M_t(\sum_{i\in [m]}g^{(i)}*y)=\sum_{i\in [m]}M_t(g^{(i)}*y)$. The claim follows by observing that $M_{t}(g*y)$ is equal to the sum of PSD matrices.
\end{proof}
}

%%%%%%%%%%%%%%%%%%%%%
\section{Example \ref{ex:mkp} (cont.)}\label{sect:exmkp}
%%%%%%%%%%%%%%%%%%%%
Matrix $M^*_1(z)$ (normalized by $\alpha$) is as follows.
 {\tiny
\begin{eqnarray*}
&&
\underbrace{\left(
\begin{array}{ccccccc}
-\eps  & 0 & 0 & 0 & 0 & 0 & 0\\
0 & 1-\eps & 0 & 0 & 0 & 0 & 0\\
0 & 0 & 1-\eps & 0 & 0 & 0 & 0 \\
0 & 0 & 0 & -\eps & 0 & 0 & 0 \\
0 & 0 & 0 & 0 & -\eps & 0 & 0 \\
0 & 0 & 0 & 0 & 0 & -\eps & 0 \\
0 & 0 & 0 & 0 & 0 & 0 & -\eps  
\end{array}
\right)}_{D}
+ (2-\eps)\underbrace{
\left(
\begin{array}{c}
-1  \\
1 \\
1 \\
0 \\
0 \\
0 \\
0   
\end{array}
\right)
\left(
\begin{array}{c}
-1  \\
1 \\
1 \\
0 \\
0 \\
0 \\
0   
\end{array}
\right)^{\top}
}_{R(\{1,2\})}\\
&+& (1-\eps)\left(\underbrace{
\left(
\begin{array}{c}
-1  \\
1 \\
0 \\
1 \\
0 \\
0 \\
0   
\end{array}
\right)
\left(
\begin{array}{c}
-1  \\
1 \\
0 \\
1 \\
0 \\
0 \\
0   
\end{array}
\right)^{\top}
 }_{R(\{1,3\})} 
+ \underbrace{
\left(
\begin{array}{c}
-1  \\
1 \\
0 \\
0 \\
1 \\
0 \\
0   
\end{array}
\right)
\left(
\begin{array}{c}
-1  \\
1 \\
0 \\
0 \\
1 \\
0 \\
0   
\end{array}
\right)^{\top}
 }_{R(\{1,4\})} 
+  \underbrace{
\left(
\begin{array}{c}
-1  \\
1 \\
0 \\
0 \\
0 \\
1 \\
0   
\end{array}
\right)
\left(
\begin{array}{c}
-1  \\
1 \\
0 \\
0 \\
0 \\
1 \\
0   
\end{array}
\right)^{\top}
 }_{R(\{1,5\})}
+  \underbrace{
\left(
\begin{array}{c}
-1  \\
1 \\
0 \\
0 \\
0 \\
0 \\
1   
\end{array}
\right)\left(
\begin{array}{c}
-1  \\
1 \\
0 \\
0 \\
0 \\
0 \\
1   
\end{array}
\right)^{\top} }_{R(\{1,6\})} \right)\\
%%%
%%%
%%%
&+& (1-\eps)\left( \underbrace{
\left(
\begin{array}{c}
-1  \\
0 \\
1 \\
1 \\
0 \\
0 \\
0   
\end{array}
\right)
\left(
\begin{array}{c}
-1  \\
0 \\
1 \\
1 \\
0 \\
0 \\
0   
\end{array}
\right)^{\top}
 }_{R(\{2,3\})}
+  \underbrace{
\left(
\begin{array}{c}
-1  \\
0 \\
1 \\
0 \\
1 \\
0 \\
0   
\end{array}
\right)
\left(
\begin{array}{c}
-1  \\
0 \\
1 \\
0 \\
1 \\
0 \\
0   
\end{array}
\right)^{\top}
 }_{R(\{2,4\})}
+  \underbrace{
\left(
\begin{array}{c}
-1  \\
0 \\
1 \\
0 \\
0 \\
1 \\
0   
\end{array}
\right)
\left(
\begin{array}{c}
-1  \\
0 \\
1 \\
0 \\
0 \\
1 \\
0   
\end{array}
\right)^{\top}
 }_{R(\{2,5\})}
+  \underbrace{
\left(
\begin{array}{c}
-1  \\
0 \\
1 \\
0 \\
0 \\
0 \\
1   
\end{array}
\right)\left(
\begin{array}{c}
-1  \\
0 \\
1 \\
0 \\
0 \\
0 \\
1   
\end{array}
\right)^{\top} }_{R(\{2,6\})} \right)\\
&-& \eps\left( \underbrace{
\left(
\begin{array}{c}
-1  \\
0 \\
0 \\
1 \\
1 \\
0 \\
0   
\end{array}
\right)
\left(
\begin{array}{c}
-1  \\
0 \\
0 \\
1 \\
1 \\
0 \\
0   
\end{array}
\right)^{\top}
 }_{R(\{3,4\})}
+\underbrace{
\left(
\begin{array}{c}
-1  \\
0 \\
0 \\
1 \\
0 \\
1 \\
0   
\end{array}
\right)
\left(
\begin{array}{c}
-1  \\
0 \\
0 \\
1 \\
0 \\
1 \\
0   
\end{array}
\right)^{\top}
 }_{R(\{3,5\})}
+\underbrace{
\left(
\begin{array}{c}
-1  \\
0 \\
0 \\
1 \\
0 \\
0 \\
1   
\end{array}
\right)\left(
\begin{array}{c}
-1  \\
0 \\
0 \\
1 \\
0 \\
0 \\
1   
\end{array}
\right)^{\top} }_{R(\{3,6\})}
%%%%
+\underbrace{
\left(
\begin{array}{c}
-1  \\
0 \\
0 \\
0 \\
1 \\
1 \\
0   
\end{array}
\right)
\left(
\begin{array}{c}
-1  \\
0 \\
0 \\
0 \\
1 \\
1 \\
0   
\end{array}
\right)^{\top}
 }_{R(\{4,5\})}
+\underbrace{
\left(
\begin{array}{c}
-1  \\
0 \\
0 \\
0 \\
1 \\
0 \\
1   
\end{array}
\right)\left(
\begin{array}{c}
-1  \\
0 \\
0 \\
0 \\
1 \\
0 \\
1   
\end{array}
\right)^{\top} }_{R(\{4,6\})}
+\underbrace{
\left(
\begin{array}{c}
-1  \\
0 \\
0 \\
0 \\
0 \\
1 \\
1   
\end{array}
\right)\left(
\begin{array}{c}
-1  \\
0 \\
0 \\
0 \\
0 \\
1 \\
1   
\end{array}
\right)^{\top} }_{R(\{5,6\})}
\right)
%%%%%%
%%%%%%
\end{eqnarray*}
}
Let us pivot on entry $(\emptyset,\emptyset)$ (negative in $D$) to reduce $R(\{1,2\})$ and add it to the $D$-matrix (for ease of notation we will call this sum again $D$). We obtain the following congruent matrix. We see that the new $D$ is roughly the old $D$ with  entry $(\emptyset,\emptyset)$ shifted by $(2-\eps)$ and the radius of the disks $\D_I$, with $I\subseteq \{1,2\}$, increased by some factor of the negative entry $-\eps$.

 {\tiny
\begin{eqnarray*}
&&\underbrace{
\left(
\begin{array}{ccccccc}
2  & -\eps & -\eps & 0 & 0 & 0 & 0\\
-\eps & 1-2\eps & -\eps & 0 & 0 & 0 & 0\\
-\eps & -\eps & 1-2\eps & 0 & 0 & 0 & 0 \\
0 & 0 & 0 & -\eps & 0 & 0 & 0 \\
0 & 0 & 0 & 0 & -\eps & 0 & 0 \\
0 & 0 & 0 & 0 & 0 & -\eps & 0 \\
0 & 0 & 0 & 0 & 0 & 0 & -\eps  
\end{array}
\right)}_{D}
\\
&+& (1-\eps)\left(\underbrace{
\left(
\begin{array}{c}
-1  \\
0 \\
-1 \\
1 \\
0 \\
0 \\
0   
\end{array}
\right)
\left(
\begin{array}{c}
-1  \\
0 \\
-1 \\
1 \\
0 \\
0 \\
0   
\end{array}
\right)^{\top}
 }_{R(\{1,3\})} 
+ \underbrace{
\left(
\begin{array}{c}
-1  \\
0 \\
-1 \\
0 \\
1 \\
0 \\
0   
\end{array}
\right)
\left(
\begin{array}{c}
-1  \\
0 \\
-1 \\
0 \\
1 \\
0 \\
0   
\end{array}
\right)^{\top}
 }_{R(\{1,4\})} 
+  \underbrace{
\left(
\begin{array}{c}
-1  \\
0 \\
-1 \\
0 \\
0 \\
1 \\
0   
\end{array}
\right)
\left(
\begin{array}{c}
-1  \\
0 \\
-1 \\
0 \\
0 \\
1 \\
0   
\end{array}
\right)^{\top}
 }_{R(\{1,5\})}
+  \underbrace{
\left(
\begin{array}{c}
-1  \\
0 \\
-1 \\
0 \\
0 \\
0 \\
1   
\end{array}
\right)\left(
\begin{array}{c}
-1  \\
0 \\
-1 \\
0 \\
0 \\
0 \\
1   
\end{array}
\right)^{\top} }_{R(\{1,6\})} \right)\\
%%%
%%%
%%%
&+& (1-\eps)\left( \underbrace{
\left(
\begin{array}{c}
-1  \\
-1 \\
0 \\
1 \\
0 \\
0 \\
0   
\end{array}
\right)
\left(
\begin{array}{c}
-1  \\
-1 \\
0 \\
1 \\
0 \\
0 \\
0   
\end{array}
\right)^{\top}
 }_{R(\{2,3\})}
+  \underbrace{
\left(
\begin{array}{c}
-1  \\
-1 \\
0 \\
0 \\
1 \\
0 \\
0   
\end{array}
\right)
\left(
\begin{array}{c}
-1  \\
-1 \\
0 \\
0 \\
1 \\
0 \\
0   
\end{array}
\right)^{\top}
 }_{R(\{2,4\})}
+  \underbrace{
\left(
\begin{array}{c}
-1  \\
-1 \\
0 \\
0 \\
0 \\
1 \\
0   
\end{array}
\right)
\left(
\begin{array}{c}
-1  \\
-1 \\
0 \\
0 \\
0 \\
1 \\
0   
\end{array}
\right)^{\top}
 }_{R(\{2,5\})}
+  \underbrace{
\left(
\begin{array}{c}
-1  \\
-1 \\
0 \\
0 \\
0 \\
0 \\
1   
\end{array}
\right)\left(
\begin{array}{c}
-1  \\
-1 \\
0 \\
0 \\
0 \\
0 \\
1   
\end{array}
\right)^{\top} }_{R(\{2,6\})} \right)\\
&-& \eps\left( \underbrace{
\left(
\begin{array}{c}
-1  \\
-1 \\
-1 \\
1 \\
1 \\
0 \\
0   
\end{array}
\right)
\left(
\begin{array}{c}
-1  \\
-1 \\
-1 \\
1 \\
1 \\
0 \\
0   
\end{array}
\right)^{\top}
 }_{R(\{3,4\})}
+\underbrace{
\left(
\begin{array}{c}
-1  \\
-1 \\
-1 \\
1 \\
0 \\
1 \\
0   
\end{array}
\right)
\left(
\begin{array}{c}
-1  \\
-1 \\
-1 \\
1 \\
0 \\
1 \\
0   
\end{array}
\right)^{\top}
 }_{R(\{3,5\})}
+\underbrace{
\left(
\begin{array}{c}
-1  \\
-1 \\
-1 \\
1 \\
0 \\
0 \\
1   
\end{array}
\right)\left(
\begin{array}{c}
-1  \\
-1 \\
-1 \\
1 \\
0 \\
0 \\
1   
\end{array}
\right)^{\top} }_{R(\{3,6\})}
%%%%
+\underbrace{
\left(
\begin{array}{c}
-1  \\
-1 \\
-1 \\
0 \\
1 \\
1 \\
0   
\end{array}
\right)
\left(
\begin{array}{c}
-1  \\
-1 \\
-1 \\
0 \\
1 \\
1 \\
0   
\end{array}
\right)^{\top}
 }_{R(\{4,5\})}
+\underbrace{
\left(
\begin{array}{c}
-1  \\
-1 \\
-1 \\
0 \\
1 \\
0 \\
1   
\end{array}
\right)\left(
\begin{array}{c}
-1  \\
-1 \\
-1 \\
0 \\
1 \\
0 \\
1   
\end{array}
\right)^{\top} }_{R(\{4,6\})}
+\underbrace{
\left(
\begin{array}{c}
-1  \\
-1 \\
-1 \\
0 \\
0 \\
1 \\
1   
\end{array}
\right)\left(
\begin{array}{c}
-1  \\
-1 \\
-1 \\
0 \\
0 \\
1 \\
1   
\end{array}
\right)^{\top} }_{R(\{5,6\})}
\right)
%%%%%%
%%%%%%
\end{eqnarray*}
}

Now, pivot on entry $(3,3)$ to reduce $R(\{1,3\})$, and add it to the $D$-matrix. We obtain the following congruent matrix.

 {\tiny
\begin{eqnarray*}
&&\underbrace{
\left(
\begin{array}{ccccccc}
2-\eps  & -\eps & -2\eps & -\eps & 0 & 0 & 0\\
-\eps & 1-2\eps & -\eps & 0 & 0 & 0 & 0\\
-2\eps & -\eps & 1-3\eps & -\eps & 0 & 0 & 0 \\
-\eps & 0 & -\eps & 1-2\eps & 0 & 0 & 0 \\
0 & 0 & 0 & 0 & -\eps & 0 & 0 \\
0 & 0 & 0 & 0 & 0 & -\eps & 0 \\
0 & 0 & 0 & 0 & 0 & 0 & -\eps  
\end{array}
\right)}_{D}
\\
&+& (1-\eps)\left( \underbrace{
\left(
\begin{array}{c}
-1  \\
0 \\
-1 \\
0 \\
1 \\
0 \\
0   
\end{array}
\right)
\left(
\begin{array}{c}
-1  \\
0 \\
-1 \\
0 \\
1 \\
0 \\
0   
\end{array}
\right)^{\top}
 }_{R(\{1,4\})} 
+  \underbrace{
\left(
\begin{array}{c}
-1  \\
0 \\
-1 \\
0 \\
0 \\
1 \\
0   
\end{array}
\right)
\left(
\begin{array}{c}
-1  \\
0 \\
-1 \\
0 \\
0 \\
1 \\
0   
\end{array}
\right)^{\top}
 }_{R(\{1,5\})}
+  \underbrace{
\left(
\begin{array}{c}
-1  \\
0 \\
-1 \\
0 \\
0 \\
0 \\
1   
\end{array}
\right)\left(
\begin{array}{c}
-1  \\
0 \\
-1 \\
0 \\
0 \\
0 \\
1   
\end{array}
\right)^{\top} }_{R(\{1,6\})} \right)\\
%%%
%%%
%%%
&+& (1-\eps)\left( \underbrace{
\left(
\begin{array}{c}
0  \\
-1 \\
1 \\
1 \\
0 \\
0 \\
0   
\end{array}
\right)
\left(
\begin{array}{c}
0  \\
-1 \\
1 \\
1 \\
0 \\
0 \\
0   
\end{array}
\right)^{\top}
 }_{R(\{2,3\})}
+  \underbrace{
\left(
\begin{array}{c}
-1  \\
-1 \\
0 \\
0 \\
1 \\
0 \\
0   
\end{array}
\right)
\left(
\begin{array}{c}
-1  \\
-1 \\
0 \\
0 \\
1 \\
0 \\
0   
\end{array}
\right)^{\top}
 }_{R(\{2,4\})}
+  \underbrace{
\left(
\begin{array}{c}
-1  \\
-1 \\
0 \\
0 \\
0 \\
1 \\
0   
\end{array}
\right)
\left(
\begin{array}{c}
-1  \\
-1 \\
0 \\
0 \\
0 \\
1 \\
0   
\end{array}
\right)^{\top}
 }_{R(\{2,5\})}
+  \underbrace{
\left(
\begin{array}{c}
-1  \\
-1 \\
0 \\
0 \\
0 \\
0 \\
1   
\end{array}
\right)\left(
\begin{array}{c}
-1  \\
-1 \\
0 \\
0 \\
0 \\
0 \\
1   
\end{array}
\right)^{\top} }_{R(\{2,6\})} \right)\\
&-& \eps\left( \underbrace{
\left(
\begin{array}{c}
0  \\
-1 \\
0 \\
1 \\
1 \\
0 \\
0   
\end{array}
\right)
\left(
\begin{array}{c}
0  \\
-1 \\
0 \\
1 \\
1 \\
0 \\
0   
\end{array}
\right)^{\top}
 }_{R(\{3,4\})}
+\underbrace{
\left(
\begin{array}{c}
0  \\
-1 \\
0 \\
1 \\
0 \\
1 \\
0   
\end{array}
\right)
\left(
\begin{array}{c}
0  \\
-1 \\
0 \\
1 \\
0 \\
1 \\
0   
\end{array}
\right)^{\top}
 }_{R(\{3,5\})}
+\underbrace{
\left(
\begin{array}{c}
0  \\
-1 \\
0 \\
1 \\
0 \\
0 \\
1   
\end{array}
\right)\left(
\begin{array}{c}
0  \\
-1 \\
0 \\
1 \\
0 \\
0 \\
1   
\end{array}
\right)^{\top} }_{R(\{3,6\})}
%%%%
+\underbrace{
\left(
\begin{array}{c}
-1  \\
-1 \\
-1 \\
0 \\
1 \\
1 \\
0   
\end{array}
\right)
\left(
\begin{array}{c}
-1  \\
-1 \\
-1 \\
0 \\
1 \\
1 \\
0   
\end{array}
\right)^{\top}
 }_{R(\{4,5\})}
+\underbrace{
\left(
\begin{array}{c}
-1  \\
-1 \\
-1 \\
0 \\
1 \\
0 \\
1   
\end{array}
\right)\left(
\begin{array}{c}
-1  \\
-1 \\
-1 \\
0 \\
1 \\
0 \\
1   
\end{array}
\right)^{\top} }_{R(\{4,6\})}
+\underbrace{
\left(
\begin{array}{c}
-1  \\
-1 \\
-1 \\
0 \\
0 \\
1 \\
1   
\end{array}
\right)\left(
\begin{array}{c}
-1  \\
-1 \\
-1 \\
0 \\
0 \\
1 \\
1   
\end{array}
\right)^{\top} }_{R(\{5,6\})}
\right)
%%%%%%
%%%%%%
\end{eqnarray*}
}

Now pivot on entry $(4,4)$ to reduce $R(\{2,4\})$, and add it to the $D$-matrix. We obtain the following congruent matrix.

 {\tiny
\begin{eqnarray*}
&&\underbrace{
\left(
\begin{array}{ccccccc}
2-2\eps  & -2\eps & -2\eps & -\eps & -\eps & 0 & 0\\
-2\eps & 1-3\eps & -\eps & 0 & -\eps & 0 & 0\\
-2\eps & -\eps & 1-3\eps & -\eps & 0 & 0 & 0 \\
-\eps & 0 & -\eps & 1-2\eps & 0 & 0 & 0 \\
-\eps & -\eps & 0 & 0 & 1-2\eps & 0 & 0 \\
0 & 0 & 0 & 0 & 0 & -\eps & 0 \\
0 & 0 & 0 & 0 & 0 & 0 & -\eps  
\end{array}
\right)}_{D}
\\
&+& (1-\eps)\left( \underbrace{
\left(
\begin{array}{c}
1  \\
1 \\
-1 \\
0 \\
1 \\
0 \\
0   
\end{array}
\right)
\left(
\begin{array}{c}
1  \\
1 \\
-1 \\
0 \\
1 \\
0 \\
0   
\end{array}
\right)^{\top}
 }_{R(\{1,4\})} 
+  \underbrace{
\left(
\begin{array}{c}
-1  \\
0 \\
-1 \\
0 \\
0 \\
1 \\
0   
\end{array}
\right)
\left(
\begin{array}{c}
-1  \\
0 \\
-1 \\
0 \\
0 \\
1 \\
0   
\end{array}
\right)^{\top}
 }_{R(\{1,5\})}
+  \underbrace{
\left(
\begin{array}{c}
-1  \\
0 \\
-1 \\
0 \\
0 \\
0 \\
1   
\end{array}
\right)\left(
\begin{array}{c}
-1  \\
0 \\
-1 \\
0 \\
0 \\
0 \\
1   
\end{array}
\right)^{\top} }_{R(\{1,6\})} \right)\\
%%%
%%%
%%%
&+& (1-\eps)\left( \underbrace{
\left(
\begin{array}{c}
0  \\
-1 \\
1 \\
1 \\
0 \\
0 \\
0   
\end{array}
\right)
\left(
\begin{array}{c}
0  \\
-1 \\
1 \\
1 \\
0 \\
0 \\
0   
\end{array}
\right)^{\top}
 }_{R(\{2,3\})}
+  \underbrace{
\left(
\begin{array}{c}
-1  \\
-1 \\
0 \\
0 \\
0 \\
1 \\
0   
\end{array}
\right)
\left(
\begin{array}{c}
-1  \\
-1 \\
0 \\
0 \\
0 \\
1 \\
0   
\end{array}
\right)^{\top}
 }_{R(\{2,5\})}
+  \underbrace{
\left(
\begin{array}{c}
-1  \\
-1 \\
0 \\
0 \\
0 \\
0 \\
1   
\end{array}
\right)\left(
\begin{array}{c}
-1  \\
-1 \\
0 \\
0 \\
0 \\
0 \\
1   
\end{array}
\right)^{\top} }_{R(\{2,6\})} \right)\\
&-& \eps\left( \underbrace{
\left(
\begin{array}{c}
1  \\
0 \\
0 \\
1 \\
1 \\
0 \\
0   
\end{array}
\right)
\left(
\begin{array}{c}
1  \\
0 \\
0 \\
1 \\
1 \\
0 \\
0   
\end{array}
\right)^{\top}
 }_{R(\{3,4\})}
+\underbrace{
\left(
\begin{array}{c}
0  \\
-1 \\
0 \\
1 \\
0 \\
1 \\
0   
\end{array}
\right)
\left(
\begin{array}{c}
0  \\
-1 \\
0 \\
1 \\
0 \\
1 \\
0   
\end{array}
\right)^{\top}
 }_{R(\{3,5\})}
+\underbrace{
\left(
\begin{array}{c}
0  \\
-1 \\
0 \\
1 \\
0 \\
0 \\
1   
\end{array}
\right)\left(
\begin{array}{c}
0  \\
-1 \\
0 \\
1 \\
0 \\
0 \\
1   
\end{array}
\right)^{\top} }_{R(\{3,6\})}
%%%%
+\underbrace{
\left(
\begin{array}{c}
0  \\
0 \\
-1 \\
0 \\
1 \\
1 \\
0   
\end{array}
\right)
\left(
\begin{array}{c}
0  \\
0 \\
-1 \\
0 \\
1 \\
1 \\
0   
\end{array}
\right)^{\top}
 }_{R(\{4,5\})}
+\underbrace{
\left(
\begin{array}{c}
0  \\
0 \\
-1 \\
0 \\
1 \\
0 \\
1   
\end{array}
\right)\left(
\begin{array}{c}
0  \\
0 \\
-1 \\
0 \\
1 \\
0 \\
1   
\end{array}
\right)^{\top} }_{R(\{4,6\})}
+\underbrace{
\left(
\begin{array}{c}
-1  \\
-1 \\
-1 \\
0 \\
0 \\
1 \\
1   
\end{array}
\right)\left(
\begin{array}{c}
-1  \\
-1 \\
-1 \\
0 \\
0 \\
1 \\
1   
\end{array}
\right)^{\top} }_{R(\{5,6\})}
\right)
%%%%%%
%%%%%%
\end{eqnarray*}
}

Now, pivot on entry $(5,5)$ to reduce $R(\{1,5\})$, and add it to the $D$-matrix. We obtain the following congruent matrix.

 {\tiny
\begin{eqnarray*}
&&\underbrace{
\left(
\begin{array}{ccccccc}
2-3\eps  & -2\eps & -3\eps & -\eps & -\eps & -\eps & 0\\
-2\eps & 1-3\eps & -\eps & 0 & -\eps & 0 & 0\\
-3\eps & -\eps & 1-4\eps & -\eps & 0 & -\eps & 0 \\
-\eps & 0 & -\eps & 1-2\eps & 0 & 0 & 0 \\
-\eps & -\eps & 0 & 0 & 1-2\eps & 0 & 0 \\
-\eps & 0 & -\eps & 0 & 0 & 1-2\eps & 0 \\
0 & 0 & 0 & 0 & 0 & 0 & -\eps  
\end{array}
\right)}_{D}
\\
&+& (1-\eps)\left( \underbrace{
\left(
\begin{array}{c}
1  \\
1 \\
-1 \\
0 \\
1 \\
0 \\
0   
\end{array}
\right)
\left(
\begin{array}{c}
1  \\
1 \\
-1 \\
0 \\
1 \\
0 \\
0   
\end{array}
\right)^{\top}
 }_{R(\{1,4\})} 
+   \underbrace{
\left(
\begin{array}{c}
-1  \\
0 \\
-1 \\
0 \\
0 \\
0 \\
1   
\end{array}
\right)\left(
\begin{array}{c}
-1  \\
0 \\
-1 \\
0 \\
0 \\
0 \\
1   
\end{array}
\right)^{\top} }_{R(\{1,6\})} \right)\\
%%%
%%%
%%%
&+& (1-\eps)\left( \underbrace{
\left(
\begin{array}{c}
0  \\
-1 \\
1 \\
1 \\
0 \\
0 \\
0   
\end{array}
\right)
\left(
\begin{array}{c}
0  \\
-1 \\
1 \\
1 \\
0 \\
0 \\
0   
\end{array}
\right)^{\top}
 }_{R(\{2,3\})}
+  \underbrace{
\left(
\begin{array}{c}
0  \\
-1 \\
1 \\
0 \\
0 \\
1 \\
0   
\end{array}
\right)
\left(
\begin{array}{c}
0  \\
-1 \\
1 \\
0 \\
0 \\
1 \\
0   
\end{array}
\right)^{\top}
 }_{R(\{2,5\})}
+  \underbrace{
\left(
\begin{array}{c}
-1  \\
-1 \\
0 \\
0 \\
0 \\
0 \\
1   
\end{array}
\right)\left(
\begin{array}{c}
-1  \\
-1 \\
0 \\
0 \\
0 \\
0 \\
1   
\end{array}
\right)^{\top} }_{R(\{2,6\})} \right)\\
&-& \eps\left( \underbrace{
\left(
\begin{array}{c}
1  \\
0 \\
0 \\
1 \\
1 \\
0 \\
0   
\end{array}
\right)
\left(
\begin{array}{c}
1  \\
0 \\
0 \\
1 \\
1 \\
0 \\
0   
\end{array}
\right)^{\top}
 }_{R(\{3,4\})}
+\underbrace{
\left(
\begin{array}{c}
1  \\
-1 \\
1 \\
1 \\
0 \\
1 \\
0   
\end{array}
\right)
\left(
\begin{array}{c}
1  \\
-1 \\
1 \\
1 \\
0 \\
1 \\
0   
\end{array}
\right)^{\top}
 }_{R(\{3,5\})}
+\underbrace{
\left(
\begin{array}{c}
0  \\
-1 \\
0 \\
1 \\
0 \\
0 \\
1   
\end{array}
\right)\left(
\begin{array}{c}
0  \\
-1 \\
0 \\
1 \\
0 \\
0 \\
1   
\end{array}
\right)^{\top} }_{R(\{3,6\})}
%%%%
+\underbrace{
\left(
\begin{array}{c}
1  \\
0 \\
0 \\
0 \\
1 \\
1 \\
0   
\end{array}
\right)
\left(
\begin{array}{c}
1  \\
0 \\
0 \\
0 \\
1 \\
1 \\
0   
\end{array}
\right)^{\top}
 }_{R(\{4,5\})}
+\underbrace{
\left(
\begin{array}{c}
0  \\
0 \\
-1 \\
0 \\
1 \\
0 \\
1   
\end{array}
\right)\left(
\begin{array}{c}
0  \\
0 \\
-1 \\
0 \\
1 \\
0 \\
1   
\end{array}
\right)^{\top} }_{R(\{4,6\})}
+\underbrace{
\left(
\begin{array}{c}
0  \\
-1 \\
0 \\
0 \\
0 \\
1 \\
1   
\end{array}
\right)\left(
\begin{array}{c}
0  \\
-1 \\
0 \\
0 \\
0 \\
1 \\
1   
\end{array}
\right)^{\top} }_{R(\{5,6\})}
\right)
%%%%%%
%%%%%%
\end{eqnarray*}
}

Pivot on the last negative entry $(6,6)$ of the $D$-matrix to reduce $R(\{2,6\})$, and add it to the $D$-matrix. We obtain the following congruent matrix.

 {\tiny
\begin{eqnarray*}
&&\underbrace{
\left(
\begin{array}{ccccccc}
2-4\eps  & -3\eps & -3\eps & -\eps & -\eps & -\eps & -\eps\\
-3\eps & 1-4\eps & -\eps & 0 & -\eps & 0 & -\eps\\
-3\eps & -\eps & 1-4\eps & -\eps & 0 & -\eps & 0\\
-\eps & 0 & -\eps & 1-2\eps & 0 & 0 & 0 \\
-\eps & -\eps & 0 & 0 & 1-2\eps & 0 & 0 \\
-\eps & 0 & -\eps & 0 & 0 & 1-2\eps & 0 \\
-\eps & -\eps & 0 & 0 & 0 & 0 & 1-2\eps  
\end{array}
\right)}_{D}
\\
&+& (1-\eps)\left( \underbrace{
\left(
\begin{array}{c}
1  \\
1 \\
-1 \\
0 \\
1 \\
0 \\
0   
\end{array}
\right)
\left(
\begin{array}{c}
1  \\
1 \\
-1 \\
0 \\
1 \\
0 \\
0   
\end{array}
\right)^{\top}
 }_{R(\{1,4\})} 
+   \underbrace{
\left(
\begin{array}{c}
0  \\
1 \\
-1 \\
0 \\
0 \\
0 \\
1   
\end{array}
\right)\left(
\begin{array}{c}
0  \\
1 \\
-1 \\
0 \\
0 \\
0 \\
1   
\end{array}
\right)^{\top} }_{R(\{1,6\})} \right)\\
%%%
%%%
%%%
&+& (1-\eps)\left( \underbrace{
\left(
\begin{array}{c}
0  \\
-1 \\
1 \\
1 \\
0 \\
0 \\
0   
\end{array}
\right)
\left(
\begin{array}{c}
0  \\
-1 \\
1 \\
1 \\
0 \\
0 \\
0   
\end{array}
\right)^{\top}
 }_{R(\{2,3\})}
+  \underbrace{
\left(
\begin{array}{c}
0  \\
-1 \\
1 \\
0 \\
0 \\
1 \\
0   
\end{array}
\right)
\left(
\begin{array}{c}
0  \\
-1 \\
1 \\
0 \\
0 \\
1 \\
0   
\end{array}
\right)^{\top}
 }_{R(\{2,5\})} \right)\\
&-& \eps\left( \underbrace{
\left(
\begin{array}{c}
1  \\
0 \\
0 \\
1 \\
1 \\
0 \\
0   
\end{array}
\right)
\left(
\begin{array}{c}
1  \\
0 \\
0 \\
1 \\
1 \\
0 \\
0   
\end{array}
\right)^{\top}
 }_{R(\{3,4\})}
+\underbrace{
\left(
\begin{array}{c}
1  \\
-1 \\
1 \\
1 \\
0 \\
1 \\
0   
\end{array}
\right)
\left(
\begin{array}{c}
1  \\
-1 \\
1 \\
1 \\
0 \\
1 \\
0   
\end{array}
\right)^{\top}
 }_{R(\{3,5\})}
+\underbrace{
\left(
\begin{array}{c}
1  \\
0 \\
0 \\
1 \\
0 \\
0 \\
1   
\end{array}
\right)\left(
\begin{array}{c}
1  \\
0 \\
0 \\
1 \\
0 \\
0 \\
1   
\end{array}
\right)^{\top} }_{R(\{3,6\})}
%%%%
+\underbrace{
\left(
\begin{array}{c}
1  \\
0 \\
0 \\
0 \\
1 \\
1 \\
0   
\end{array}
\right)
\left(
\begin{array}{c}
1  \\
0 \\
0 \\
0 \\
1 \\
1 \\
0   
\end{array}
\right)^{\top}
 }_{R(\{4,5\})}
+\underbrace{
\left(
\begin{array}{c}
1  \\
1 \\
-1 \\
0 \\
1 \\
0 \\
1   
\end{array}
\right)\left(
\begin{array}{c}
1  \\
1 \\
-1 \\
0 \\
1 \\
0 \\
1   
\end{array}
\right)^{\top} }_{R(\{4,6\})}
+\underbrace{
\left(
\begin{array}{c}
1  \\
0 \\
0 \\
0 \\
0 \\
1 \\
1   
\end{array}
\right)\left(
\begin{array}{c}
1  \\
0 \\
0 \\
0 \\
0 \\
1 \\
1   
\end{array}
\right)^{\top} }_{R(\{5,6\})}
\right)
%%%%%%
%%%%%%
\end{eqnarray*}
}
Now, the sum of the $\ND$ matrices is

\begin{eqnarray*}
-\eps
\left(
\begin{array}{ccccccc}
6  & 0 & 0 & 3 & 3 & 3 & 3\\
0 & 2 & -2 & -1 & 1 & -1 & 1\\
0 & -2 & 2 & 1 & -1 & 1 & -1\\
3 & -1 & 1 & 3 & 1 & 1 & 1\\ 
3 &  1 &  -1 &  1 &  3 &  1 &  1\\ 
3 &  -1 &  1 &  1 &  1 &  3 &  1\\  
3 &  1 &  -1 &  1 &  1 &  1 &  3
\end{array}
\right)
\end{eqnarray*}
and we add it to the $D$-matrix and obtain:

 {\tiny
\begin{eqnarray*}
&&\underbrace{
\left(
\begin{array}{ccccccc}
2-10\eps  & -3\eps & -3\eps & -4\eps & -4\eps & -4\eps & -4\eps\\
-3\eps & 1-6\eps & \eps & \eps & -2\eps & \eps & -2\eps\\
-3\eps & \eps & 1-6\eps & -2\eps & \eps & -2\eps & \eps\\
-4\eps & \eps & -2\eps & 1-5\eps & -\eps & -\eps & -\eps \\
-4\eps & -2\eps & \eps & -\eps & 1-5\eps & -\eps & -\eps \\
-4\eps & \eps & -2\eps & -\eps & -\eps & 1-5\eps & -\eps \\
-4\eps & -2\eps & \eps & -\eps & -\eps & -\eps & 1-5\eps  
\end{array}
\right)}_{D}
\\
&+& (1-\eps)\left( \underbrace{
\left(
\begin{array}{c}
1  \\
1 \\
-1 \\
0 \\
1 \\
0 \\
0   
\end{array}
\right)
\left(
\begin{array}{c}
1  \\
1 \\
-1 \\
0 \\
1 \\
0 \\
0   
\end{array}
\right)^{\top}
 }_{R(\{1,4\})} 
+   \underbrace{
\left(
\begin{array}{c}
0  \\
1 \\
-1 \\
0 \\
0 \\
0 \\
1   
\end{array}
\right)\left(
\begin{array}{c}
0  \\
1 \\
-1 \\
0 \\
0 \\
0 \\
1   
\end{array}
\right)^{\top} }_{R(\{1,6\})} \right)\\
%%%
%%%
%%%
&+& (1-\eps)\left( \underbrace{
\left(
\begin{array}{c}
0  \\
-1 \\
1 \\
1 \\
0 \\
0 \\
0   
\end{array}
\right)
\left(
\begin{array}{c}
0  \\
-1 \\
1 \\
1 \\
0 \\
0 \\
0   
\end{array}
\right)^{\top}
 }_{R(\{2,3\})}
+  \underbrace{
\left(
\begin{array}{c}
0  \\
-1 \\
1 \\
0 \\
0 \\
1 \\
0   
\end{array}
\right)
\left(
\begin{array}{c}
0  \\
-1 \\
1 \\
0 \\
0 \\
1 \\
0   
\end{array}
\right)^{\top}
 }_{R(\{2,5\})} \right)\\
\end{eqnarray*}
}

We see that $\eps=1/16$ locates Gershgorin disks of matrix $D$ in the nonnegative plane, whereas the other matrices are PSD. This shows that the suggested solution is feasible.

\end{document}